\newtheorem{theorem}{Theorem}
\newtheorem{lemma}{Lemma}
\newtheorem{definition}{Definition}
\newtheorem{property}{Property}
\newcommand{\opt}{\mathsf{OPT}}
\newcommand{\ct}{\mathsf{CT}}
\newcommand{\at}{\mathsf{AT}}
\newcommand{\id}{\mathrm{Id}}
\newcommand{\argmin}{\mathop{\rm arg\,min}\limits}
\newcommand{\hide}[1]{}
\def\squarebox#1{\hbox to #1{\hfill\vbox to #1{\vfill}}}
 \newcommand{\bs}{\bigskip} 
 \newcommand{\hs}[1]{\hspace*{ #1 mm}} 
\newcommand{\ignore}[1]{}
\begin{document}
%%%%%%%%%%%%%%%%%%
\pagestyle{plain}
\begin{center}
{\Large {\bf 
Almost Linear Time Algorithms for Minsum $k$-Sink Problems on Dynamic Flow Path Networks
}}
%\footnote{} 
\bs\\

{\sc Yuya Higashikawa}$^1$ \hspace{5mm} 
{\sc Naoki Katoh}$^1$ \hspace{5mm} 
{\sc Junichi Teruyama}$^1$ \hspace{5mm} 
{\sc Koji Watase}$^2$ \hspace{5mm} 

\

$^1${School of Social Information Science, University of Hyogo, Japan}; \\
{\tt \{higashikawa,naoki.katoh,junichi.teruyama\}@sis.u-hyogo.ac.jp}\\

$^2${School of Science and Technology, Kwansei Gakuin University, Japan}; \\
{\tt fnt43517@kwansei.ac.jp}

\end{center}
\bs

\begin{abstract}
We address the facility location problems on dynamic flow path networks. 
A \textit{dynamic flow path network} consists of an undirected path with positive edge lengths, 
positive edge capacities, and positive vertex weights. 
A path can be considered as a road, 
an edge length as the distance along the road and 
a vertex weight as the number of people at the site. 
An edge capacity limits the number of people that can enter the edge per unit time. 
In the dynamic flow network, 
given particular points on edges or vertices, called \textit{sinks},
all the people evacuate from the vertices to the sinks as quickly as possible.
The problem is to find the location of
sinks on a dynamic flow path network in such a way that the aggregate evacuation time (i.e., the sum of evacuation times for all the people) to sinks is minimized. 
We consider two models of the problem: the \textit{confluent flow model} and the \textit{non-confluent flow model}.
In the former model, the way of evacuation is restricted so that all the people at a vertex have
to evacuate to the same sink, and in the latter model, there is no such restriction.
In this paper, for both the models,
we develop algorithms which run in almost linear time
regardless of the number of sinks.
It should be stressed that
for the confluent flow model,
our algorithm improves upon the previous result by Benkoczi et al.~[Theoretical Computer Science, 2020], and
one for the non-confluent flow model is the first polynomial time algorithm.
\end{abstract}

%%%%%%%%%%%%%%%%%%%%%%%%%%%%%%%%%%%%%%%%
% Section 1
%%%%%%%%%%%%%%%%%%%%%%%%%%%%%%%%%%%%%%%%
\section{Introduction}\label{sec:intro}

Recently, many disasters, such as earthquakes, nuclear plant accidents, volcanic eruptions and flooding,
have struck in many parts of the world,
and it has been recognized that orderly evacuation planning is urgently needed.
A powerful tool for evacuation planning is the {\em dynamic flow model} introduced by Ford and Fulkerson~\cite{ford1958},
which represents movement of commodities over time in a network.
In this model,
we are given a graph 
with \textit{source} vertices and \textit{sink} vertices.
Each source vertex is associated with a positive weight, called a \textit{supply},
each sink vertex is associated with a positive weight, called a \textit{demand}, and each edge is associated with positive length and capacity.
An edge capacity limits the amount of supply that can enter the edge per unit time.
One variant of the dynamic flow problem is the {\em quickest transshipment problem}, of which
the objective is to send exactly the right amount of supply out of sources into sinks with satisfying the demand constraints
in the minimum overall time.
Hoppe and Tardos~\cite{hoppe2000} provided a polynomial time algorithm for this problem in the
case where the transit times are integral.
However, the complexity of their algorithm is very high.
Finding a practical polynomial time solution to this problem is still open.
A reader is referred to a recent survey by Skutella~\cite{skutella2009} on dynamic flows.

This paper discusses a related problem,
called the {\em $k$-sink problem}~\cite{belmonte2015polynomial,BenkocziBHKK18,BenkocziBHKK19,BhattacharyaGHK17,chen2016sink,chen2018minmax,higashikawa2014f,HigashikawaGK15,mamada2006},
of which 
the objective is to find a location of $k$ sinks in a given dynamic flow network so that all the supply is sent to the sinks as quickly as possible.
For the optimality of location, the following two criteria can be naturally considered: the minimization of {\em evacuation completion time} and {\em aggregate evacuation time}
(i.e., {\em average evacuation time}).
We call the $k$-sink problem that requires finding a location of $k$ sinks that minimizes the evacuation completion time (resp. the aggregate evacuation time) the {\em minmax} (resp. {\em minsum}) {\em $k$-sink problem}.
Several papers have studied the minmax $k$-sink problems on dynamic flow networks~\cite{belmonte2015polynomial,BhattacharyaGHK17,chen2016sink,chen2018minmax,higashikawa2014f,HigashikawaGK15,mamada2006}.
On the other hand, the minsum $k$-sink problems on dynamic flow networks have not been studied except for the case of path networks~\cite{BenkocziBHKK18,BenkocziBHKK19,HigashikawaGK15}.

Moreover, there are two models on the way of evacuation.
Under the {\it confluent flow model}, all the supply leaving a vertex must evacuate to the same sink through the same edges,
and under the {\it non-confluent flow model}, there is no such restriction.
To our knowledge, 
all the papers which deal with the $k$-sink problems~\cite{belmonte2015polynomial,BenkocziBHKK18,BenkocziBHKK19,BhattacharyaGHK17,chen2016sink,chen2018minmax,HigashikawaGK15} adopt the confluent flow model.

In order to model the evacuation behavior of people,
it might be natural to treat each supply as a discrete quantity
as in~\cite{hoppe2000,mamada2006}.
Nevertheless, almost all the previous papers on sink problems~\cite{belmonte2015polynomial,BhattacharyaGHK17,chen2016sink,chen2018minmax,higashikawa2014f,HigashikawaGK15} treat 
each supply as a continuous quantity
since 
it is easier for mathematically handling the problems and
the effect is small enough to ignore
when the number of people is large.
Throughout the paper, we also adopt the model with continuous supplies.

In this paper, we study the minsum $k$-sink problems on dynamic flow path networks under both the confluent flow model and the non-confluent flow model. 
A path network can model a coastal area surrounded by the sea and a hilly area, an airplane aisle, a hall way in a building, a street, a highway, etc., to name a few.
For the confluent flow model, the previous best results are
an $O(kn \log^3 n)$ time algorithm for the case with uniform edge capacity in \cite{BenkocziBHKK18}, 
and $O(kn \log^4 n)$ time algorithm for the case with general edge capacities in \cite{BenkocziBHKK19}, 
where $n$ is the number of vertices on path networks.
We develop 
algorithms which run in time 
$\min \{ O(kn \log^2 n),n 2^{O(\sqrt{\log k \log \log n})} \log^2 n \}$ for the case with uniform edge capacity, 
and in time $\min \{ O(kn \log^3 n),n 2^{O(\sqrt{\log k \log \log n})} \log^3 n \}$ for the case with general edge capacities,
respectively.
Thus, our algorithms improve upon the complexities by \cite{BenkocziBHKK18,BenkocziBHKK19} for any value of $k$.
Especially, for the non-confluent flow model, this paper provides the first polynomial time algorithms. 

Since the number of sinks $k$ is at most $n$, 
we confirm $2^{O(\sqrt{\log k \log \log n})} = n^{O( \sqrt{\log \log n/\log n})}= n^{o(1)}$,
which means that
our algorithms are the first ones which run in 
almost linear time 
(i.e., $n^{1+o(1)}$ time) 
regardless of $k$.
The reason why we could achieve almost linear time algorithms for the minsum $k$-sink problems
is that we newly discover a convex property from a novel point of view.
In all the previous papers on the $k$-sink problems, 
the evacuation completion time and the aggregate evacuation time (called $\ct$ and $\at$, respectively) are basically determined as functions in ``distance'': 
Let us consider the case with a 1-sink. 
The values $\ct(x)$ or $\at(x)$ may change as a sink location $x$ moves along edges in the network. 
In contrast, we introduce a new metric for $\ct$ and $\at$ as follows: 
assuming that a sink is fixed and all the supply in the network flows to the sink, for a positive real $z$, 
$\ct(z)$ is the time at which the first $z$ of supply completes its evacuation to the sink and then $\at(z)$ is the integral of $\ct(z)$, i.e., $\at(z)=\int_0^z \ct(t)dt$.
We can observe that $\at(z)$ is convex in $z$
since $\ct(z)$ is increasing in $z$.
Based on the convexity of $\at(z)$, we develop efficient algorithms.

The rest of the paper is organized as follows.
In Section~\ref{sec:preliminary}, we introduce the terms that are used throughout the paper and explain our models.
In Section~\ref{sec:algorithm}, we show that our problem can be reduced to the {\em minimum $k$-link path problem with links satisfying the concave Monge condition}.
This immediately implies by Schieber~\cite{Schieber98} that
the optimal solutions for our problems can be obtained by
solving $\min \{ O(kn),n 2^{O(\sqrt{\log k \log \log n})}\}$ subproblems
of computing the optimal aggregate evacuation time for subpaths, 
in each of which two sinks are located on its endpoints.
Section~\ref{sec:algorithm} subsequently shows 
an overview of the algorithm that solves the above subproblems.
In Section~\ref{sec:data_structure} and~\ref{app:gen_edge_capacity}, 
we introduce novel data structures and then give the algorithm 
which solves each of the above subproblems in $O({\rm poly}\log n)$ time. 
Section~\ref{sec:conclusion} concludes the paper.

%%%%%%%%%%%%%%%%%%%%%%%%%%%%%%%%%%%%%%%%
% Section 2
%%%%%%%%%%%%%%%%%%%%%%%%%%%%%%%%%%%%%%%%
\section{Preliminaries}\label{sec:preliminary}

\subsection{Notations}\label{sec:notation}
For two real values $a,b$ with $a < b$, 
let $[a,b]=\{t \in \mathbb{R} \mid a \le t \le b\}$, 
$[a,b)=\{t \in \mathbb{R} \mid a \le t < b\}$, 
$(a,b]=\{t \in \mathbb{R} \mid a < t \le b\}$, and 
$(a,b)=\{t \in \mathbb{R} \mid a < t < b\}$, 
where $\mathbb{R}$ is the set of real values. 
For two integers $i,j$ with $i \leq j$, 
let $[i..j]=\{h \in \mathbb{Z} \mid i \le h \le j\}$, 
where $\mathbb{Z}$ is the set of integers. 
A dynamic flow path network $\mathcal{P}$ is given as a 5-tuple $(P,{\mathbf w},{\mathbf c},{\mathbf l},\tau)$,
where $P$ is a path with vertex set $V = \{v_i \mid i \in [1..n]\}$ and
edge set $E = \{e_i = (v_i, v_{i+1}) \mid i \in [1..n-1]\}$,
${\mathbf w}$ is a vector $\langle w_1, \ldots, w_n \rangle$ of which a component $w_i$ is the {\it weight} of vertex $v_i$ representing the amount of supply (e.g., the number of evacuees, cars) located at $v_i$,
${\mathbf c}$ is a vector $\langle c_1, \ldots, c_{n-1} \rangle$ of which a component $c_i$ is the {\it capacity} of edge $e_i$ representing the upper bound on the flow rate through $e_i$ per unit time, 
${\mathbf l}$ is a vector $\langle \ell_1, \ldots, \ell_{n-1} \rangle$ of which a component $\ell_i$ is the {\it length} of edge $e_i$, and
$\tau$ is the time which unit supply takes to move unit distance on any edge. 

We say a point $p$ lies on path $P = (V, E)$, denoted by $p \in P$, 
if $p$ lies on a vertex $v \in V$ or an edge $e \in E$.
For two points $p, q \in P$, $p \prec q$ means that $p$ lies to the left side of $q$.
For two points $p, q \in P$, $p \preceq q$ means that $p \prec q$ or $p$ and $q$ lie on the same place.
Let us consider two integers $i,j \in [1..n]$ with $i<j$.
We denote by $P_{i,j}$ a {\it subpath} of $P$ from $v_i$ to $v_j$.
Let $L_{i,j}$ be the distance between $v_i$ and $v_j$, i.e., $L_{i,j} = \sum_{h=i}^{j-1}\ell_h$, and
let $C_{i,j}$ be the minimum capacity for all the edges between $v_i$ and $v_j$, i.e., $C_{i,j} = \min\{ c_{h} \mid h \in [i..j-1]\}$.
For $i \in [1..n]$, we denote the sum of weights from $v_1$ to $v_i$ by $W_i = \sum_{j=1}^{i}w_j$.
Note that, given a dynamic flow path network $\mathcal{P}$,
if we construct two lists of $W_i$ and $L_{1,i}$ for all $i \in [1..n]$ in $O(n)$ preprocessing time,
we can obtain $W_i$ for any $i \in [1..n]$ and $L_{i,j} = L_{1,j}-L_{1,i}$ for any $i,j \in [1..n]$ with $i<j$ in $O(1)$ time.
In addition,
$C_{i,j}$ for any $i,j \in [1..n]$ with $i<j$ can be obtained in $O(1)$ time 
with $O(n)$ preprocessing time, which is known as the {\it range minimum query}~\cite{Alstrup2002,Bender2000}.

A {\it $k$-sink} ${\bf x}$ is $k$-tuple $(x_1, \ldots, x_k)$ of points on $P$, where $x_i \prec x_j$ for $i<j$.
We define the function $\id$ for point $p \in P$ as follows:
the value $\id(p)$ is an integer such that $v_{\id(p)} \preceq p \prec v_{\id(p)+1}$ holds.
For a $k$-sink ${\bf x}$ for $\mathcal{P}$, a {\it divider} ${\bf d}$ is $(k-1)$-tuple $(d_1, \ldots, d_{k-1})$ of real values
such that $d_i < d_j$ for $i<j$ and $W_{\id(x_i)} \leq d_i  \leq W_{\id(x_{i+1})}$.
Given a $k$-sink ${\bf x}$ and a divider ${\bf d}$ for $\mathcal{P}$,
the portion $W_{\id(x_i)} - d_{i-1}$ supply that originates from the left side of $x_i$ flows to sink $x_i$, 
and the portion $d_{i} - W_{\id(x_i)}$ supply that originates from the right side of $x_i$ also flows to sink $x_i$.
For instance,
under the non-confluent flow model, 
if $W_{h-1} < d_i < W_{h}$ where $h \in [1..n]$, 
$d_i - W_{h-1}$ of $w_h$ supply at $v_h$ flows to sink $x_{i}$
and the rest of $W_{h} - d_i$ supply to do sink $x_{i+1}$.
% Given a $k$-sink ${\bf x}$ and a divider ${\bf d}$ for $\mathcal{P}$,
% the way of evacuation is determined depending on which of the {\it confluent flow model} or the {\it non-confluent flow model} we adopt.
% The following section gives more details.
The difference between the confluent flow model and the non-confluent flow model is that 
the confluent flow model requires that each value $d_i$ of a divider ${\bf d}$ must take a value in $\{W_1, \ldots, W_n\}$, 
but the non-confluent flow model does not.
For the notation,  we set $d_0 = 0$ and $d_k = W_n$. 

For a dynamic flow path network $\mathcal{P}$, a $k$-sink {\bf x} and a divider {\bf d},
the {\it evacuation completion time} $\ct(\mathcal{P}, {\bf x}, {\bf d})$ is 
the time at which all the supply completes the evacuation.
The {\it aggregate evacuation time} $\at(\mathcal{P}, {\bf x}, {\bf d})$ is that 
the sum of the evacuation completion time for all the supply.
Their explicit definitions are given later.
In this paper, our task is, given a dynamic flow path network $\mathcal{P}$, to find a $k$-sink {\bf x} and a divider {\bf d}
that minimize the aggregate evacuation time $\at(\mathcal{P}, {\bf x}, {\bf d})$
in each evacuation model.
% \subsection{Evacuation models and our tasks}
% In this paper, we consider two models of the behavior for evacuation,
% the {\it confluent flow model} studied in the most of previous results~\cite{BenkocziBHKK18},
% and the {\it non-confluent flow model} that is a general and natural model.
\hide{
For both the confluent flow model and the non-confluent flow model,
the behavior of the evacuation for a dynamic flow path network $\mathcal{P}$ with $k$-sink depends on
a $k$-sink ${\bf x}$ and a divider ${\bf d}$.
The difference between the confluent flow model and the non-confluent flow model is that 
the confluent flow model requires that each value $d_i$ of a divider ${\bf d}$ must take a value in $\{W_1, \ldots, W_n\}$, 
but the non-confluent flow model does not.
For the notation, we set $d_0 = 0$ and $d_k = W_n$. 

For a dynamic flow path network $\mathcal{P}$, a $k$-sink {\bf x} and a divider {\bf d},
the {\it evacuation completion time} $\ct(\mathcal{P}, {\bf x}, {\bf d})$ is 
the time at which all the supply completes the evacuation.
The {\it aggregate evacuation time} $\at(\mathcal{P}, {\bf x}, {\bf d})$ is that 
the sum of the evacuation completion time for all the supply.
Their explicit definitions are given later.
In this paper, our task is, given a dynamic flow path network $\mathcal{P}$, to find a $k$-sink {\bf x} and a divider {\bf d}
that minimize the aggregate evacuation time $\at(\mathcal{P}, {\bf x}, {\bf d})$
in each evacuation model.
}
\hide{
For the confluent flow model, Higashikawa et al.~\cite{HigashikawaGK15} and 
Benkoczi et al.~\cite{BenkocziBHKK19}
showed that
for the minsum $k$-sink problems
there exists an optimal $k$-sink such that all the $k$ sinks are at vertices. 
These facts also hold for the non-confluent flow model. 
Indeed, if a divider {\bf d} is fixed, then we have $k$ subproblems for a 1-sink 
and the optimal sink location for each subproblem is at a vertex. 
Thus, we have the following lemma.
\begin{lemma}[\cite{HigashikawaGK15}]\label{lem:sink_vertex}
For the minsum $k$-sink problem in a dynamic flow path network,
there exists an optimal $k$-sink such that all the $k$ sinks are at vertices
% an optimal $k$-sink is a set of vertices 
for both 
the confluent flow model and the non-confluent flow model.
\end{lemma}
Lemma~\ref{lem:sink_vertex} implies that 
it is enough to consider only the case that every sink is at some vertex. 
Thus, we suppose that a {\it $k$-sink} ${\bf x}$ is a $k$-tuple $(x_1, \ldots, x_k) \in V^{k}$, where $x_i \prec x_j$ for $i<j$.
}

\subsection{Aggregate Evacuation Time on a Path}
For the confluent flow model, 
it is shown in~\cite{BenkocziBHKK19,HigashikawaGK15}
that
for the minsum $k$-sink problems,
there exists an optimal $k$-sink such that all the $k$ sinks are at vertices. 
This fact also holds for the non-confluent flow model. 
Indeed, if a divider {\bf d} is fixed, then we have $k$ subproblems for a 1-sink 
and the optimal sink location for each subproblem is at a vertex. 
Thus, we have the following lemma.
\begin{lemma}[\cite{HigashikawaGK15}]\label{lem:sink_vertex}
For the minsum $k$-sink problem in a dynamic flow path network,
there exists an optimal $k$-sink such that all the $k$ sinks are at vertices
under the confluent/non-confluent flow model.
\end{lemma}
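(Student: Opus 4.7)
The plan is to dispose of the confluent case by appealing directly to \cite{HigashikawaGK15,BenkocziBHKK19}, and to reduce the non-confluent case to a family of independent 1-sink subproblems, as hinted in the excerpt.

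More concretely, for the non-confluent model I would start with an arbitrary optimal pair $({\bf x}^*, {\bf d}^*)$ and freeze the divider. With ${\bf d}^*$ fixed, the aggregate evacuation time decomposes as $\at(\mathcal{P}, {\bf x}^*, {\bf d}^*) = \sum_{i=1}^{k} \at_i(x_i^*)$, where $\at_i$ depends only on $x_i^*$ and on the supply allocated to sink $i$ by ${\bf d}^*$. That allocation corresponds to a contiguous subpath $P_i$ of $P$ whose two extreme vertices may carry fractional weights, since ${\bf d}^*$ can split the supply of a boundary vertex between consecutive sinks; importantly, supplies heading to different sinks traverse disjoint portions of the path, so the subproblems really are independent. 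Since $x_i^*$ minimizes $\at_i$, it is an optimal 1-sink for this sub-instance on $P_i$.

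Next I would observe that when $k=1$ the confluent and non-confluent models coincide (the divider is forced to be $d_0 = 0$, $d_1 = W_n$), so the 1-sink-at-vertex result of \cite{HigashikawaGK15} applies to each sub-instance on $P_i$ and yields a vertex whose aggregate evacuation time is no larger than $\at_i(x_i^*)$. Replacing each $x_i^*$ by such a vertex, while keeping ${\bf d}^*$ fixed, produces an optimal $k$-sink whose sinks all lie at vertices of $P$, as claimed.

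The main obstacle, though a mild one, is to verify that the 1-sink-at-vertex result of \cite{HigashikawaGK15} still applies when the sub-instance on $P_i$ has arbitrary positive real weights at its two boundary vertices, rather than the original weights in ${\bf w}$. This is essentially a bookkeeping issue rather than a conceptual one: the underlying argument analyzes how $\at$ varies as the sink slides along a single edge, and it is insensitive to whether vertex weights are integral or fractional, so the reduction above goes through without modification.
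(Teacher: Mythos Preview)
Your proposal is correct and follows essentially the same approach as the paper: cite \cite{HigashikawaGK15,BenkocziBHKK19} for the confluent case, and for the non-confluent case freeze an optimal divider to decompose the problem into $k$ independent $1$-sink subproblems, to which the $1$-sink-at-vertex result applies. The paper's argument is the single sentence preceding the lemma; your version simply expands on it and explicitly addresses the (harmless) fractional boundary weights, which the paper leaves implicit.
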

Lemma~\ref{lem:sink_vertex} implies that 
it is enough to consider only the case that every sink is at a vertex. 
Thus, we suppose ${\bf x}=(x_1, \ldots, x_k) \in V^{k}$, where $x_i \prec x_j$ for $i<j$.\\
%\subsubsection{Simple Example of 1-sink}
\noindent
{\bf A simple example with a 1-sink.}
In order to give explicit definitions for the evacuation completion time and the aggregate evacuation time, 
let us consider a simple example for a $1$-sink. 
We are given a dynamic flow path network $\mathcal{P}=(P,{\mathbf w},{\mathbf c},{\mathbf l},\tau)$ 
with $n$ vertices 
and set a unique sink on a vertex $v_i$, that is, ${\bf x} = (v_i)$ and ${\bf d} = ()$ which is the $0$-tuple. 
% Let us consider the general case for a 1-sink 
% that ${\bf x} = (v_i)$ and ${\bf d} = ()$ for $i \in [1..n]$.
In this case, all the supply on the left side of $v_i$ (i.e., at $v_1, \ldots, v_{i-1}$) will flow right to sink $v_i$, 
and all the supply on the right side of $v_i$ (i.e., at $v_{i+1}, \ldots, v_n$)  will flow left to sink $v_i$.
Note that in our models all the supply at $v_i$ immediately completes its evacuation at time $0$.

To deal with this case, we introduce some new notations.
Let the function $\theta^{i,+}(z)$ denote the time at which 
the first $z - W_{i}$ of supply on the right side of $v_i$
completes its evacuation to sink $v_i$ (where $\theta^{i,+}(z) = 0$ for $z \in [0,W_i]$).
% has arrived at the sink $v_i$, where $z \in [0, W_n]$.
Higashikawa~\cite{Higashikawa14} 
%and Golin et. al  ({\color{red} add some citations}) 
shows that the value $\theta^{i,+}(W_n)$, 
the evacuation completion time for all the supply on the right side of $v_i$,
is given by the following formula:
\begin{eqnarray}\label{eq:completion_time}
\theta^{i,+}(W_n)
&=& \max\left\{ \frac{W_{n} - W_{j-1} }{ C_{i,j} }+ \tau \cdot L_{i,j} \mid j \in [i+1..n] \right\}.
\end{eqnarray}
Recall that $C_{i,j} = \min \{ c_h \mid h \in [i..j-1] \}$.
We can generalize formula~\eqref{eq:completion_time} to the case with any $z \in [0,W_n]$ as follows:
% By the observation of equations~{(\ref{eq:theta_z}) and (\ref{eq:sub_theta_z})}, we have 
\begin{eqnarray}\label{eq:function_right}
\theta^{i, +}(z) = \max\{ \theta^{i, +, j}(z) \mid j \in [i+1..n]\}, 
\end{eqnarray}
%the function $\theta^{i, +}(z)$ is the upper envelope of $\{ \theta^{i, +, j}(z) \mid j \in [i+1..n]\}$, 
where $\theta^{i, +, j}(z)$ for $j \in [i+1..n]$ is defined as 
\begin{eqnarray}\label{eq:subfunction_right}
\theta^{i, +, j}(z) = 
\left\{
\begin{array}{ll}
0 & \text{if } z \leq W_{j-1}, \\
\frac{z - W_{j-1}}{C_{i,j}} + \tau\cdot L_{i,j} & \text{if } z > W_{j-1}.
\end{array}
\right.
\end{eqnarray}
Similarly, let $\theta^{i, -}(z)$ denote the time at which 
the first $W_{i-1} - z$ of supply on the left side of $v_i$ 
completes its evacuation to sink $v_i$ (where $\theta^{i,-}(z) = 0$ for $z \in [W_{i-1},W_n]$).
% has arrived at the sink $v_i$.
Then, 
\begin{eqnarray}\label{eq:function_left}
\theta^{i, -}(z) = \max\{ \theta^{i, -, j}(z) \mid j \in [1..i-1]\}, 
\end{eqnarray}
%$\theta^{i, -}(z)$ is the upper envelope of $\{ \theta^{i, -, j}(z) \mid j \in [1..i-1]\}$, 
where $\theta^{i, -, j}(z)$ is defined as
\begin{eqnarray}\label{eq:subfunction_left}
\theta^{i, -, j}(z) = 
\left\{
\begin{array}{ll}
\frac{W_{j} - z}{C_{j,i}} + \tau\cdot L_{j,i} & \text{if } z < W_j, \\
0 & \text{if } z \geq W_j.
\end{array}
\right.
\end{eqnarray}
%where $C^{j, i} = \min_{j \leq h \leq i-1} c_h$.

%Using the same argument that we used when we derived equation~{(\ref{eq:aggregate_time})}, 
%the aggregate evacuation time for the supply on the right side of $v_i$ is
%$\int_{W_i}^{W_n}\theta^{i, +}(z)dz = \int_{0}^{W_n}\theta^{i, +}(z)dz$
%and for the supply on the left side of $v_i$ is
%$\int_{0}^{W_i}\theta^{i, -}(z)dz = \int_{0}^{W_n}\theta^{i, -}(z)dz$.
The aggregate evacuation times for the supply on the right side 
and the left side of $v_i$ are 
\begin{eqnarray}\label{eq:at_i_sub}
\int_{W_i}^{W_n}\theta^{i, +}(z)dz = \int_{0}^{W_n}\theta^{i, +}(z)dz
\text{ and }
\int_{0}^{W_{i-1}}\theta^{i, -}(z)dz = \int_{0}^{W_n}\theta^{i, -}(z)dz, \notag
\end{eqnarray}
respectively.
Thus, the aggregate evacuation time $\at(\mathcal{P}, (v_i), ())$ is given as 
\begin{eqnarray*}\label{eq:at_i}
\at(\mathcal{P}, (v_i), ()) = \int_{0}^{W_n}\left\{\theta^{i, +}(z) + \theta^{i, -}(z) \right\}dz.
\end{eqnarray*}
%\subsubsection{Aggregate evacuation time on a Path with $k$ sinks}
\begin{figure}[tb]%[htbp]
  \begin{center}
    \includegraphics[width=7.8cm,pagebox=cropbox,clip]{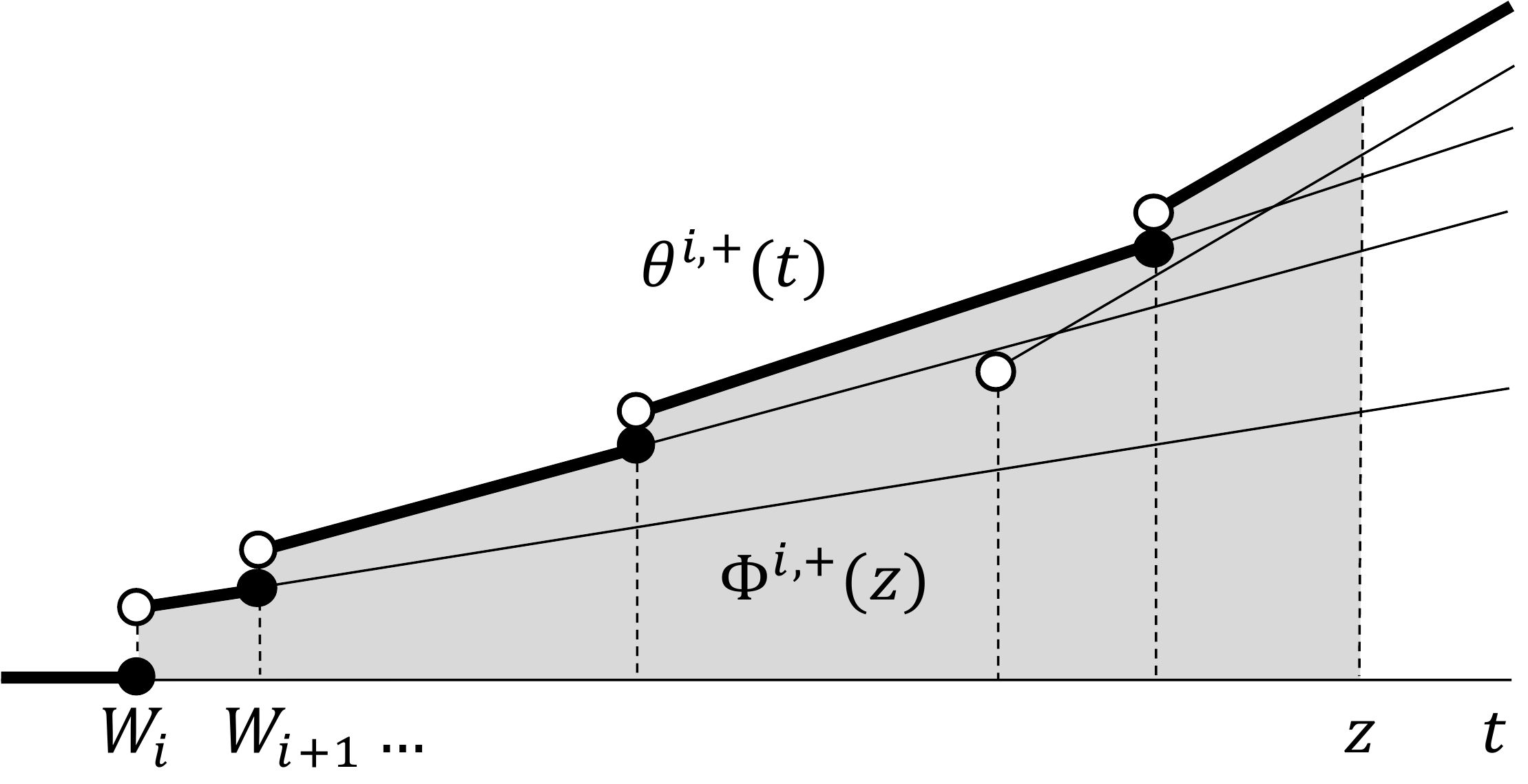}
    \caption{The thick half-open segments indicate function $\theta^{i,+}(t)$ and the gray area indicates $\Phi^{i,+}(z)$ for some $z > W_i$.}
    \label{fig:1}
  \end{center}
%\vspace{-0.5cm}
\end{figure}
\noindent
{\bf Aggregate evacuation time with a $k$-sink.}
Suppose that we are given a $k$-sink ${\bf x} = (x_1, \ldots, x_k) \in V^{k}$
and a divider ${\bf d} = (d_1, \ldots, d_{k-1})$. 
Recalling the definition of $\id(p)$ for $p \in P$, 
we have $x_i = v_{\id(x_i)}$ for all $i \in [1..k]$. 
In this situation, for each $i \in [1..k]$, 
the first $d_{i} - W_{\id(x_i)}$ of supply on the right side of $x_i$
and
the first $W_{\id(x_i)-1} - d_{i-1}$ of supply on the left side of $x_i$
move to sink $x_i$. 

By the argument of the previous section, 
the aggregate evacuation times for the supply on the right side 
and the left side of $x_i$ are 
\begin{eqnarray}
%\int_{W_{\id(x_i)}}^{d_i} \theta^{\id(x_i), +}(z)dz &=& \int_{0}^{d_i} \theta^{\id(x_i), +}(z)dz \quad \text{and} \notag \\
%\int_{d_{i-1}}^{W_{\id(x_i)-1}} \theta^{\id(x_{i}),-}(z)dz &=&\int_{d_{i-1}}^{W_n} \theta^{\id(x_{i}),-}(z)dz, \notag \\
\int_{W_{\id(x_i)}}^{d_i} \theta^{\id(x_i), +}(z)dz = \int_{0}^{d_i} \theta^{\id(x_i), +}(z)dz 
\text{ and } 
\int_{d_{i-1}}^{W_{\id(x_i)-1}} \theta^{\id(x_{i}),-}(z)dz = \int_{d_{i-1}}^{W_n} \theta^{\id(x_{i}),-}(z)dz, \notag 
\end{eqnarray}
respectively.
In order to give the general form for the above values,
let us denote by $\Phi^{i, +}(z)$ the aggregate evacuation time when
the first $z - W_{i}$ of supply on the right side of $v_i$ flows to sink $v_i$. 
Similarly, we denote by $\Phi^{i, -}(z)$ the aggregate evacuation time when
the first $W_{i-1} - z$ of supply on the left side of $v_i$ flows to sink $v_i$. 
Therefore, we have
\begin{eqnarray}\label{eq:at_gen}
\Phi^{i, +}(z) = \int_{0}^{z} \theta^{i, +}(t)dt \quad \text{and} \quad 
\Phi^{i, -}(z) = \int_{z}^{W_n} \theta^{i,-}(t)dt = \int_{W_n}^{z} -\theta^{i,-}(t)dt
\end{eqnarray}
% Note that both $\Phi^{i, +}(z)$ and $\Phi^{i, -}(z)$ are convex in $z$ since $\theta^{i, +}(z)$ and $-\theta^{i, -}(z)$ are monotonically non-decreasing in $z$.
(see Fig.~\ref{fig:1}).
Let us consider a subpath $P_{\id(x_i), \id(x_{i+1})}$ which is a subpath between sinks $x_{i}$ and $x_{i+1}$. 
The aggregate evacuation time for the supply on $P_{\id(x_i), \id(x_{i+1})}$ is given by 
\[
\int_{0}^{d_i} \theta^{\id(x_i), +}(z)dz + \int_{d_{i}}^{W_n} \theta^{\id(x_{i+1}),-}(z)dz
= \Phi^{\id(x_i), +}(d_i) + \Phi^{\id(x_{i+1}), -}(d_i).
\]
For $i, j \in [1..n]$ with $i < j$, 
let us define 
\begin{eqnarray}\label{eq:at_subpath}
\Phi^{i, j}(z) = \Phi^{i, +}(z) + \Phi^{j,-}(z) = 
\int_{0}^{z} \theta^{i, +}(t)dt + \int_{z}^{W_n} \theta^{j, -}(t)dt
\end{eqnarray}
for $z \in [W_{i}, W_{j-1}]$. 
Then, the aggregate evacuation time 
$\at(\mathcal{P}, {\bf x}, {\bf d})$
% for $\mathcal{P}$, ${\bf x}$ and ${\bf d}$ 
is given as 
\begin{eqnarray}\label{eq:at_ksink}
\at(\mathcal{P}, {\bf x}, {\bf d}) = 
\Phi^{\id(x_1),-}(0) + 
\sum_{i=1}^{k-1}\Phi^{\id(x_i), \id(x_{i+1})}(d_i) + 
\Phi^{\id(x_k),+}(W_n).
\end{eqnarray}

In the rest of this section, we show the important properties of $\Phi^{i, j}(z)$.
Let us first confirm that by equation~{(\ref{eq:at_gen})}, both $\Phi^{i, +}(z)$ and $\Phi^{j, -}(z)$ are convex in $z$ since $\theta^{i, +}(z)$ and $-\theta^{j, -}(z)$ are non-decreasing in $z$, 
therefore $\Phi^{i, j}(z)$ is convex in $z$.
On the condition of the minimizer for $\Phi^{i, j}(z)$, we have a more useful lemma.
\begin{lemma}\label{lem:convex}
For any $i,j \in [1..n]$ with $i < j$, 
there uniquely exists
\[
z^* \in \argmin_{z \in [W_i,W_{j-1}]}\max\{\theta^{i,+}(z), \theta^{j,-}(z)\}.
\]
Furthermore,
$\Phi^{i, j}(z)$ is minimized on $[W_i,W_{j-1}]$
when $z=z^*$.
\end{lemma}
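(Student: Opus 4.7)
The plan is to derive both claims by combining the strict monotonicity of $\theta^{i,+}$ and $\theta^{j,-}$ with the strict convexity of $\Phi^{i,j}$. The picture is: $\theta^{i,+}(z)$ is a non-negative, strictly increasing function that vanishes at $z = W_i$, while $\theta^{j,-}(z)$ is a non-negative, strictly decreasing function that vanishes at $z = W_{j-1}$, so these opposite monotonicities force a unique balance point $z^* \in [W_i, W_{j-1}]$ that simultaneously minimizes $\max\{\theta^{i,+}, \theta^{j,-}\}$ and $\Phi^{i,j}$.

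The first step is to verify strict monotonicity. From~(\ref{eq:subfunction_right}) the piece $\theta^{i,+, i+1}(z) = (z-W_i)/C_{i,i+1} + \tau L_{i,i+1}$ is strictly positive and strictly increasing for $z > W_i$, so in particular $\theta^{i,+}(z) > 0$ on $(W_i, W_n]$. To show that $\theta^{i,+}$ is strictly increasing there, I take $W_i < z_1 < z_2$ and let $j^*$ attain the max in~(\ref{eq:function_right}) at $z_1$. Since $\theta^{i,+, j^*}(z_1) = \theta^{i,+}(z_1) > 0$, this piece must already be in its linear regime $z_1 > W_{j^*-1}$, whence $\theta^{i,+, j^*}(z_2) > \theta^{i,+, j^*}(z_1)$; therefore $\theta^{i,+}(z_2) \geq \theta^{i,+, j^*}(z_2) > \theta^{i,+}(z_1)$. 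The symmetric argument shows $\theta^{j,-}$ is strictly decreasing on $[0, W_{j-1})$.

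Next I address the uniqueness of the minimizer of $\max\{\theta^{i,+}(z), \theta^{j,-}(z)\}$. Using $\theta^{i,+}(W_i) = 0$ and $\theta^{j,-}(W_{j-1}) = 0$, the function $\delta(z) := \theta^{i,+}(z) - \theta^{j,-}(z)$ satisfies $\delta(W_i) \leq 0 \leq \delta(W_{j-1})$ and is strictly increasing on $(W_i, W_{j-1})$. Hence there is a unique $z^* \in [W_i, W_{j-1}]$ at which $\delta$ switches sign. On $[W_i, z^*]$ the maximum equals $\theta^{j,-}$, which is strictly decreasing, and on $[z^*, W_{j-1}]$ it equals $\theta^{i,+}$, which is strictly increasing, so $z^*$ is the unique minimizer of the max.

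Finally, for the assertion about $\Phi^{i,j}$, I note from~(\ref{eq:at_gen}) that $\Phi^{i,j}$ is continuous and convex as the sum of integrals of monotone non-negative functions, and its one-sided derivatives equal the one-sided limits of $\delta(z)$. Strict monotonicity of $\delta$ upgrades this to strict convexity of $\Phi^{i,j}$ on $[W_i, W_{j-1}]$, so its minimizer is unique, and by the standard subdifferential characterization of a convex function's minimum it is located exactly at the sign-change point $z^*$ identified above. The main technical obstacle is dealing with the right-jump discontinuities of $\theta^{i,+}$ at the points $W_{j'-1}$ (and the analogous left-jumps of $-\theta^{j,-}$); I would handle these by systematically working with one-sided derivatives and the subdifferential rather than assuming differentiability, ensuring both the sign-change location and the optimality condition remain well defined.
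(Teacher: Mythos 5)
Your proof is correct and tracks the paper's argument fairly closely, but with two genuine differences worth noting. First, the paper simply asserts that $\theta^{i,+}$ is strictly increasing on $[W_i,W_n]$ and $\theta^{j,-}$ strictly decreasing on $[0,W_{j-1}]$, whereas you actually verify the former with the neat observation that whenever $\theta^{i,+}(z_1)>0$ the maximizing piece $\theta^{i,+,j^*}$ is already in its linear regime, and you observe that $\theta^{i,+,i+1}$ forces positivity on $(W_i,W_n]$; this fills in a step the paper leaves tacit. Second, for the claim that $\Phi^{i,j}$ is minimized at $z^*$, the paper proceeds by directly writing $\Phi^{i,j}(z^*)-\Phi^{i,j}(z')=\int_{z'}^{z^*}\{\theta^{i,+}(t)-\theta^{j,-}(t)\}\,dt$ and bounding the sign of the integrand on each side of $z^*$, while you invoke strict convexity of $\Phi^{i,j}$ together with the subdifferential optimality condition and identify the subgradient with the one-sided limits of $\delta=\theta^{i,+}-\theta^{j,-}$. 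The two are essentially the same argument phrased differently: the paper's integral comparison is exactly what expanding your first-order convexity inequality would produce, while your phrasing makes the role of the jumps of $\delta$ explicit and gives, as a byproduct, uniqueness of the minimizer of $\Phi^{i,j}$ (not claimed by the lemma, but true). If you wanted to align more exactly with the paper, you could replace the subdifferential step by the direct integral computation, which keeps the proof fully elementary and avoids having to reason about one-sided derivatives at the jump points $W_h$.
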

\begin{proof}
By equations~\eqref{eq:function_right} and \eqref{eq:subfunction_right},
$\theta^{i,+}(z)$ is strictly increasing in $z \in [W_i,W_n]$.
Similarly, by equations~\eqref{eq:function_left} and \eqref{eq:subfunction_left},
$\theta^{j,-}(z)$ is strictly decreasing in $z \in [0,W_{j-1}]$.
Thus there uniquely exists $z^* \in [W_i,W_{j-1}]$.

We then see that for any $z' \in [W_{i}, z^*]$,
\begin{eqnarray*}
\Phi^{i, j}(z^*) - \Phi^{i, j}(z') &=& \Phi^{i,+}(z^*) + \Phi^{j,-}(z^*) - (\Phi^{i,+}(z') + \Phi^{j,-}(z')) \\
&=& \int_{z'}^{z^*} \theta^{i,+}(t)dt - \int_{z'}^{z^*} \theta^{j,-}(t)dt \\
&=& \int_{z'}^{z^*} \left\{ \theta^{i,+}(t) - \theta^{j,-}(t) \right\} dt \le 0,
\end{eqnarray*}
and for any $z'' \in [z^*, W_{j-1}]$,
\begin{eqnarray*}
\Phi^{i, j}(z^*) - \Phi^{i, j}(z'') &=& \Phi^{i,+}(z^*) + \Phi^{j,-}(z^*) - (\Phi^{i,+}(z'') + \Phi^{j,-}(z')) \\
&=& \int_{z''}^{z^*} \theta^{i,+}(t)dt - \int_{z''}^{z^*} \theta^{j,-}(t)dt \\
&=& -\int_{z^*}^{z''} \left\{ \theta^{i,+}(t) - \theta^{j,-}(t) \right\} dt \le 0,
\end{eqnarray*}
which implies that $z^*$ minimizes $\Phi^{i, j}(z)$ on $[W_{i}, W_{j-1}]$.
\end{proof}

In the following sections,
such $z^*$ is called the {\it pseudo-intersection point}\footnote{The reason why we adopt a term ``pseudo-intersection'' is that two functions $\theta^{i,+}(z)$ and $\theta^{j,-}(z)$ are not continuous in general while ``intersection'' is usually defined for continuous functions.} 
of $\theta^{i,+}(z)$ and $\theta^{j,-}(z)$, and
we say that $\theta^{i,+}(z)$ and $\theta^{j,-}(z)$ {\it pseudo-intersect} 
on $[W_i,W_{j-1}]$ at $z^*$.

%%%%%%%%%%%%%%%%%%%%%%%%%%%%%%%%%%%%%%%%
% Section 3
%%%%%%%%%%%%%%%%%%%%%%%%%%%%%%%%%%%%%%%%

\section{Algorithms}\label{sec:algorithm}

In order to solve our problems, we reduce them to {\it minimum $k$-link path problems}. 
In the minimum $k$-link path problems, we are given a weighted complete directed acyclic graph (DAG) $G = (V', E', w')$ 
with $V' = \{v'_{i} \mid i \in [1..n]\}$ and $E' = \{(v'_i, v'_j) \mid i,j \in [1..n], i < j\}$. Each edge $(v'_i, v'_j)$ is associated with weight $w'(i, j)$. 
We call a path in $G$ a {\it $k$-link} path if the path contains exactly $k$ edges. 
The task is to find a $k$-link path $(v'_{a_0} = v'_1, v'_{a_1}, v'_{a_2}, \ldots, v'_{a_{k-1}}, v'_{a_k} = v'_n)$ 
from $v'_1$ to $v'_n$
that minimizes the sum of weights of $k$ edges, $\sum_{i=1}^{k}w'(a_{i-1}, a_i)$.
%, where $a_0 = 1 < a_1 < \cdots < a_{k-1} < a_k = n$.
If the weight function $w'$ satisfies the {\it concave Monge property},
then we can solve the minimum $k$-link path problems
in almost linear time regardless of $k$.
\begin{definition}[Concave Monge property]
We say function $f: {\mathbb Z} \times {\mathbb Z} \rightarrow {\mathbb R}$ 
satisfies the concave Monge property 
if for any integers $i,j$ with $i+1<j$, $f(i, j) + f(i+1, j+1) \leq f(i+1, j) + f(i, j+1)$ holds.
\end{definition}
\begin{lemma}[\cite{Schieber98}]\label{lem:klink}
Given a weighted complete DAG with $n$ vertices, if the weight function satisfies the concave Monge property, 
then there exists an algorithm that solves the minimum $k$-link path problem 
in time $\min\{O(kn),n 2^{O(\sqrt{\log k \log\log n})}\}$.
\end{lemma}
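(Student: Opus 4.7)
The plan is to establish the two complexity bounds separately and then take their minimum. Both rest on a common dynamic programming formulation: let $D_t(j)$ denote the minimum weight of a $t$-link path from $v'_1$ to $v'_j$. We have $D_1(j)=w'(1,j)$ and
\[
D_t(j)=\min_{1\le i<j}\bigl\{D_{t-1}(i)+w'(i,j)\bigr\}.
\]
The required answer is $D_k(n)$. The concave Monge property is exactly the structural assumption that lets us accelerate this DP.

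For the $O(kn)$ bound, I would invoke the SMAWK matrix-searching algorithm of Aggarwal, Klawe, Moran, Shor, and Wilber. For each fixed $t$, the $n\times n$ matrix $A^{(t)}_{i,j}=D_{t-1}(i)+w'(i,j)$ inherits total monotonicity from the concave Monge condition on $w'$, since adding a function depending only on $i$ preserves the Monge inequality. Thus all column minima of $A^{(t)}$, and hence the whole row $D_t(\cdot)$, can be computed in $O(n)$ time once $D_{t-1}(\cdot)$ is known. Iterating for $t=2,\dots,k$ yields the first bound.

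For the $n\,2^{O(\sqrt{\log k\log\log n})}$ bound, the plan is to follow Schieber's Lagrangian-plus-parametric-search approach. Consider the relaxed problem of finding a minimum-weight path from $v'_1$ to $v'_n$ with any number of links when each edge cost is inflated by a parameter $\lambda\in\mathbb{R}$; let $t^*(\lambda)$ be the number of links in an optimal such path. Since the unconstrained optimum weight is concave in $\lambda$ and $t^*(\lambda)$ is monotone non-increasing, a value of $\lambda$ with $t^*(\lambda)=k$ (or a convex combination of two paths whose link counts bracket $k$) yields an optimal $k$-link path. For any fixed $\lambda$ the inflated weight function still satisfies the concave Monge property, so the unconstrained shortest path can be computed in $O(n)$ time by a single SMAWK sweep of the upper-triangular matrix $w'(i,j)+\lambda$.

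The main obstacle, and the heart of Schieber's argument, is bounding the number of parametric iterations to $2^{O(\sqrt{\log k\log\log n})}$ rather than the naive $O(\log n)$. I would handle this by a balanced two-phase recursion: a coarse phase that narrows $\lambda$ to within a window where only a controlled number of link counts remain feasible, and a fine phase that applies a shallow $(k',n)$-DP (with $k'\ll k$) to pin down the optimum inside that window. By choosing the recursion parameters so that each of the $s=\Theta(\sqrt{\log k\log\log n})$ levels contributes only a multiplicative $2^{O(\log\log n)}$ overhead on top of $O(n)$ SMAWK work, the product telescopes to $n\cdot 2^{O(\sqrt{\log k\log\log n})}$. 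Taking the minimum with the $O(kn)$ bound gives the lemma.
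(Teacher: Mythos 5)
The paper does not prove this lemma: it is imported verbatim from Schieber~\cite{Schieber98} and used as a black-box tool, so there is no in-paper proof to compare against. Judged on its own, your sketch of the $O(kn)$ bound is essentially correct: for each layer $t$ the matrix $A^{(t)}_{i,j}=D_{t-1}(i)+w'(i,j)$ is an offline, fully specified matrix (the previous layer is already computed), adding a row-only term preserves the concave Monge inequality, and SMAWK with the usual $+\infty$-padding of the lower triangle extracts all column minima in $O(n)$ per layer.

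The genuine gap is in the second bound. You correctly identify the Lagrangian/parametric skeleton (inflate every edge weight by $\lambda$, solve the unconstrained shortest-path problem, use the monotonicity of the optimal link count in $\lambda$) and correctly locate the difficulty in bounding the number of parametric iterations. But the step that is supposed to deliver $2^{O(\sqrt{\log k\log\log n})}$ --- ``choosing the recursion parameters so that each of the $s=\Theta(\sqrt{\log k\log\log n})$ levels contributes a multiplicative $2^{O(\log\log n)}$ overhead\ldots the product telescopes'' --- is an assertion of the target, not an argument. You never say what a level of the recursion actually does, why its depth is $\Theta(\sqrt{\log k\log\log n})$ rather than $\Theta(\log k)$, or where $\log\log n$ enters the per-level cost; that balance is the entire technical substance of Schieber's paper, and without it the claimed exponent is unsupported. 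A secondary issue: the unconstrained subproblem $E(j)=\min_{i<j}\{E(i)+w'(i,j)+\lambda\}$ is an \emph{online} concave Monge recurrence (each $E(i)$ must be available before $E(j)$ can be formed), so a plain SMAWK ``sweep'' of a static matrix does not apply; one needs the Larmore--Schieber online algorithm, which is still $O(n)$ but is a materially different tool, and the distinction matters because the same online machinery reappears inside the recursion you are eliding.
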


%{\color{red} (REMARK: This algorithm does not read the whole of a weighted complete DAG $G$.
% ** $G$ contains $\Theta(n^2)$ edges. **)}

We describe how to reduce 
the $k$-sink problem on
a dynamic flow path network $\mathcal{P}=(P=(V, E),{\mathbf w},{\mathbf c},{\mathbf l},\tau)$
%a path $P = (V, E)$ 
with $n$ vertices 
to the minimum $(k+1)$-link path problem on a weighted complete DAG $G = (V', E', w')$.
We prepare a weighted complete DAG $G = (V', E', w')$ with $n+2$ vertices,
where $V' = \{v'_i \mid i \in [0..n+1]\}$ and $E' = \{ (v'_i, v'_j) \mid i,j \in [0..n+1], i<j\}$.
We set the weight function $w'$ as 
\begin{eqnarray}\label{eq:klink_weight}
w'(i,j) = \left\{
\begin{array}{lll}
\opt(i, j)& \quad & i, j \in [1..n], i < j, \\
\Phi^{i, +}(W_n) & &i \in [1..n] \text{ and } j = n+1, \\
\Phi^{j, -}(0) & &i = 0 \text{ and } j \in [1..n], \\
%\Phi^{i, +}(W_n) = \int_{0}^{W_n}\theta^{i, +}(z)dz & i \in [1..n] \text{ and } j = n+1, \\
%\Phi^{j, -}(0) = \int_{0}^{W_n}\theta^{j, -}(z)dz & i = 0 \text{ and } j \in [1..n], \\
\infty & & i = 0 \text{ and } j=n+1,
\end{array}
\right.
\end{eqnarray}
where 
%$\opt(i, j)$ is the optimal aggregate evacuation time required to move all the supply between $v_i$ and $v_j$ to one of two sinks $v_i$ or $v_j$, i.e., 
$\opt(i, j) = \min_{z \in [W_{i}, W_{j-1}]} \Phi^{i, j}(z)$. 

Now, on a weighted complete DAG $G$ made as above,
let us consider a $(k+1)$-link path $(v'_{a_0}=v'_0, v'_{a_1}, \ldots, v'_{a_{k}}, v'_{a_{k+1}}=v'_{n+1})$ from $v'_0$ to $v'_{n+1}$,
where $a_1, \ldots, a_k$ are integers satisfying $0 < a_1 < a_2 < \cdots < a_k < n+1$.
The sum of weights of this $(k+1)$-link path is
\begin{eqnarray*}
\sum_{i=0}^{k}w'(a_{i}, a_{i+1}) 
%&= &
%w'(0, a_1) + \sum_{i=2}^{k}w'(a_{i-1}, a_i) + w'(a_{k}, n+1) \\
&= &
\Phi^{a_1, -}(0) + \sum_{i=1}^{k-1} \opt(a_{i}, a_{i+1}) + \Phi^{a_{k}, +}(W_n).
\end{eqnarray*}
This value is equivalent to $\min_{{\bf d}} \at({\cal P}, {\bf x}, {\bf d})$ for a $k$-sink ${\bf x} = (v_{a_1}, v_{a_2}, \ldots, v_{a_{k}})$
(recall equation~\eqref{eq:at_ksink}),
%which is the minimum aggregate evacuation time 
%for a $k$-sink ${\bf x} = (v_{a_1}, v_{a_2}, \ldots, v_{a_{k}})$. 
which implies that a minimum $(k+1)$-link path on $G$ 
corresponds to an optimal $k$-sink location for a dynamic flow path network $\mathcal{P}$.
%Section: Concave Monge Condition

%See Appendix~\ref{app:reduction} for the details.

We show in the following lemma that the function $w'$ defined as formula~{(\ref{eq:klink_weight})}
satisfies the concave Monge property under both of evacuation models.
\begin{lemma}\label{lem:concave_monge}
The weight function $w'$ defined as formula~{(\ref{eq:klink_weight})}
satisfies the concave Monge property 
under the confluent/non-confluent flow model.
\end{lemma}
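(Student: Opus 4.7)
The plan is to check the inequality
$w'(i,j) + w'(i+1,j+1) \le w'(i+1,j) + w'(i,j+1)$
in each of the four regimes determined by whether $i = 0$ or $i \ge 1$ and whether $j = n$ or $j \le n-1$, with the ``interior'' regime $1 \le i$, $j \le n-1$ being the main case and the other three reducing to one-variable versions of the same argument.

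The two monotonicity facts that drive the whole proof are read off directly from equations~\eqref{eq:function_right}--\eqref{eq:subfunction_left}. Since $C_{i,h} \le C_{i+1,h}$ and $L_{i,h} \ge L_{i+1,h}$, each piece $\theta^{i,+,h}(z)$ dominates $\theta^{i+1,+,h}(z)$; because $\theta^{i,+}$ is a pointwise maximum over a superset of indices, one gets $\theta^{i,+}(z) \ge \theta^{i+1,+}(z)$ for all $z$, and symmetrically $\theta^{j+1,-}(z) \ge \theta^{j,-}(z)$. Integrating via~\eqref{eq:at_gen}, this shows that $E(z) := \Phi^{i,+}(z) - \Phi^{i+1,+}(z)$ is non-decreasing and $D(z) := \Phi^{j+1,-}(z) - \Phi^{j,-}(z)$ is non-increasing in $z$.

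For the interior case, I would take $z_3^* \in \argmin_z \Phi^{i+1,j}(z)$ and $z_4^* \in \argmin_z \Phi^{i,j+1}(z)$ and split on their ordering, which is the classical Monge ``swap''. If $z_4^* \le z_3^*$, then $\opt(i,j) \le \Phi^{i,j}(z_4^*)$ and $\opt(i+1,j+1) \le \Phi^{i+1,j+1}(z_3^*)$; after the four $\Phi^{\cdot,+}$ terms cancel pairwise in LHS $-$ RHS, the remainder collapses to $D(z_3^*) - D(z_4^*) \le 0$. The other ordering is symmetric, with remainder $E(z_3^*) - E(z_4^*) \le 0$. A small verification is that $\opt(i,j) \le \Phi^{i,j}(z)$ holds for any $z \in [0,W_n]$, not only for $z \in [W_i,W_{j-1}]$: outside that window one summand of $\Phi^{i,j}$ vanishes while the other is monotone, so the minimum over the extended domain equals the minimum over the natural one. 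For the confluent model, the witnesses $z_3^*, z_4^*$ themselves lie in $\{W_1, \ldots, W_n\}$, so the identical argument applies.

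The three boundary regimes reuse the same monotonicity. When $i = 0$, plugging the minimizer $z^*$ of $\Phi^{1,j}$ into $\Phi^{1,j+1}$ gives $\opt(1,j+1) - \opt(1,j) \le D(z^*) \le D(0)$, which rearranges to $\Phi^{j,-}(0) + \opt(1,j+1) \le \opt(1,j) + \Phi^{j+1,-}(0)$. The case $j = n$ is symmetric using $E(z^*) \le E(W_n)$, and the corner $(i,j) = (0,n)$ is trivial since $w'(0,n+1) = \infty$. The one subtle point I expect to need care with is precisely the extended-domain check above; everything else is the textbook Monge swap once the monotonicity of the $\theta$ functions has been extracted.
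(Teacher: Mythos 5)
Your proof is correct, and it takes a cleaner route than the paper's. The paper first proves two auxiliary lemmas inside the Monge proof -- the $\theta$-monotonicity fact $\theta^{i,+}\ge\theta^{i+1,+}$, $\theta^{j,-}\le\theta^{j+1,-}$ (its Lemma~\ref{lem:monotone}), and an ordering lemma on the pseudo-intersection points $\alpha^{i,j}\le\alpha^{i+1,j},\alpha^{i,j+1}\le\alpha^{i+1,j+1}$ (its Lemma~\ref{lem:cross}, with a confluent analogue Lemma~\ref{lem:cross_confluent}) -- and then writes
\[
w'(i,j{+}1)+w'(i{+}1,j)-w'(i,j)-w'(i{+}1,j{+}1)
\]
exactly as a signed sum of integrals whose endpoints are the four optimal points, using the ordering lemma to split the range and then comparing the four integrands pointwise via $\min\{\theta^{i,+},\theta^{j+1,-}\}\ge\max\{\theta^{j,-},\theta^{i+1,+}\}$ on $[\alpha^{i,j},\alpha^{i+1,j+1})$. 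You instead run the classical Monge swap: fix optimal witnesses only for the two cross terms $\Phi^{i+1,j}$ and $\Phi^{i,j+1}$, plug them into the two straight terms (so the straight terms are only upper-bounded, not evaluated exactly), and observe that after the $\Phi^{\cdot,+}$ or $\Phi^{\cdot,-}$ cancellations the slack is exactly $D(z_3^*)-D(z_4^*)$ or $E(z_3^*)-E(z_4^*)$, where monotonicity of $E,D$ follows immediately from the $\theta$-monotonicity by integrating. This eliminates the need for the ordering lemma on the $\alpha$'s and for the pointwise integrand comparison; the only extra work is your ``extended-domain'' check that $\opt(i,j)\le\Phi^{i,j}(z)$ for all $z\in[0,W_n]$ (needed because $z_4^*$ or $z_3^*$ may escape the window $[W_i,W_{j-1}]$), which you justify correctly by noting that outside the window one of $\Phi^{i,+},\Phi^{j,-}$ vanishes and the other is monotone toward the boundary. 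The boundary cases $i=0$ and $j=n$ then reduce to one-sided versions of the same swap with a single witness, as you describe, and the confluent case goes through unchanged because the witnesses already live on the grid $\{W_1,\dots,W_n\}$. Net effect: your proof needs only the paper's Lemma~\ref{lem:monotone} (the $\theta$ monotonicity), is shorter, and avoids the internal Lemmas~\ref{lem:cross} and~\ref{lem:cross_confluent}; the paper's version yields those ordering facts as byproducts, but since the paper never uses them outside this proof, nothing is lost.
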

\begin{proof}
%See Appendix~\ref{sec:concave_monge} for the proof.

%We give the proof of Lemma~\ref{lem:concave_monge} which says that
%the function $w'$ defined by formula~\eqref{eq:klink_weight}
%satisfies the concave Monge property on both evacuation models.
If we show that, for any $i,j \in [0..n]$ with $i<j$, 
\begin{equation}\label{eq:concave_monge}
 w'(i, j) + w'(i+1, j+1) \leq w'(i, j+1) + w'(i+1, j) 
\end{equation}
holds, then the proof completes. 
Note that the condition~\eqref{eq:concave_monge} holds for $i=0$ and $j=n$, 
because the right-hand side of~\eqref{eq:concave_monge} contains ${w'(0,n+1) = \infty}$ 
and other terms are finite. 

%\subsection{Proof for the non-confluent flow model}\label{sec:monge_non-confluent}
\noindent
{\bf Proof for the non-confluent flow model:} 
First, let us consider case of $0 < i < j < n$. 
By formula~\eqref{eq:klink_weight}, for any $(i', j') \in \{(i,j), (i, j+1), (i+1,j),(i+1, j+1)\}$, 
we have $w'(i',j') = \opt(i', j')$. 
Under the non-confluent flow model, for any $i,j \in [1..n]$ with $i<j$, 
$\opt(i,j) = \min_{z \in [W_{i}, W_{j-1}]} \Phi^{i,j}(z)$. 
Lemma~\ref{lem:convex} implies that $\Phi^{i,j}(z)$ is minimized when $z$ is the pseudo-intersection point of $\theta^{i,+}(z)$ and $\theta^{j,-}(z)$. 
For any $i,j \in [1..n]$ with $i<j$, let $\alpha^{i,j}$ denote 
the pseudo-intersection point of $\theta^{i,+}(z)$ and $\theta^{j,-}(z)$. 
Thus, we have 
\begin{eqnarray}\label{eq:opt} 
w'(i,j) = \opt(i, j) = \Phi^{i,j}(\alpha^{i,j}) = 
\int_{0}^{\alpha^{i,j}} \theta^{i, +}(z)dz + \int_{\alpha^{i,j}}^{W_n}\theta^{j, -}(z)dz.
\end{eqnarray}

We give two lemmas in order to show the concave Monge condition.
\begin{lemma}\label{lem:monotone}
For any integer $i \in [1..n-1]$ and any $z \in [0, W_n]$, 
\begin{eqnarray*}
\theta^{i,+}(z) \geq \theta^{i+1,+}(z) \text{ and }
\theta^{i,-}(z) \leq \theta^{i+1,-}(z)
\end{eqnarray*}
hold.
\end{lemma}
\begin{proof}
We give the proof only of $\theta^{i,+}(z) \geq \theta^{i+1,+}(z)$ 
because the other case can be shown in a similar way. 
By formula~(\ref{eq:subfunction_right}), 
for any $j \in [i+2..n]$, we have
\begin{eqnarray*}
\theta^{i, +, j}(z) - \theta^{i+1, +, j}(z) =
\left\{
\begin{array}{ll}
0 & \text{if } z \leq W_{j-1}, \\
\frac{z - W_{j-1}}{C_{i,j}} + \tau\cdot L_{i,j} & \text{if } W_{j-1} < z \leq W_{j}, \\
\frac{(z - W_{j-1}) (C_{i+1,j} - C_{i,j})}{C_{i,j}C_{i+1,j}}+ \tau \cdot \ell_{i}
& \text{if } z > W_{j}.
\end{array}
\right.
\end{eqnarray*}
Since $C_{i+1,j} - C_{i,j} =
\min \{ c_h \mid h \in [i+1..j-1] \} - \min \{ c_h \mid h \in [i..j-1] \}
\geq 0$, $\theta^{i, +, j}(z) - \theta^{i+1, +,j}(z) \geq 0$ holds.
Therefore, we have $\theta^{i,+}(z) \geq \theta^{i+1,+}(z)$ since 
$\theta^{i, +}(z) = \max\{ \theta^{i, +, j}(z) \mid j \in [i+1..n]\}$
by formula~(\ref{eq:function_right}). 
\end{proof}

\begin{lemma}\label{lem:cross}
For any $i,j \in [1..n]$ with $i<j$, 
\begin{eqnarray*}
\alpha^{i,j} \leq \alpha^{i+1, j} \leq \alpha^{i+1, j+1} \text{ and }
\alpha^{i,j} \leq \alpha^{i, j+1} \leq \alpha^{i+1, j+1}
\end{eqnarray*}
hold.
\end{lemma}
\begin{proof}
We give the proof only of $\alpha^{i,j} \leq \alpha^{i+1, j}$ 
because other cases can be shown in a similar way. 
For any $i,j \in [1..n]$ with $i<j$ and positive constant $\epsilon$, 
we have
\[
%\theta^{i+1,+}(\alpha^{i,j} - \epsilon) - \theta^{j,-}(\alpha^{i,j} - \epsilon) \leq
%\theta^{i,+}(\alpha^{i,j} - \epsilon) - \theta^{j,-}(\alpha^{i,j} - \epsilon) < 0
\theta^{i+1,+}(\alpha^{i,j} - \epsilon) \leq \theta^{i,+}(\alpha^{i,j} - \epsilon) < \theta^{j,-}(\alpha^{i,j} - \epsilon)
\]
because $\theta^{i,+}(z) \geq \theta^{i+1,+}(z)$ 
holds by Lemma~\ref{lem:monotone} and $\theta^{j,-}(z)$ is a non-increasing function. 
It implies that $\alpha^{i,j} \leq \alpha^{i+1, j}$ holds and the proof completes.
\end{proof}

For any $i,j \in [1..n-1]$ with $i<j$, equation~(\ref{eq:opt}) and Lemma~\ref{lem:cross} lead that 
\begin{eqnarray}\label{eq:delta_non_confluent}
& & %\Delta^{i,j}
w'(i, j+1) + w'(i+1, j) - w'(i, j) - w'(i+1, j+1) \nonumber \\ 
&=& \Phi^{i,j+1}(\alpha^{i,j+1}) + \Phi^{i+1,j}(\alpha^{i+1,j})
- \Phi^{i,j}(\alpha^{i,j}) - \Phi^{i+1,j+1}(\alpha^{i+1,j+1}) \nonumber \\ 
&=& \int_{\alpha^{i,j}}^{\alpha^{i, j+1}} \theta^{i,+}(z)dz
+ \int_{\alpha^{i, j+1}}^{\alpha^{i+1, j+1}}\theta^{j+1,-}(z)dz \nonumber \\
&&\quad\quad\quad\quad - \int_{\alpha^{i, j}}^{\alpha^{i+1, j}}\theta^{j,-}(z)dz
- \int_{\alpha^{i+1, j}}^{\alpha^{i+1, j+1}} \theta^{i+1,+}(z)dz. 
\end{eqnarray}

Now, we show that for any $z \in [\alpha^{i,j}, \alpha^{i+1,j+1})$, 
\[
\min\{\theta^{i,+}(z), \theta^{j+1,-}(z) \} \geq \max\{ \theta^{j,-}(z), \theta^{i+1,+}(z) \}
\]
holds. 
First, for any $z \in [0, W_n]$, we have 
$\theta^{i,+}(z) \geq \theta^{i+1,+}(z)$ and $\theta^{j,-}(z) \leq \theta^{j+1,-}(z)$ hold 
by Lemma~\ref{lem:monotone}.
For any $z \geq \alpha^{i,j}$, we have $\theta^{i,+}(z) \geq \theta^{j,-}(z)$ holds 
since $\alpha^{i,j}$ is the pseudo-intersection point of $\theta^{i,+}(z)$ and $\theta^{j,-}(z)$. 
Similarly, for any $z < \alpha^{i+1,j+1}$,  we have $\theta^{i+1,+}(z) \leq \theta^{j+1,-}(z)$. 
Therefore, any $z \in [\alpha^{i,j}, \alpha^{i+1,j+1})$, 
$\min\{\theta^{i,+}(z), \theta^{j+1,-}(z) \} \geq \max\{ \theta^{j,-}(z), \theta^{i+1,+}(z) \}$ holds.

Thus, equation~{(\ref{eq:delta_non_confluent})} continues as 
\begin{eqnarray*}
& & %\Delta^{i,j}
w'(i, j+1) + w'(i+1, j) - w'(i, j) - w'(i+1, j+1) \nonumber \\ 
&\geq&  \int_{\alpha^{i, j}}^{\alpha^{i+1, j+1}} \min\{ \theta^{i,+}(z), \theta^{j+1,-}(z) \} - \int_{\alpha^{i, j}}^{\alpha^{i+1, j+1}} \max\{ \theta^{j,-}(z), \theta^{i+1,+}(z) \} dz \\
&\geq& 0, 
\end{eqnarray*}
and then condition~\eqref{eq:concave_monge} holds for any $i, j$ with $0 < i < j < n$. 

Next, let us consider the case of $i = 0$ and $j \in [1..n-1]$. 
Recall that $w'(0, j) = \Phi^{j,-}(0)$ and $w'(0, j+1) = \Phi^{j+1,-}(0)$
by formula~\eqref{eq:klink_weight}.
In this case, we have
\begin{eqnarray*}
%\Delta^{0,j}
& & w'(0, j+1) + w'(1, j) - w'(0, j) - w'(1, j+1) \nonumber \\ 
&=& \Phi^{j+1,-}(0) + \Phi^{1,j}(\alpha^{1,j}) - \Phi^{j,-}(0) - \Phi^{1,j+1}(\alpha^{1,j+1})\\
&=& \int_{0}^{W_n}\theta^{j+1,-}(z)dz + 
\int_{0}^{\alpha^{1,j}} \theta^{1, +}(z)dz + \int_{\alpha^{1,j}}^{W_n}\theta^{j, -}(z)dz \\
& &\quad \quad - \int_{0}^{W_n}\theta^{j,-}(z)dz - 
\int_{0}^{\alpha^{1,j+1}} \theta^{1, +}(z)dz - \int_{\alpha^{1,j+1}}^{W_n}\theta^{j+1, -}(z)dz \\
&=& \int_{0}^{\alpha^{1,j+1}}\theta^{j+1,-}(z)dz - \int_{0}^{\alpha^{1,j}}\theta^{j, -}(z)dz - \int_{\alpha^{1,j}}^{\alpha^{1,j+1}} \theta^{1, +}(z)dz, 
\end{eqnarray*}
where the last equality uses $\alpha^{1,j} \leq \alpha^{1,j+1}$ by Lemma~\ref{lem:cross}.
By Lemma~\ref{lem:monotone}, we have $\theta^{j+1,-}(z) \geq \theta^{j,-}(z)$ for any $z \in [0, W_n]$.
Using the same argument for the previous case, for any $z < \alpha^{1,j+1}$, we have $\theta^{1,+}(z) < \theta^{j+1,-}(z)$. 
Thus, we have
\begin{eqnarray*}
& & %\Delta^{0,j}
w'(0, j+1) + w'(1, j) - w'(0, j) - w'(1, j+1) \nonumber \\ 
&=& \int_{0}^{\alpha^{1,j+1}}\theta^{j+1,-}(z)dz - \int_{0}^{\alpha^{1,j}}\theta^{j, -}(z)dz - \int_{\alpha^{1,j}}^{\alpha^{1,j+1}} \theta^{1, +}(z)dz \\
&=& 
\int_{0}^{\alpha^{1,j}}\left\{ \theta^{j+1,-}(z) - \theta^{j, -}(z) \right\}dz +  \int_{\alpha^{1,j}}^{\alpha^{1,j+1}} \left\{ \theta^{j+1,-}(z) - \theta^{1, +}(z)\right\}dz \geq 0.
\end{eqnarray*}

The rest of the proof is the case of $j = n$ and $i \in [1..n-1]$. 
Recall that $w'(i, n+1) = \Phi^{i,+}(W_n)$ and $w'(i+1, n+1) = \Phi^{i+1,+}(W_n)$
by formula~\eqref{eq:klink_weight}.
Similar to the second case, 
we use facts that $\alpha^{i,n} \leq \alpha^{i+1,n}$ by Lemma~\ref{lem:cross}, 
$\theta^{i,+}(z) \geq \theta^{i+1,+}(z)$ for any $z \in [0, W_n]$ by Lemma~\ref{lem:monotone} and 
$\theta^{i,+}(z) \geq \theta^{n,-}(z)$ for any $z \geq \alpha^{i,n}$. 
Then, we have 
\begin{eqnarray*}
%\Delta^{i,n}
& & w'(i, n+1) + w'(i+1, n) - w'(i, n) - w'(i, n+1) \\ 
&=& \Phi^{i,+}(W_n) + \Phi^{i+1,n}(\alpha^{i+1,n}) - \Phi^{i,n}(\alpha^{i,n}) - \Phi^{i+1,+}(W_n)\\
&=& \int_{0}^{W_n}\theta^{i,+}(z)dz + 
\int_{0}^{\alpha^{i+1,n}} \theta^{i+1, +}(z)dz + \int_{\alpha^{i+1,n}}^{W_n}\theta^{n,-}(z)dz \\
& &\quad \quad - \int_{0}^{\alpha^{i,n}}\theta^{i,+}(z)dz - 
\int_{\alpha^{i,n}}^{W_n} \theta^{n, -}(z)dz - \int_{0}^{W_n}\theta^{i+1,+}(z)dz \\
&=& \int_{\alpha^{i,n}}^{W_n}\theta^{i,+}(z)dz - \int_{\alpha^{i,n}}^{\alpha^{i+1,n}} \theta^{n, -}(z)dz - \int_{\alpha^{i+1,n}}^{W_n}\theta^{i+1,+}(z)dz \\
&=& 
\int_{\alpha^{i,n}}^{\alpha^{i+1,n}} \left\{ \theta^{i,+}(z) - \theta^{n,-}(z) \right\} dz 
+ \int_{\alpha^{i+1,n}}^{W_n}\left\{ \theta^{i,+}(z) - \theta^{i+1,+}(z)\right\}dz \geq 0.
\end{eqnarray*}

Thus, for any $i, j \in [0..n]$ with $i<j$,  condition~\eqref{eq:concave_monge} holds. 
It implies that the function $w'$ satisfies the concave Monge condition.

%\subsection{Proof for the confluent flow model}\label{sec:monge_confluent}
\noindent
{\bf Proof for the confluent flow model:}
%Similar to Section~\ref{sec:monge_non-confluent}, 
Similar to the case for the non-comfluent flow model,
first, let us consider case of $0 < i < j < n$. 
In this case, for any $(i', j') \in \{(i,j), (i, j+1), (i+1,j),(i+1, j+1)\}$, 
we have $w'(i',j') = \opt(i', j')$. 
Under the confluent flow model, each element of a divider ${\bf d}$ 
should take one of $n$ values $W_1, W_2, \ldots, W_n$. 
This implies that $\opt(i,j)$ is one of 
$\Phi^{i,j}(z)$ for $z = W_i, W_{i+1}, \ldots, W_{j}$. 
For any $i,j \in [1..n]$ with $i < j$, let $\beta^{i,j}$ denote a real value $z$ 
that minimizes $\Phi^{i,j}(z)$ for $z \in \{ W_h \mid h \in [i..j] \}$. 
Thus, we have
\begin{eqnarray}\label{eq:opt_conf}
\opt(i, j) = \Phi^{i, j}(\beta^{i, j}) = 
\int_{0}^{\beta^{i, j}} \theta^{i,+}(z)dz + \int_{\beta^{i, j}}^{W_n}\theta^{j,-}(z)dz
\end{eqnarray}
under the confluent flow model. 
Since $\Phi^{i,j}(z)$ is convex by Lemma~\ref{lem:convex}, 
$\beta^{i, j}$ is one of the following two values: 
(i) the largest value $z \in \{ W_h \mid h \in [i..j] \}$ with smaller than $\alpha^{i,j}$ or
(ii) the smallest value $z \in \{ W_h \mid h \in [i..j] \}$ with larger than or equal to $\alpha^{i,j}$.
This fact and Lemma~\ref{lem:cross} imply the following lemma. 
\begin{lemma}\label{lem:cross_confluent} 
For any $i,j \in [1..n]$ with $i<j$, 
\begin{eqnarray*}
\beta^{i,j} \leq \beta^{i+1, j}\leq \beta^{i+1, j+1} \text{ and }
\beta^{i,j} \leq \beta^{i, j+1} \leq \beta^{i+1, j+1} 
\end{eqnarray*} 
hold. 
\end{lemma} 

Let us suppose that $\beta^{i+1, j} \leq \beta^{i, j+1}$ holds. 
(Note that one can prove for the case of $\beta^{i+1, j} > \beta^{i, j+1}$ in a similar way.) 
Lemma~\ref{lem:cross_confluent} implies that
\begin{eqnarray}
& & %\Delta^{i,j}
w'(i, j+1) + w'(i+1, j) - w'(i, j) - w'(i+1, j+1) \nonumber \\ 
& = &
\int_{\beta^{i, j}}^{\beta^{i, j+1}} \theta^{i,+}(z)dz
+ \int_{\beta^{i, j+1}}^{\beta^{i+1, j+1}}\theta^{j+1,-}(z)dz
- \int_{\beta^{i, j}}^{\beta^{i+1, j}}\theta^{j,-}(z)dz
- \int_{\beta^{i+1, j}}^{\beta^{i+1, j+1}} \theta^{i+1,+}(z)dz \nonumber \\
& = &
\int_{\beta^{i, j}}^{\beta^{i+1, j}} \left\{ \theta^{i,+}(z) - \theta^{j,-}(z) \right\}dz
+ \int_{\beta^{i+1, j}}^{\beta^{i, j+1}} \left\{ \theta^{i,+}(z) - \theta^{i+1,+}(z) \right\}dz \nonumber \\
&& \quad\quad\quad\quad + 
\int_{\beta^{i, j+1}}^{\beta{i+1,j+1}} \left\{ \theta^{j+1,-}(z) - \theta^{i+1,+}(z) \right\}dz.
\end{eqnarray}
The first term
$\int_{\beta^{i, j}}^{\beta^{i+1, j}} \left\{ \theta^{i,+}(z) - \theta^{j,-}(z) \right\}dz$
is non-negative. 
Indeed, by the optimality of $\Phi^{i,j}(z)$, we have
\begin{eqnarray*}
\Phi^{i, j}(\beta^{i+1, j}) & \geq & \opt(i, j) = \Phi^{i, j}(\beta^{i, j}) \\
\int_{0}^{\beta_{i+1, j}} \theta^{i,+}(z)dz
+ \int_{\beta_{i+1, j}}^{W_n}\theta^{j,-}(z)dz
& \geq &
\int_{0}^{\beta^{i, j}} \theta^{i,+}(z)dz + \int_{\beta^{i, j}}^{W_n}\theta^{j,-}(z)dz \\
\int_{\beta^{i, j}}^{\beta^{i+1, j}} \left\{ \theta^{i,+}(z) - \theta^{j,-}(z) \right\}dz &\geq& 0.
\end{eqnarray*}
Similarly, the third term $\int_{\beta^{i, j+1}}^{\beta{i+1,j+1}} \left\{ \theta^{j+1,-}(z) - \theta^{i+1,-}(z) \right\}dz$
is non-negative since we have 
$\Phi^{i+1, j+1}(\beta^{i, j+1}) \geq \opt(i+1, j+1) = \Phi^{i+1, j}(\beta^{i+1, j+1})$. 
The second term $\int_{\beta{i+1,j}}^{\beta{i,j+1}} \left\{ \theta^{i,+}(z) - \theta^{i+1,+}(z) \right\}dz$ is 
also non-negative
because for any $z \in [0, W_n]$, $\theta^{i,+}(z) \geq \theta^{i+1,+}(z)$ holds by Lemma~{\ref{lem:monotone}}.
Therefore, we have 
\[
w'(i, j+1) + w'(i+1, j) - w'(i, j) - w'(i+1, j+1) \geq 0,
\]
that is, condition~\eqref{eq:concave_monge} holds.

Next, let us consider the case of $i = 0$ and $j \in [1..n]$. 
In this case, we have
\begin{eqnarray*}
%\Delta^{0,j}
& & w'(0, j+1) + w'(1, j) - w'(0, j) - w'(1, j+1) \\
&=& \Phi^{j+1,-}(0) + \Phi^{1,j}(\beta^{1,j}) - \Phi^{j,-}(0) - \Phi^{1,j+1}(\beta^{1,j+1})\\
&=& \int_{0}^{W_n}\theta^{j+1,-}(z)dz + 
\int_{0}^{\beta^{1,j}} \theta^{1, +}(z)dz + \int_{\beta^{1,j}}^{W_n}\theta^{j, -}(z)dz 
- \int_{0}^{W_n}\theta^{j,-}(z)dz 
- \Phi^{1,j+1}(\beta^{1,j+1}) \\
&=& \int_{0}^{W_n}\theta^{j+1,-}(z)dz + 
\int_{0}^{\beta^{1,j}} \theta^{1, +}(z)dz - \int_{0}^{\beta^{1,j}}\theta^{j,-}(z)dz 
- \Phi^{1,j+1}(\beta^{1,j+1}) \\
&=& 
\int_{0}^{\beta^{1,j}}\left\{ \theta^{j+1,-}(z) - \theta^{j, -}(z) \right\}dz
+ \Phi^{1, j+1}(\beta^{1,j}) - \Phi^{1, j+1}(\beta^{1,j+1}), \end{eqnarray*}
where the last equality uses 
$\Phi^{1, j+1}(\beta^{1,j}) = 
\int_{0}^{\beta^{1,j}} \theta^{1, +}(z)dz - 
\int_{\beta^{1,j}}^{W_n}\theta^{j+1,-}(z)dz$.
The first term $\int_{0}^{\beta^{1,j}}\left\{ \theta^{j+1,-}(z) - \theta^{j, -}(z) \right\}dz$ is non-negative
because $\theta^{j+1,-}(z) \geq \theta^{j,-}(z)$ holds by Lemma~\ref{lem:monotone}. 
Since the function $\Phi^{1,j+1}(z)$ is minimized when $z = \beta^{1,j+1}$, we have
$\Phi^{1, j+1}(\beta^{1,j}) - \Phi^{1, j+1}(\beta^{1,j+1}) \geq 0$.
Thus, 
\[
w'(0, j+1) + w'(1, j) - w'(0, j) - w'(1, j+1) \geq 0
\]
holds. 

The rest of the proof is the case of $j = n$ and $i \in [0..n-1]$. 
Similar to the previous case, we have
\begin{eqnarray*}
%\Delta^{i,n} 
& &w'(i, n+1) + w'(i+1, n) - w'(i, n) - w'(i+1, n+1) \\
&=& \Phi^{i,+}(W_n) + \Phi^{i+1,n}(\beta^{i+1,n}) - \Phi^{i,n}(\beta^{i,n}) - \Phi^{i+1,+}(W_n)\\
&=& \int_{0}^{W_n}\theta^{i,+}(z)dz + 
\int_{0}^{\beta^{i+1,n}} \theta^{i+1, +}(z)dz + \int_{\beta^{i+1,n}}^{W_n}\theta^{n,-}(z)dz
- \Phi^{i,n}(\beta^{i,n}) - \int_{0}^{W_n}\theta^{i+1,+}(z)dz \\
&=& \int_{0}^{W_n}\theta^{i,+}(z)dz 
- \int_{\beta^{i+1,n}}^{W_n}\theta^{i+1,+}(z)dz
+ \int_{\beta^{i+1,n}}^{W_n}\theta^{n,-}(z)dz 
- \Phi^{i,n}(\beta^{i,n}) \\
&=& 
\int_{\beta^{i+1,n}}^{W_n}\left\{ \theta^{i,+}(z) - \theta^{i+1,+}(z)\right\}dz + 
\Phi^{i,n}(\beta^{i+1,n}) - \Phi^{i,n}(\beta^{i,n}), 
\end{eqnarray*}
where the last equality uses 
$\Phi^{i, n}(\beta^{i+1,n}) = 
\int_{0}^{\beta^{i+1,n}} \theta^{i+1, +}(z)dz - 
\int_{\beta^{i+1,n}}^{W_n}\theta^{n,-}(z)dz$. 
We also have $\theta^{i,+}(z) \geq \theta^{i+1,+}(z)$ and then the first term is non-negative. 
Since the function $\Phi^{i,n}(z)$ is minimized when $z = \beta^{i,n}$, we have
$\Phi^{i,n}(\beta^{i+1,n}) - \Phi^{i,n}(\beta^{i+1,n}) \geq 0$. 
Thus, $w'(i, n+1) + w'(i+1, n) - w'(i, n) - w'(i+1, n+1) \geq 0$ holds.

Thus, for any $i, j \in [0..n]$ with $i<j$, condition~\eqref{eq:concave_monge} holds. 
It implies that the function $w'$ satisfies the concave Monge condition. 
\end{proof}

Lemmas~\ref{lem:klink} and~\ref{lem:concave_monge} imply that 
if we can evaluate $w'(i,j)$ %/$\opt(i,j)$
in time at most $t$ for any $i,j \in [0..n+1]$ with $i < j$, 
then we can solve the $k$-sink problem in time $\min\{O(knt), n2^{O(\sqrt{\log k \log\log n})} t \}$.

In order to obtain $w'(i,j)$ for any $i,j \in [0..n+1]$ with $i < j$ in $O({\rm poly}\log n)$ time,
we introduce novel data structures and some modules using them. 
% We have two purposes of applying such data structures. 
% One is to find the optimal divider {\bf d} for a given sink location {\bf x}. 
% The other is to calculate the aggregate evacuation time $\at(\mathcal{P}, {\bf x}, {\bf d})$.
Basically, we construct a {\it segment tree}~\cite{deBerg2010} ${\cal T}$ with root $\rho$
such that its leaves correspond to indices of vertices of $P$ arranged from left to right and its height is $O(\log n)$.
% Our data structure is based on a {\it segment tree} ${\cal T}$ (see~\cite{deBerg2010}) with root $\rho$
% such that its leaves correspond to indices of vertices of $P$ arranged from left to right and its height is $O(\log n)$.
For a node $u \in {\cal T}$, let ${\cal T}_u$ denote the subtree rooted at $u$,
and let $l_u$ (resp. $r_u$) denote the index of the vertex 
that corresponds to the leftmost (resp. rightmost) leaf of ${\cal T}_u$. 
Let $p_u$ denote the parent of $u$ if $u \neq \rho$. 
We say a node $u \in {\cal T}$ {\it spans} subpath $P_{\ell_u, r_u}$. 
If $P_{\ell_u,r_u} \subseteq P'$ and $P_{\ell_{p_u},r_{p_u}} \not\subseteq P'$, 
node $u$ is called a {\it maximal subpath node} for $P'$.
For each node $u \in {\cal T}$, let $m_u$ be the number of edges in subpath $P_{\ell_u, r_u}$, 
i.e., $m_u = r_u - \ell_u$.
As with a standard segment tree, ${\cal T}$ has the following properties.
\begin{property}\label{property2}
For $i, j \in [1..n]$ with $i<j$, 
the number of maximal subpath nodes for $P_{i,j}$ is $O(\log n)$.
% For a subpath $P' \subseteq P$, the number of maximal subpath nodes for $P'$ is $O(\log n')$, where $n'$ is the number of vertices of $P'$.
Moreover, 
% for $i, j \in [1..n]$ with $i<j$, 
we can find all the maximal subpath nodes for $P_{i,j}$ 
by walking on ${\cal T}$ from leaf $i$ to leaf $j$ in $O(\log n)$ time.
\end{property}
\begin{property}\label{property1}
If one can construct data structures for each node $u$ of a segment tree ${\cal T}$ in $O(f(m_u))$ time, 
where $f:{\mathbb N} \rightarrow \mathbb{R}$ is some function 
independent of $n$ and bounded below by a linear function asymptotically, i.e., $f(m)=\Omega(m)$, 
then the running time for construction of data structures for every node in ${\cal T}$ is $O(f(n)\log n)$ time in total.
%then one can construct the whole structure ${\cal T}$ in $O(f(n)\log n)$ time.
\end{property}

At each node $u \in {\cal T}$, we store four types of the information that depend on 
the indices of the vertices spanned by $u$, i.e., $l_u, \ldots, r_u$. 
We will introduce each type in Section~\ref{sec:data_structure}. 
As will be shown there, the four types of the information at $u \in {\cal T}$ can be constructed in $O(m_u \log m_u)$ time.
%Therefore, by Property~\ref{property1}, we have the following lemma.
%\begin{lemma}%[Construction time of ${\cal T}$]
%\label{lem:preprocessing}
%We can construct ${\cal T}$ in $O(n\log^2 n)$ time. 
%\end{lemma}
Therefore, we can construct ${\cal T}$ in $O(n\log^2 n)$ time by Property~\ref{property1}. 

Recall that 
for $i,j \in [1..n]$ with $i < j$, 
it holds $w'(i,j) = \opt(i,j)$.
We give an outline of the algorithm 
that computes $\opt(i,j)$
only for the non-confluent flow model % by applying data structure ${\cal T}$
since a similar argument holds even for the confluent flow model with minor modification. 
%the proof of Lemma~\ref{lem:general_query} for the non-confluent flow model since a similar argument holds even for the confluent flow model with minor modification.
% See Appendix~\ref{app:gen_edge_capacity} for the detail.
%Assume that the data structure ${\cal T}$ with TYPEs~{I, II, III and IV}
%introduced in Section~\ref{sec:data_structure} has been constructed. 
%Assume that data structure ${\cal T}$ 
%introduced in Section~\ref{sec:data_structure} has been constructed. 
% our algorithm consists of four phases to compute the value $w'(i,j) = \opt(i,j)$.
The main task is to 
find a value $z^*$ that minimizes $\Phi^{i,j}(z)$,
i.e., $\opt(i,j)=\Phi^{i,j}(z^*)$. 
By Lemma~\ref{lem:convex}, such the value $z^*$ is the pseudo-intersection point of $\theta^{i,+}(z)$ and $\theta^{j,-}(z)$ on $[W_i,W_{j-1}]$. 

%Recall the definition of $\theta^{i, +, j}(z)$ and $\theta^{i, -, j}(z)$
%shown in \eqref{eq:subfunction_right} and \eqref{eq:subfunction_left}, respectively.
Before explaining our algorithms, we need introduce the following definition:
\begin{definition}\label{def:subup}
For integers $i,\ell,r \in [1..n]$ with $i < \ell \leq r$, 
we denote by $\theta^{i, +, [\ell..r]}(z)$
the upper envelope of functions $\{\theta^{i, +, h}(z) \mid h \in [\ell..r]\}$, that is, 
\begin{eqnarray}\label{eq:subup_right_define}
\theta^{i, +, [\ell..r]}(z) = \max\{ \theta^{i, +, h}(z) \mid h \in [\ell..r]\}. \notag
\end{eqnarray}
For integers $i,\ell,r \in [1..n]$ with $\ell \leq r < i$,
we denote by $\theta^{i, -, [\ell..r]}(z)$
the upper envelope of functions $\{\theta^{i, -, h}(z) \mid h \in [\ell..r]\}$, that is, 
\begin{eqnarray}\label{eq:subup_left_define}
\theta^{i, -, [\ell..r]}(z) = \max\{ \theta^{i, -, h}(z) \mid h \in [\ell..r]\}. \notag
\end{eqnarray}
\end{definition}

\noindent
{\bf Algorithm for computing $\opt(i,j)$ for given $i,j \in [1..n]$ with $i < j$}
%\vspace{-10pt}
\begin{description}
\item [Phase~{1}:] 
Find a set $U$ of the maximal subpath nodes for $P_{{i+1},{j-1}}$
by walking on segment tree ${\cal T}$ from leaf $i+1$ to leaf $j-1$. 
%which takes $O(\log n)$ time  by Property~\ref{property2}.

\item[Phase~{2}:] 
%Construct $\theta^{i,+}(z)$ and $\theta^{j,-}(z)$ 
%using TYPEs~{I} and~{III} stored at $u_1, u_2, \ldots,$ 
%at $u_1, u_2, \ldots, u_m$ 
%in $O(\log^3 n)$ time.
For each $u \in U$, 
compute a real interval ${\cal I}^+_u$ such that 
$\theta^{i,+}(z) = \theta^{i,+,[\ell_u..r_u]}(z)$ holds on any $z \in {\cal I}^+_u$,
and
a real interval ${\cal I}^-_u$ such that 
$\theta^{j,-}(z) = \theta^{j,-,[\ell_u..r_u]}(z)$ holds on any $z \in {\cal I}^-_u$,
both of which are obtained by using information stored at node $u$.
%in $O(\log^2 n)$ time.
%Since $|U| = O(\log n)$ by Property~\ref{property2},
%Phase~2 requires $O(\log^3 n)$ time in total.
%See Appendix~\ref{app:phase2}. 
See Section~\ref{app:phase2}. 

\item[Phase~{3}:] Compute the pseudo-intersection point $z^*$ 
of $\theta^{i,+}(z)$ and $\theta^{j,-}(z)$ on $[W_i,W_{j-1}]$ 
by using real intervals obtained in Phase~{2}.
%in $O(\log^2 n)$ time. 
%See Appendix~\ref{app:phase3}.
See Section~\ref{app:phase3}.
%Note that $z^*$ minimizes $\Phi^{i,j}(z)$,
%i.e., $\opt(i,j)=\Phi^{i,j}(z^*)$ 
%by the convexity of $\Phi^{i,j}(z)$.
%(See Lemma~\ref{lem:convex} in Appendix~\ref{app:convex}). 
\item[Phase~{4}:] 
%Compute $\Phi^{i,j}(z^*)$
%using TYPEs II and IV stored at $u_1, u_2, \ldots,$
%in $O(\log^2 n)$ time.
Compute $\opt(i,j) = \Phi^{i,j}(z^*)$ as follows:
By formula~\eqref{eq:at_subpath}, we have 
\begin{eqnarray*}
\Phi^{i,j}(z^*) &=& \int_{0}^{z^*} \theta^{i, +}(t)dt + \int_{z^*}^{W_n} \theta^{j, -}(t)dt \\
&=& 
\sum_{u \in U}\left\{
\int_{{\cal I}^+_u \cap [0, z^*]} \theta^{i, +, [\ell_u..r_u]}(t)dt + 
\int_{{\cal I}^-_u \cap [z^*, W_n]} \theta^{j, -, [\ell_u..r_u]}(t)dt
\right\}. 
\end{eqnarray*}
For each $u \in U$, we compute integrals $\int\theta^{i, +, [\ell_u..r_u]}(t)dt$ and 
$\int\theta^{j, -, [\ell_u..r_u]}(t)dt$ 
by using the information stored at $u$. 
%in $O(\log n)$ time, 
%and then obtain $\opt(i,j)$ in $O(\log^2 n)$ time since $|U| = O(\log n)$. 
%See Appendix~\ref{app:phase4} for the details.
See Section~\ref{app:phase4} for the details.
\end{description}
For the cases of $i=0$ or $j=n+1$, 
we can also compute $w'(0,j) = \Phi^{j, -}(0)$ and $w'(i,n+1) = \Phi^{i, +}(W_n)$ by the same operations except for Phase~{3}.

We give the following lemma about the running time of the above algorithm 
for the case of general edge capacities. 
%See Appendix~\ref{app:gen_edge_capacity} for the proof. 
See Section~\ref{app:gen_edge_capacity} for the proof.
\begin{lemma}[Key lemma for general capacity]\label{lem:general_query}
%Let us suppose that TYPEs~{I to IV} of ${\cal T}$ is available. 
Let us suppose that a segment tree ${\cal T}$ is available. 
Given two integers $i,j \in [0..n+1]$ with $i<j$,
one can compute a value $w'(i,j)$ 
%$\opt(i, j)$, $\Phi^{i, +}(W_n)$, $\Phi^{j, -}(0)$ 
in $O(\log^3 n)$ time 
for the confluent/non-confluent flow model.
\end{lemma}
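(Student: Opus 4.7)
The plan is to bound the cost of each of the four phases described before the lemma so that their sum is $O(\log^3 n)$. The strategy rests on three primitive operations that the data structures of Section~\ref{sec:data_structure} will support at every node $u \in {\cal T}$ in $O(\log n)$ time apiece: evaluation of $\theta^{i,+,[\ell_u..r_u]}(z)$ (and of $\theta^{j,-,[\ell_u..r_u]}(z)$) at a query $z$, identification of the piece of the envelope active at $z$, and evaluation of a definite integral of that envelope over a query subinterval. Phase~1 runs in $O(\log n)$ time by Property~\ref{property2}, producing a set $U$ with $|U| = O(\log n)$. For Phase~2, I would sweep across $U$ from left to right; at each $u \in U$ the interval ${\cal I}^+_u$ is pinned down by comparing the envelope at $u$ with the envelopes at the neighbouring nodes using $O(\log n)$ evaluation queries, and analogously for ${\cal I}^-_u$. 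This yields $O(\log^2 n)$ for Phase~2. Phase~4 calls the integral primitive once per node in $U$ on each of $\theta^{i,+,[\ell_u..r_u]}$ and $\theta^{j,-,[\ell_u..r_u]}$, contributing $O(\log^2 n)$.

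The main obstacle is Phase~3. By the proof of Lemma~\ref{lem:convex}, $\theta^{i,+}$ is strictly increasing and $\theta^{j,-}$ is strictly decreasing on the relevant range, so the pseudo-intersection $z^*$ is unique and lies in a unique pair of intervals $({\cal I}^+_{u^+}, {\cal I}^-_{u^-})$ with $u^+, u^- \in U$. I would locate this pair by an outer binary search across the $O(\log n)$ ordered boundaries of $\{{\cal I}^+_u\} \cup \{{\cal I}^-_u\}$, where each comparison evaluates $\theta^{i,+}$ and $\theta^{j,-}$ at a boundary point in $O(\log n)$ time via the evaluation primitive at the responsible node of $U$; this outer search costs $O(\log^2 n)$. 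Inside the pair $(u^+, u^-)$, the exact crossing is pinpointed by an inner binary search over the pieces of the two local envelopes using the identification primitive, again $O(\log n)$ steps at $O(\log n)$ per comparison for $O(\log^2 n)$. Summing the four phases gives $O(\log^3 n)$ for $w'(i,j)$ whenever $1 \le i < j \le n$.

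For the boundary cases $i = 0$ or $j = n+1$, we have $w'(0,j) = \Phi^{j,-}(0)$ or $w'(i, n+1) = \Phi^{i,+}(W_n)$, which are obtained by skipping Phase~3 entirely and integrating $\theta^{j,-}$ (resp.\ $\theta^{i,+}$) over the full range $[0, W_n]$ using only Phases~1, 2, and~4; the $O(\log^3 n)$ bound is preserved. For the confluent flow model, the minimizer of $\Phi^{i,j}$ over $\{W_h \mid h \in [i..j]\}$ is, by the convexity of $\Phi^{i,j}$ established in Lemma~\ref{lem:convex}, either the predecessor or the successor of the pseudo-intersection $z^*$ in this finite set, so the confluent case reduces to the non-confluent computation together with an $O(\log n)$ snap-to-grid step and one extra pair of integral evaluations; the overall $O(\log^3 n)$ bound carries over.
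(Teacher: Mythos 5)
Your per‑phase bounds do not add up: you claim $O(\log n)$, $O(\log^2 n)$, $O(\log^2 n)$, $O(\log^2 n)$ for Phases~1--4, which sums to $O(\log^2 n)$, yet you conclude the total is $O(\log^3 n)$. The arithmetic inconsistency is a symptom of a real under‑count in Phase~2, which in the paper's analysis is in fact the sole bottleneck at $O(\log^3 n)$.

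The gap is concentrated in your treatment of Phase~2. The sentence "the interval ${\cal I}^+_u$ is pinned down by comparing the envelope at $u$ with the envelopes at the neighbouring nodes using $O(\log n)$ evaluation queries'' does not describe a correct procedure. First, the interval on which a single $\theta^{i,+,[\ell_u..r_u]}$ dominates is determined by its comparison against the upper envelope of \emph{all} pieces, not only its two neighbours in $U$; when a new node $u_{s+1}$ is processed, its piece can overtake the accumulated envelope far to the left, so the already‑computed intervals ${\cal I}^+_{u_1},\dots,{\cal I}^+_{u_s}$ must be truncated or even emptied. The paper handles this by an inductive sweep where each step does $O(\log n)$ binary‑search comparisons over the $O(\log n)$ previously stored boundaries, each comparison itself costing $O(\log n)$ via the TYPE~I/III structures; that is $O(\log^2 n)$ per node and $O(\log^3 n)$ total. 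Second, your proposal ignores the indirection that makes Phase~2 delicate in the general‑capacity case: the trees store $\theta^{\ell_u,+,[\ell_u+1..r_u]}$ and $\bar{\theta}^{\ell_u,+,[\ell_u+1..r_u]}(c,\cdot)$, \emph{not} $\theta^{i,+,[\ell_u..r_u]}$, and each ${\cal I}^+_u$ has to be decomposed into three subintervals ${\cal I}^+_{u,1},{\cal I}^+_{u,2},{\cal I}^+_{u,3}$ according to whether the $\ell_u$‑piece, a capacity‑clamped $\bar\theta$ piece with slope $1/C_{i,\ell_u}$, or the stored envelope piece is active (formula~\eqref{eq:interval_condition_phase2}). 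Finding the boundary between ${\cal I}^+_{u,2}$ and ${\cal I}^+_{u,3}$ requires an extra binary search for the smallest index $q$ with $C_{i,q}<C_{i,\ell_u}$ and then a crossing search within the envelope — steps with no analogue in your sketch. In fact, the very existence of the separate Lemma~\ref{lem:uniform_query} (which achieves $O(\log^2 n)$ precisely because this capacity‑dependent splitting disappears when capacities are uniform) should have been a flag that Phase~2 cannot run in $O(\log^2 n)$ in the general‑capacity setting.

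Your Phases~1, 3, 4, and the reduction of the boundary cases $i=0$, $j=n+1$ and the confluent‑model "snap to grid'' are all in the spirit of the paper and correctly bounded. But without a correct $O(\log^3 n)$ account of Phase~2 — including the inductive left‑to‑right maintenance of the ${\cal I}^+_{u,h}$ triple, the truncation of earlier intervals, and the capacity‑dependent use of $\bar\theta$ — the proposal does not establish the lemma.
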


%Since the running time for preprocessing for data structures is $O(n \log^2 n)$ by Lemma~\ref{lem:preprocessing}, 
Recalling that the running time for construction of data structure ${\cal T}$ is $O(n \log^2 n)$, 
Lemmas~\ref{lem:klink},~\ref{lem:concave_monge} and~\ref{lem:general_query} imply the following main theorem. 
\begin{theorem}[Main theorem for general capacity]\label{thm:general}
Given a dynamic flow path network $\mathcal{P}$, there exists an algorithm that finds an optimal $k$-sink 
under the confluent/non-confluent flow model
in time $\min\{ O(kn\log^3n),n 2^{O(\sqrt{\log k \log\log n})}$ $\log^3n \}$.
\end{theorem}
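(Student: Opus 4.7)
The plan is to prove the theorem purely by composing the three ingredients already assembled in the paper: the reduction to a minimum $(k{+}1)$-link path problem, the concave Monge property of its edge weights, and the fast per-query evaluation from the key lemma. No new combinatorial argument is needed; the work is just bookkeeping of running times.

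First, I would invoke the reduction described in Section~\ref{sec:algorithm}: given the input dynamic flow path network $\mathcal{P}$ on $n$ vertices, build the weighted complete DAG $G=(V',E',w')$ on $n+2$ vertices with the weight function $w'$ defined by~\eqref{eq:klink_weight}. The discussion following~\eqref{eq:klink_weight} shows that a minimum $(k{+}1)$-link path from $v'_0$ to $v'_{n+1}$ in $G$ corresponds exactly to an optimal $k$-sink ${\bf x}$ together with an optimal divider ${\bf d}$ for $\mathcal{P}$, under either evacuation model. By Lemma~\ref{lem:concave_monge}, the weight function $w'$ satisfies the concave Monge property in both models, so Lemma~\ref{lem:klink} applies to $G$.

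Next, I would account for the running time. Lemma~\ref{lem:klink} yields an algorithm for the minimum $(k{+}1)$-link path problem whose work is $\min\{O(kn), n 2^{O(\sqrt{\log k \log\log n})}\}$, measured in the number of edge-weight evaluations plus constant work per evaluation. To use it, we must supply an oracle for $w'(i,j)$. Here Lemma~\ref{lem:general_query} gives exactly such an oracle: once the segment tree ${\cal T}$ is built, each value $w'(i,j)$ can be computed in $O(\log^3 n)$ time. The preprocessing to construct ${\cal T}$, together with the four types of information stored at each node, takes $O(n\log^2 n)$ time in total by Property~\ref{property1}.

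Finally, I would combine these bounds. The total running time is
\begin{equation*}
O(n\log^2 n) \;+\; \min\bigl\{O(kn),\, n 2^{O(\sqrt{\log k \log\log n})}\bigr\}\cdot O(\log^3 n),
\end{equation*}
which simplifies to $\min\bigl\{O(kn\log^3 n),\, n 2^{O(\sqrt{\log k \log\log n})}\log^3 n\bigr\}$, since the preprocessing term is absorbed into the second factor. The same bound holds for both the confluent and non-confluent models because Lemmas~\ref{lem:concave_monge} and~\ref{lem:general_query} are stated for both. There is no real obstacle: all the difficulty was already paid for in Lemmas~\ref{lem:concave_monge} and~\ref{lem:general_query}, and the proof of the main theorem is just the straightforward composition of these three lemmas.
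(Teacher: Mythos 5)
Your proof is correct and follows precisely the same route as the paper: reduce to the minimum $(k{+}1)$-link path problem via formula~\eqref{eq:klink_weight}, invoke Lemma~\ref{lem:concave_monge} to get the concave Monge property, apply Lemma~\ref{lem:klink} with the per-query oracle of Lemma~\ref{lem:general_query}, and absorb the $O(n\log^2 n)$ preprocessing into the final bound. The paper states exactly this composition in the sentence preceding Theorem~\ref{thm:general}, so there is nothing missing or different in your argument.
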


When the capacities of ${\cal P}$ are uniform, 
we can improve the running time for computing $w'(i,j)$
to $O(\log^2 n)$ time with minor modification. 
%we can improve the running time for Phase~{2} to $O(\log^2 n)$ time by a minor modification. 
%The main reason is that we do not need to use queries with TYPE~{III}, and 
%the running time $O(\log n)$ of query with TYPE~{I} is reduced to $O(1)$ by Lemma~\ref{lem:type1}. 
%See Appendix~\ref{app:uniform_edge_capacity}. 
See Section~\ref{app:uniform_edge_capacity}.

\begin{lemma}[Key lemma for uniform capacity]\label{lem:uniform_query}
%Let us suppose that TYPEs~{I and II} of ${\cal T}$ is available. 
Let us suppose that a segment tree ${\cal T}$ is available. 
Given two integers $i,j \in [1..n]$ with $i<j$, 
one can compute a value $w'(i,j)$ in $O(\log^2 n)$ time 
for the confluent/non-confluent flow model when the capacities are uniform. 
\end{lemma}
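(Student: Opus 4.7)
The plan is to execute exactly the four-phase algorithm outlined in Section~\ref{sec:algorithm} on the segment tree ${\cal T}$, but with node-level data structures tailored to the uniform-capacity setting. The key simplification is that when every edge capacity equals a common value $c$, formula~\eqref{eq:subfunction_right} becomes
\[
\theta^{i,+,h}(z) \;=\; \tfrac{z}{c} + b_h - \tau L_{1,i} \quad \text{for } z > W_{h-1}, \qquad b_h \;:=\; \tau L_{1,h} - W_{h-1}/c,
\]
and $0$ otherwise, with a symmetric identity for $\theta^{j,-,h}$. Thus for fixed $i$ the family $\{\theta^{i,+,h}\}_h$ consists of parallel half-lines of common slope $1/c$, vertically shifted by the index-only quantity $b_h$; the only interaction across $h$ is which half-lines are \emph{active} at a given $z$ (i.e.\ $z>W_{h-1}$) and which active one maximizes $b_h$.

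Building on this, I would augment each segment-tree node $u$ with: the sorted list of thresholds $\{W_{h-1} : h \in [\ell_u..r_u]\}$, a range-maximum structure on $\{b_h\}$ keyed by that sorted order, and a prefix-sum array over the $b_h$ in the same order. Each node's structures occupy $O(m_u)$ space and are built in $O(m_u \log m_u)$ time, which by Property~\ref{property1} preserves the $O(n\log^2 n)$ global preprocessing. Phase~{1} is unchanged and returns $|U|=O(\log n)$ maximal subpath nodes by Property~\ref{property2}. In Phase~{2}, for each $u \in U$ the interval ${\cal I}^+_u$ on which $\theta^{i,+}$ coincides with $\theta^{i,+,[\ell_u..r_u]}$ is determined by one predecessor search on the sorted thresholds followed by one range-max query on $\{b_h\}$, costing $O(\log n)$ per node; symmetrically for ${\cal I}^-_u$. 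The total for Phase~{2} is therefore $O(\log^2 n)$.

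For Phase~{3} I would search for $z^*$ by a two-level binary search: first over the $O(\log n)$ nodes of $U$, exploiting monotonicity of $\theta^{i,+}$ and $-\theta^{j,-}$ (established in the proof of Lemma~\ref{lem:convex}) to locate the unique $u$ whose pair of intervals straddles the sign change of $\theta^{i,+}(z)-\theta^{j,-}(z)$; then, inside that node, over the breakpoints of the local envelope. Under uniform capacity the envelopes on each sub-interval have slopes $\pm 1/c$, so once the two dominating indices $h,h'$ at a probed $z$ are identified by range-max queries, the equation $\theta^{i,+,h}(z) = \theta^{j,-,h'}(z)$ is linear and is solved for $z^*$ in $O(1)$. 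Each probe costs $O(\log n)$ and $O(\log n)$ probes suffice, giving $O(\log^2 n)$ for Phase~{3}. Phase~{4} evaluates $\int \theta^{i,+,[\ell_u..r_u]}(t)\,dt$ on ${\cal I}^+_u \cap [0,z^*]$, and symmetrically for $\theta^{j,-,[\ell_u..r_u]}$, by summing the trivial $\tfrac{t}{c}$ contribution and a piecewise-constant contribution from the running maximum of $b_h$; the latter decomposes into $O(\log n)$ prefix-max segments handled by the prefix-sum array in $O(\log n)$ per node, again totaling $O(\log^2 n)$.

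The main obstacle will be Phase~{3}: the two envelopes are piecewise linear \emph{with jump discontinuities} at each $W_{h-1}$, so the pseudo-intersection may not be an interior solution of a linear equation but instead coincide with a threshold endpoint. The remedy is to treat the search as a sign-search for $\theta^{i,+}(z)-\theta^{j,-}(z)$, which is monotone by Lemma~\ref{lem:convex}; at each probe the combination of a predecessor search (locating the two relevant thresholds) and two range-max queries yields the correct active pieces, after which $z^*$ is either the analytic root of a linear equation or the lower/upper threshold witnessing the sign change. This is precisely where the uniform-capacity assumption saves a factor of $\log n$ relative to Lemma~\ref{lem:general_query}: the common slope $1/c$ eliminates the nested binary search over heterogeneous slopes that is forced in the general-capacity envelope structure.
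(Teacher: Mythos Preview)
Your approach is sound and reaches the claimed $O(\log^2 n)$ bound, but by a different route than the paper. The paper does not introduce new per-node structures for the uniform case; it simply observes that the general-capacity algorithm of Section~\ref{app:gen_edge_capacity} specializes: when every $C_{i,j}$ equals $c$, the functions $\bar{\theta}^{\ell_u,+,[\ell_u+1..r_u]}(C_{i,\ell_u},\cdot)$ and $\theta^{\ell_u,+,[\ell_u+1..r_u]}(\cdot)$ coincide, so the subinterval ${\cal I}^+_{u,2}$ disappears and with it every binary search that invokes Lemma~\ref{lem:type3}. Those TYPE~III probes were exactly the $O(\log^2 n)$-cost substeps inside each of the $O(\log n)$ induction steps of Phase~{2}, so Phase~{2} drops from $O(\log^3 n)$ to $O(\log^2 n)$ with no further change. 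Your parallel-half-lines reduction to range-max queries on the intercepts $b_h$ is more elementary and self-contained, in that it does not require first proving Lemma~\ref{lem:general_query}; the paper's argument, by contrast, is essentially a one-line corollary once the general machinery is in place.

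Two details in your write-up need repair, though neither breaks the bound. In Phase~{2}, a single predecessor search plus one range-max query does not pin down the endpoint of ${\cal I}^+_u$: you must also binary-search within $u$ (via range-max probes) for the first index whose $b_h$ exceeds the running maximum carried over from earlier nodes---still $O(\log n)$ per node, so the total is unchanged. In Phase~{4}, a node can have $\Theta(m_u)$ prefix-max segments, not $O(\log n)$, and a plain prefix sum of the $b_h$ does not give the staircase integral $\sum_p (W_{h_{p+1}-1}-W_{h_p-1})\,b_{h_p}$ that you actually need; the correct structure is a cumulative-integral list at the envelope breakpoints, which is precisely the paper's TYPE~II, after which a single $O(\log n)$ lookup per node suffices.
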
 
\begin{theorem}[Main theorem for uniform capacity]\label{thm:uniform}
Given a dynamic flow path network $\mathcal{P}$ with a uniform capacity, 
there exists an algorithm that finds an optimal $k$-sink
under the confluent/non-confluent flow model in time\\
$\min\{ O(kn\log^2n), n 2^{O(\sqrt{\log k \log\log n})} \log^2n \}$.
\end{theorem}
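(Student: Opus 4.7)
The plan is to follow essentially the same argument that yields Theorem~\ref{thm:general}, with the single substitution of Lemma~\ref{lem:uniform_query} in place of Lemma~\ref{lem:general_query}. First I would reduce the given minsum $k$-sink problem on $\mathcal{P}$ to a minimum $(k+1)$-link path problem on the weighted complete DAG $G=(V',E',w')$ defined by formula~(\ref{eq:klink_weight}). As argued right after that formula, any $(k+1)$-link path from $v'_0$ to $v'_{n+1}$ in $G$ encodes a $k$-sink ${\bf x}=(v_{a_1},\ldots,v_{a_k})$ together with the best divider ${\bf d}$ for that choice, and the sum of edge weights equals $\at(\mathcal{P},{\bf x},{\bf d})$. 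By Lemma~\ref{lem:concave_monge}, $w'$ satisfies the concave Monge property under both the confluent and non-confluent flow models, so Schieber's algorithm of Lemma~\ref{lem:klink} applies and issues only $\min\{O(kn),\, n\,2^{O(\sqrt{\log k\,\log\log n})}\}$ queries to $w'$ in total.

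Second, I would construct the segment tree $\mathcal{T}$ over the vertex indices $[1..n]$ as a preprocessing step. As noted in Section~\ref{sec:algorithm}, each node $u\in\mathcal{T}$ stores data structures built in $O(m_u\log m_u)$ time, so Property~\ref{property1} (with $f(m)=m\log m$) gives a total construction cost of $O(n\log^2 n)$. With $\mathcal{T}$ in hand, Lemma~\ref{lem:uniform_query} guarantees that, under the uniform-capacity assumption, every query $w'(i,j)$ for $i,j\in[0..n+1]$ with $i<j$ can be answered in $O(\log^2 n)$ time for either evacuation model.

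Combining the two ingredients, Schieber's algorithm performs $\min\{O(kn),\, n\,2^{O(\sqrt{\log k\,\log\log n})}\}$ weight queries, each of which costs $O(\log^2 n)$ time by Lemma~\ref{lem:uniform_query}, giving a total query cost of $\min\{O(kn\log^2 n),\, n\,2^{O(\sqrt{\log k\,\log\log n})}\log^2 n\}$. The $O(n\log^2 n)$ preprocessing cost of $\mathcal{T}$ is dominated by (or identical to) the first term in this minimum and is dominated by the second since $2^{O(\sqrt{\log k\,\log\log n})}=n^{o(1)}$, so the overall running time matches the bound claimed in the theorem.

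The main technical work in this chain already sits inside Lemma~\ref{lem:uniform_query}, which we are allowed to assume; in the paper it is discharged in Section~\ref{app:uniform_edge_capacity} by observing that under uniform capacity the upper-envelope functions $\theta^{i,+,[\ell..r]}$ and $\theta^{j,-,[\ell..r]}$ used in Phases 2--4 of the query algorithm admit simpler structure, saving one $\log n$ factor over the general-capacity bound of Lemma~\ref{lem:general_query}. Granted that lemma, Theorem~\ref{thm:uniform} is a routine composition of Lemmas~\ref{lem:klink}, \ref{lem:concave_monge}, and~\ref{lem:uniform_query}, exactly parallel to the proof of Theorem~\ref{thm:general}.
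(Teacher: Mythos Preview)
Your proposal is correct and matches the paper's approach exactly: the paper states Theorem~\ref{thm:uniform} immediately after Lemma~\ref{lem:uniform_query} without a separate proof, relying on the same composition of Lemmas~\ref{lem:klink}, \ref{lem:concave_monge}, and~\ref{lem:uniform_query} (plus the $O(n\log^2 n)$ preprocessing for~$\mathcal{T}$) that you spell out, in direct analogy with the derivation of Theorem~\ref{thm:general} from Lemma~\ref{lem:general_query}.
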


%%%%%%%%%%%%%%%%%%%%%%%%%%%%%%
%Section 4
%%%%%%%%%%%%%%%%%%%%%%%%%%%%%%
\section{Data Structures Associated with Nodes of ${\cal T}$}\label{sec:data_structure}

In the rest of the paper, we introduce novel data structures 
associated with each node $u$ of segment tree ${\cal T}$, 
which are used to compute $\opt(i,j)$ in $O({\rm poly}\log n)$ time. 
Note that our data structures generalize the {\it capacities and upper envelopes tree (CUE tree)} provided by Bhattacharya et al.~\cite{BhattacharyaGHK17}.

Recall the algorithm for computing $\opt(i,j)$ shown in Section~\ref{sec:algorithm}.
To explain the data structures, 
let us see more precisely how the algorithm performs in Phase~2.
Confirm that for $z \in [W_i, W_{j-1}]$,
it holds $\theta^{i,+}(z) = \max\{\theta^{i,+,[\ell_u..r_u]}(z) \mid u \in U\}$, where $U$ is a set of the maximal subpath nodes for $P_{{i+1},{j-1}}$.
Let us focus on function $\theta^{i,+,[\ell_u..r_u]}(z)$ for a node $u \in U$
only on interval $(W_{\ell_u-1},W_n]$
since
it holds $\theta^{i,+,[\ell_u..r_u]}(z)=0$ if $z \le W_{\ell_u-1}$.
Interval $(W_{\ell_u-1},W_n]$ consists of three left-open-right-closed intervals ${\cal J}^{+}_{u,1}$, ${\cal J}^{+}_{u,2}$ and ${\cal J}^{+}_{u,3}$ 
that satisfy the following conditions:
(i) For $z \in {\cal J}^{+}_{u,1}$, $\theta^{i,+,[\ell_u..r_u]}(z)=\theta^{i,+,\ell_u}(z)$.
(ii) For $z \in {\cal J}^{+}_{u,2}$, $\theta^{i,+,[\ell_u..r_u]}(z)=\theta^{i,+,[\ell_u+1..r_u]}(z)$ and
its slope is $1/{C_{i,\ell_u}}$.
(iii) For $z \in {\cal J}^{+}_{u,3}$, $\theta^{i,+,[\ell_u..r_u]}(z)=\theta^{i,+,[\ell_u+1..r_u]}(z)$ and
its slope is greater than $1/{C_{i,\ell_u}}$.
See also Fig.~\ref{fig:5}.
Thus in Phase~2, 
the algorithm computes ${\cal J}^{+}_{u,1}$, ${\cal J}^{+}_{u,2}$ and ${\cal J}^{+}_{u,3}$ for all $u \in U$,
and combines them one by one 
to obtain 
%the information of $\theta^{i,+}(z)$
intervals ${\cal I}^+_u$ for all $u \in U$.
To implement these operations efficiently, 
we construct some data structures at each node $u$ of ${\cal T}$.
To explain the data structures stored at $u$,
%more details on ${\cal J}^{+}_{u,2}$, 
we introduce the following definition:
\begin{definition}\label{def:bar_subup} 
For integers $i,\ell,r \in [1..n]$ with $i < \ell \leq r$ and a positive real $c$,
let $\bar{\theta}^{i, +, [\ell..r]}(c,z) = \max\{ \bar{\theta}^{i, +, h}(c,z) \mid h \in [\ell..r]\}$, where
\begin{eqnarray}\label{eq:bar_subfunction_right}
\bar{\theta}^{i,+,j}(c, z) = 
\left\{
\begin{array}{ll}
0 & \text{if } z \leq W_{j-1}, \\
\frac{z - W_{j-1}}{c} + \tau\cdot L_{i,j} & \text{if } z > W_{j-1}.
\end{array}
\right.
\end{eqnarray} 
For integers $i,\ell,r \in [1..n]$ with $\ell \leq r < i$ and a positive real $c$,
let $\bar{\theta}^{i, -, [\ell..r]}(c,z) = \max\{ \bar{\theta}^{i, -, h}(c,z) \mid h \in [\ell..r]\}$, where
\begin{eqnarray}\label{eq:bar_subfunction_left}
\bar{\theta}^{i,-,j}(c, z) = 
\left\{
\begin{array}{ll}
\frac{W_{j} - z}{c} + \tau\cdot L_{j,i} & \text{if } z < W_j, \\
0 & \text{if } z \geq W_j.
\end{array}
\right.
\end{eqnarray} 
\end{definition}
We can see that 
for $z \in {\cal J}^{+}_{u,2}$,
$\theta^{i,+,[\ell_u+1..r_u]}(z)=\bar{\theta}^{\ell_u, +, [\ell_u+1..r_u]}(C_{i,\ell_u},z)+\tau \cdot L_{i,\ell_u}$,
and
for $z \in {\cal J}^{+}_{u,3}$,
$\theta^{i,+,[\ell_u+1..r_u]}(z)=\theta^{\ell_u,+,[\ell_u+1..r_u]}(z)+\tau \cdot L_{i,\ell_u}$.
We then store at $u$ of ${\cal T}$ the information for computing 
in $O({\rm poly}\log n)$ time
$\theta^{\ell_u,+,[\ell_u+1..r_u]}(z)$ for any $z \in [0,W_n]$ 
as TYPE~I, and also
one for computing in $O({\rm poly}\log n)$ time
$\bar{\theta}^{\ell_u,+,[\ell_u+1..r_u]}(c, z)$
for any $c>0$ and any $z \in [0,W_n]$ as TYPE~III. 

\begin{figure}[tb]%[htbp]
  \begin{center}
    \includegraphics[width=7cm,pagebox=cropbox,clip]{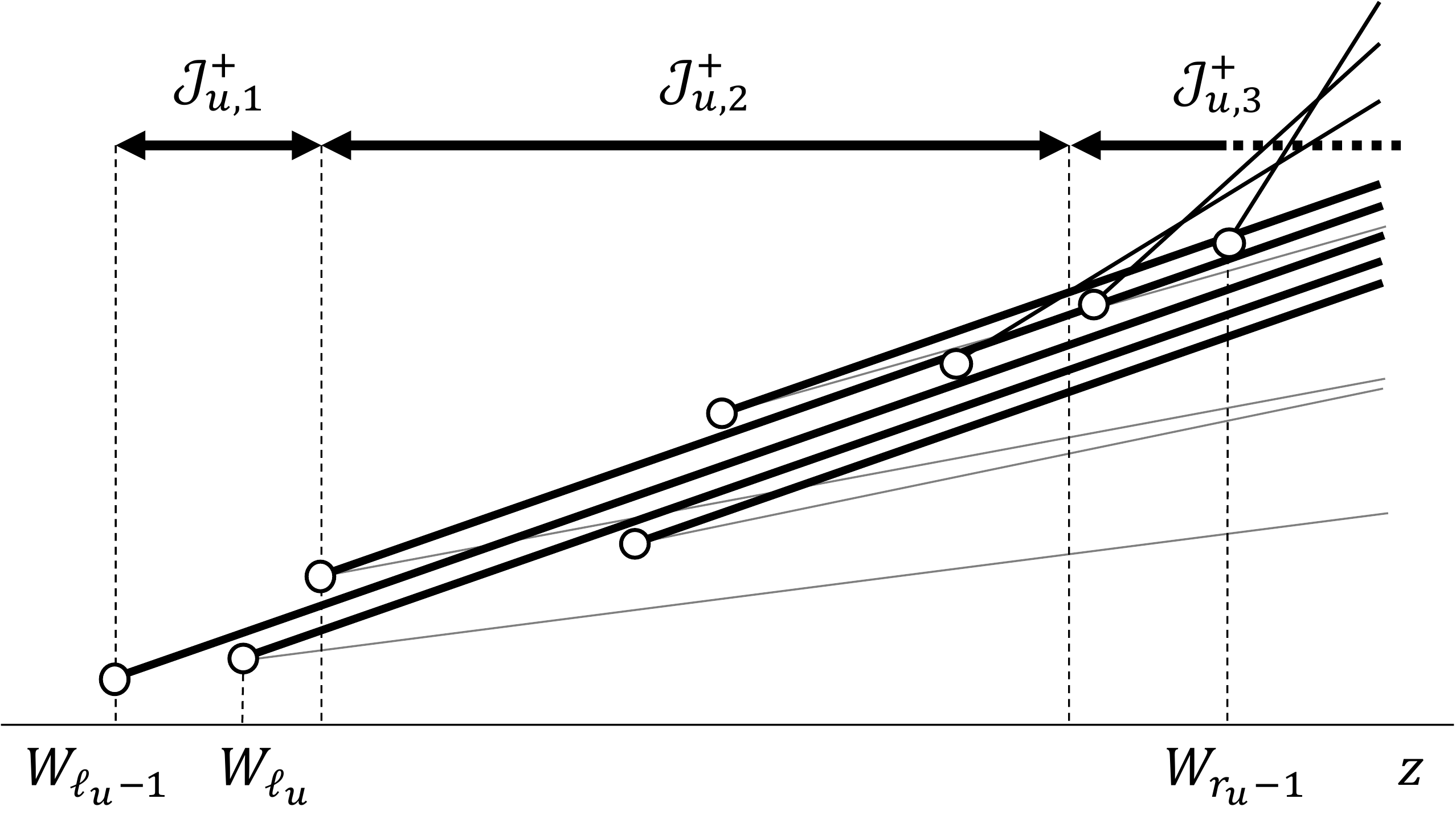}
    \caption{Illustration of ${\cal J}^{+}_{u,1}$, ${\cal J}^{+}_{u,2}$ and ${\cal J}^{+}_{u,3}$.
    The thick half lines have the same slope of $1/C_{i,\ell_u}$,
    the gray half lines have slopes $\le 1/C_{i,\ell_u}$,
    and the regular half lines have slopes $>1/C_{i,\ell_u}$.
    The upper envelope of all the thick half lines and the regular half lines is function $\theta^{i,+,[\ell_u..r_u]}(z)$.
    %The upper envelope of all the gray half lines and the regular half lines is function $\theta^{\ell_u,+,[\ell_u+1..r_u]}(z)+\tau \cdot L_{i,\ell_u}$.    
    }
    \label{fig:5}
  \end{center}
  %\vspace{-0.5cm}
\end{figure}

In Phase~4, the algorithm requires computing integrals 
$\int_0^z \theta^{\ell_u, +, [\ell_u+1.. r_u]}(t)dt$ for any $z \in [0,W_n]$, and 
%$\int_0^z\max\{\bar{\theta}^{\ell_u,+,h}(c, t) \mid h \in [\ell_u+1..r_u]\}dt$ for any $c>0$ and any $z \in [0,W_n]$,
$\int_0^z \bar{\theta}^{\ell_u,+,[\ell_u+1..r_u]}(c, t) dt$ for any $c>0$ and any $z \in [0,W_n]$, 
for which the information is stored at each $u \in {\cal T}$ as TYPEs~II and IV, respectively.

In a symmetric manner, we also store at each $u \in {\cal T}$ the information for computing
$\theta^{r_u, -, [\ell_u.. r_u-1]}(z)$, 
$\int_z^{W_n} \theta^{r_u, -, [\ell_u.. r_u-1]}(t)dt$,
$\bar{\theta}^{r_u,-, [\ell_u.. r_u-1]}(c, z)$, and
%$\max\{\bar{\theta}^{r_u,-,h}(c, z) \mid h \in [\ell_u.. r_u-1]\}$, and
$\int_z^{W_n}\bar{\theta}^{r_u,-,[\ell_u..r_u-1]}(c, t)dt$ 
%$\int_z^{W_n}\max\{\bar{\theta}^{r_u,-,h}(c, t) \mid h \in [\ell_u.. r_u-1]\}dt$
as TYPEs~{I}, II, III, and~IV, respectively.

Let us introduce what information is stored as TYPEs~{I}--{IV} at $u \in {\cal T}$. 
%-----TYPE I --------
%\subsection{TYPE I: Functions $\theta^{\ell_u,+,[\ell_u+1..r_u]}(z)$ and $\theta^{r_u, -, [\ell_u.. r_u-1]}(z)$}\label{sec:type1}

% The first type of information, called TYPE I, at each node $u \in {\cal T}$
% is used to compute $\theta^{\ell_u,+,[\ell_u+1.. r_u]}(z)$ and $\theta^{r_u, -, [\ell_u.. r_u-1]}(z)$
% for given $z$. 
\noindent
{\bf TYPE~I.} We give the information only for computing $\theta^{\ell_u,+,[\ell_u+1..r_u]}(z)$ 
stored at $u \in {\cal T}$ as TYPE~I 
since the case for $\theta^{r_u, -, [\ell_u..r_u-1]}(z)$ is symmetric. 
By Definition~\ref{def:subup}, 
%we have $\theta^{\ell_u, +, [\ell_u+1.. r_u]}(z) = \max\{ \theta^{\ell_u, +, h}(z) \mid h = \ell_u+1, \ldots, r_u \}$, that is, 
%\theta^{r_u, -, [\ell_u.. r_u-1]}(z) = \max_{ h = \ell_u, \ldots, r-1}\{ \theta^{j, -, h}(z) \}.
the function $\theta^{\ell_u, +, [\ell_u+1.. r_u]}(z)$ is the upper envelope of $m_u$ functions 
$\theta^{\ell_u, +, h}(z)$ for $h \in [\ell_u+1..r_u]$.
Let ${\cal B}^{u,+} = (b^{u,+}_1 = 0, b^{u,+}_2, \ldots, b^{u,+}_{N^{u,+}} = W_n)$ denote
a sequence of breakpoints of $\theta^{\ell_u, +, [\ell_u+1.. r_u]}(z)$, where $N^{u,+}$ is the number of breakpoints.
For each $p \in [1..N^{u,+} - 1]$,
let $H^{u,+}_p \in [\ell_u+1..r_u]$ such that 
$\theta^{\ell_u, +, [\ell_u+1.. r_u]}(z)=\theta^{\ell_u, +, H^{u,+}_p}(z)$ holds for any $z \in (b^{u,+}_{p}, b^{u,+}_{p+1}]$.
As TYPE~{I}, each node $u \in {\cal T}$ is associated with following two lists:

1. Pairs of breakpoint $b^{u,+}_p$ and value 
$\theta^{\ell_u, +, [\ell_u+1.. r_u]}(b^{u,+}_p)$, and

2. Pairs of range $(b^{u,+}_{p}, b^{u,+}_{p+1}]$ and index $H^{u,+}_p$.\\
%Note that the above lists can be constructed in $O(m_u \log m_u)$ time for each $u \in {\cal T}$. 
%{\color{blue}
Note that the above lists can be constructed in $O(m_u \log m_u)$ time for each $u \in {\cal T}$ as follows. 
Applying the result shown by Hershberger~\cite{Hershberger89}, we construct all lists for TYPE~I, efficiently.
\begin{lemma}[\cite{Hershberger89}]\label{lem:Hershberger}
When we have $n$ line segments, 
there exists an algorithm that computes the upper envelope of these segments in $O(n \log n)$ time.
\end{lemma}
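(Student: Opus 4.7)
The plan is to employ a divide-and-conquer strategy. Partition the $n$ input segments arbitrarily into two groups of size $\lceil n/2 \rceil$ and $\lfloor n/2 \rfloor$, recursively compute the upper envelope of each group, and then merge the two envelopes into the upper envelope of the entire set. Let $T(n)$ be the time of this procedure and $M(n)$ the merge time. We then have the recurrence $T(n) = 2T(n/2) + M(n)$, which solves to $T(n) = O(n \log n)$ provided that $M(n) = O(n)$.

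For the merge step, I would represent each of the two recursively computed envelopes $E_1$ and $E_2$ as a list of breakpoints together with the identity of the segment realizing the envelope between consecutive breakpoints, sorted by $x$-coordinate. I then sweep a vertical line from left to right, maintaining a pointer into each list to record the currently active segment of $E_1$ and of $E_2$. At each event—either the next breakpoint of $E_1$, the next breakpoint of $E_2$, or the intersection of the two currently active segments—I determine which envelope is higher immediately after the event and append the appropriate record to the output. Each event is processed in $O(1)$ time, and each breakpoint of $E_1$ or $E_2$ is examined at most a constant number of times, so the merge runs in time proportional to the size of the inputs plus the size of the output.

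The main obstacle to proving $M(n) = O(n)$ directly is that the upper envelope of $n$ line segments can have combinatorial complexity $\Theta(n\,\alpha(n))$ by Davenport--Schinzel theory, so in principle the merged envelope could be larger than $n_1 + n_2$. The delicate part of the argument is therefore to bound $M(n)$ in amortized terms: each intersection event created during the merge corresponds to either (i) a breakpoint inherited from one of the two recursive envelopes, or (ii) a new crossing between one segment from the left half and one segment from the right half that survives into the output. A careful charging scheme assigns each such surviving crossing to a unique recursion subproblem in which one of the two responsible segments was first discarded, ensuring that the total work across all levels of the recursion is $O(n \log n)$ rather than $O(n\,\alpha(n)\log n)$. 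Minor but necessary care is also needed for degeneracies (vertical segments, collinearities, coincident endpoints, and tangential touchings), which can be resolved by a symbolic perturbation or case analysis without affecting the asymptotic bound.
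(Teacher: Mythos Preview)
The paper does not prove this lemma; it simply cites Hershberger's result and uses it as a black box. Your divide-and-conquer sketch is in fact Hershberger's approach, and you correctly identify the central difficulty: the upper envelope of $n$ segments can have $\Theta(n\,\alpha(n))$ complexity, so a naive bound on the merge output would only yield $O(n\,\alpha(n)\log n)$. Your charging description is somewhat loose---the phrase ``recursion subproblem in which one of the two responsible segments was first discarded'' does not quite match Hershberger's actual accounting, which instead bounds the number of \emph{new} breakpoints created in each merge by the number of segments involved (each segment contributes at most a constant number of new intersection breakpoints per merge in which it participates), so that the total work per recursion level is $O(n)$---but the overall outline is correct. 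Since the paper itself offers no proof, there is nothing further to compare against beyond noting that your sketch follows the cited source.
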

By formulae~{(\ref{eq:subfunction_right})} and~{(\ref{eq:subfunction_left})},
for any $i$ and $j$, functions $\theta^{i, +, j}$ and $\theta^{i, -, j}$ consist of at most two line segments.
Thus, since $\theta^{\ell_u, +, [\ell_u+1.. r_u]}(z)$ and $\theta^{r_u, -, [\ell_u.. r_u-1]}(z)$ are upper envelopes of at most $2m_u$ line segments.
By Lemma~\ref{lem:Hershberger}, we can obtain these upper envelopes and the above four lists for each node $u \in {\cal T}$ in $O( m_u \log m_u )$ time.
In total, we can obtain the whole information of TYPE~I of ${\cal T}$ in time $O(n\log^2 n)$ by Property~\ref{property1}.%}
%By Property~\ref{property1}, we can obtain the whole information of TYPE~I of ${\cal T}$ in time $O(n\log^2 n)$. 

We now give the application of TYPE~I. 
\begin{lemma}[Query with TYPE~I]\label{lem:type1}
Suppose that TYPE~I of ${\cal T}$ is available. 
Given a node $u \in {\cal T}$ and a real value $z \in [0,W_n]$,
we can obtain

\noindent
(i) index $H \in [\ell_u+1..r_u]$ such that $\theta^{\ell_u, +, H}(z) = \theta^{\ell_u,+,[\ell_u+1.. r_u]}(z)$, and

\noindent
(ii)
index $H \in [\ell_u..r_u-1]$ such that 
$\theta^{r_u, -, H}(z) = \theta^{r_u, -, [\ell_u.. r_u-1]}(z)$\\
in time $O(\log n)$ respectively. 
Furthermore, if the capacities of $\mathcal{P}$ are uniform
and $z \notin [W_{\ell_u},W_{r_u-1}]$, we can obtain
the above indices in time $O(1)$.
\end{lemma}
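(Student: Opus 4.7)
The approach is to reduce each query to a lookup in the sorted breakpoint list stored at $u$, using binary search in the general case and direct access to extremal entries in the uniform special case.

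For the general $O(\log n)$ bound, the second list stored under TYPE~I at $u$ is a sequence of $N^{u,+}-1$ pairs $((b^{u,+}_p, b^{u,+}_{p+1}], H^{u,+}_p)$ with the breakpoints $b^{u,+}_p$ sorted in strictly increasing order, and by construction $\theta^{\ell_u,+,[\ell_u+1..r_u]}(z)=\theta^{\ell_u,+,H^{u,+}_p}(z)$ for every $z\in(b^{u,+}_p,b^{u,+}_{p+1}]$. Given a query $z\in[0,W_n]$, I would binary-search the sorted sequence $(b^{u,+}_1,\ldots,b^{u,+}_{N^{u,+}})$ to locate the unique index $p$ with $z\in(b^{u,+}_p,b^{u,+}_{p+1}]$ and return $H^{u,+}_p$. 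Since $\theta^{\ell_u,+,[\ell_u+1..r_u]}$ is the upper envelope of at most $2m_u$ line segments, we have $N^{u,+}=O(m_u)=O(n)$, so the binary search costs $O(\log n)$ time. Part~(ii) is handled symmetrically by the analogous left-side list stored at $u$.

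For the uniform-capacity case with $z\notin[W_{\ell_u},W_{r_u-1}]$, I would argue by a structural analysis of the envelope. If $z<W_{\ell_u}$, then for every $h\in[\ell_u+1..r_u]$ we have $W_{h-1}\ge W_{\ell_u}>z$, so formula~(\ref{eq:subfunction_right}) gives $\theta^{\ell_u,+,h}(z)=0$; the envelope is identically zero on $[0,W_{\ell_u}]$, so any index in $[\ell_u+1..r_u]$ is valid and can be output in $O(1)$ from the first list entry (or a stored default). If $z>W_{r_u-1}$, then every $\theta^{\ell_u,+,h}$ is in its linear regime, and under a uniform capacity $c$ all these pieces have the common slope $1/c$; hence on $(W_{r_u-1},W_n]$ the envelope coincides with the single linear piece $\theta^{\ell_u,+,h^*}$ where $h^*$ maximizes the $z$-independent intercept $\tau\cdot L_{\ell_u,h}-W_{h-1}/c$. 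Since breakpoints can arise only at activation points $W_{h-1}$ or at crossings between active linear pieces, and crossings cannot occur among lines of equal slope, the terminal range in the list is $(b^{u,+}_{N^{u,+}-1},W_n]$ with $b^{u,+}_{N^{u,+}-1}\le W_{r_u-1}$ and associated index $h^*$; reading this final entry yields the answer in $O(1)$.

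The only delicate step is justifying that, under uniform capacities, the first and last entries of the sorted list genuinely capture the whole boundary regions $[0,W_{\ell_u}]$ and $(W_{r_u-1},W_n]$, so that $O(1)$ access suffices; everything else is a routine binary search. The symmetric claims for part~(ii) follow identically from the mirrored TYPE~I list for $\theta^{r_u,-,[\ell_u..r_u-1]}$.
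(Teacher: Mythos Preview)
Your proposal is correct and follows essentially the same approach as the paper: binary search over the stored breakpoint list for the general $O(\log n)$ bound, and direct access to the first or last list entry in the uniform-capacity case after observing that equal-slope pieces admit breakpoints only at the activation points $W_{h-1}$. Your justification for the uniform case (no crossings among lines of equal slope, hence $b^{u,+}_{N^{u,+}-1}\le W_{r_u-1}$) is in fact slightly cleaner than the paper's.
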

\begin{proof}
We give the proof for the cases for $\theta^{\ell_u, +, [\ell_u+1.. r_u]}(z)$.
Cases for $\theta^{r_u, -, [\ell_u.. r_u-1]}(z)$ can be shown in a similar way.
%We give the proof for cases (i) and (iii) for $\theta^{\ell_u, +, [\ell_u+1.. r_u]}(z)$.
%Cases (ii) and (iv) for $\theta^{r_u, -, [\ell_u.. r_u-1]}(z)$ can be shown by a similar way.

First, we give the operations for case~{(i)}: 
Find an integer $p$ such that $z \in (b^{u,+}_p, b^{u,+}_{p+1}]$ holds 
in $O(\log n)$ time by the binary search, 
which compares a given $z$ and a breakpoint $b^{u,+}_p$ with list~{TYPE~I-1}. 
Then, we obtain an index $H^{u,+}_p$ by list~{TYPE~I-2}.

Next, we consider the case that the dynamic flow path network $\mathcal{P}$ has uniform capacity $c$.
We recall the definition of $\theta^{i, +, j}(z)$ when every capacity is $c$
from equation~{(\ref{eq:subfunction_right})}.
For any $i,j \in [1..n]$ with $i < j$, we have
\begin{eqnarray*}
\theta^{i, +, j}(z) = 
\left\{
\begin{array}{ll}
0 & \text{if } z \leq W_{j-1}, \\
\frac{z - W_{j-1}}{c} + \tau\cdot L_{i,j} & \text{if } z > W_{j-1}.
\end{array}
\right.
\end{eqnarray*}
Because every function $\theta^{i, +, j}$ takes zero or has slope $1/c$,
all breakpoints $b^{u, +}_{p}$ of ${\cal B}^{u,+}$ should be 
$0, W_{\ell_u}, W_{\ell_u+1}, \ldots, W_{r_u}$ for $p = 1, \ldots, N^{u,+} - 1$.
This implies that a range of $[b^{u,+}_1(=0), b^{u,+}_2]$ with the smallest two breakpoints
contains the range $[0, W_{\ell_u}]$.
Thus, for $z \in [0, W_{\ell_u}]$, we have $\theta^{\ell_u, +, H^{u,+}_{1}}(z) = \theta^{\ell_u, +, [\ell_u+1.. r_u]}(z) = 0$.
Similarly, a range of $[b^{u,+}_{N_u-1}, b^{u,+}_{N_u}(= W_n)]$ with 
the largest two breakpoints contains the range $(W_{r_u}, W_n]$
and we have $\theta^{\ell_u, +, H^+_{N^{u,+}-1}}(z) = \theta^{\ell_u, +, [\ell_u+1.. r_u]}(z)$ for $z \in (W_{r_u}, W_{n}]$.
We can obtain $H^{u,+}_{1}$ and $H^+_{N^{u,+}-1}$ in $O(1)$ time and the proof is complete.
\end{proof}

%-----TYPE II --------
%\subsection{TYPE II: Integrals of $\theta^{\ell_u,+,[\ell_u+1..r_u]}(z)$ and $\theta^{r_u, -, [\ell_u.. r_u-1]}(z)$}\label{sec:type2}

% The second type of information, called TYPE II, at each node $u \in {\cal T}$
% is used for computing
% $\int_0^z \theta^{\ell_u, +, [\ell_u+1.. r_u]}(t)dt$ and $\int_z^{W_n} \theta^{r_u, -, [\ell_u.. r_u-1]}(t)dt$
% for given $z$.

% Suppose that TYPE~{I} is available, 
% and two sequences of breakpoints ${\cal B}^{u,+} = (b^{u,+}_1 = 0, b^{u,+}_2, \ldots, b^{u,+}_{N^{u,+}} = W_n)$
% and ${\cal B}^{u,-} = (b^{u,-}_1 = 0, b^{u,-}_2, \ldots, b^{u,-}_{N^{u,-}} = W_n)$ of $\theta^{\ell_u, +, [\ell_u+1.. r_u]}(z)$ and $\theta^{r_u, -, [\ell_u.. r_u-1]}(z)$,
% respectively.
\noindent
{\bf TYPE~II.} We give the information only for computing 
$\int_0^z \theta^{\ell_u, +, [\ell_u+1.. r_u]}(t)dt$ 
stored at $u \in {\cal T}$ as TYPE~{II} 
since the case for $\int_z^{W_n} \theta^{r_u, -, [\ell_u.. r_u-1]}(t)dt$ is symmetric.
Each node $u \in {\cal T}$ contains 
a list of all pairs of breakpoint $b^{u,+}_p$ and value $\int_{0}^{b^{u,+}_p}\theta^{\ell_u, +, [\ell_u+1.. r_u]}(t)dt$. 
%This list can be constructed in $O(m_u)$ time for each $u \in {\cal T}$
%by using TYPE~{I} of $u$. 
%{\color{blue}
We show that these lists can be constructed in $O(m_u)$ time for each $u \in {\cal T}_u$.
Let us consider the calculation of $\int_{0}^{b^{u,+}_{p}}\theta^{\ell_u, +, [\ell_u+1.. r_u]}(t)dt$.
By the list TYPE~{I-2}, we have $\theta^{\ell_u, H^{u,+}_{p}, +} (z)$ for $p = 1, \ldots, N^{u,+}$.
By an elementary calculation, we calculate a value $\int_{b^{u,+}_{p}}^{b^{u,+}_{p+1}} \theta^{\ell_u, +, H^{u,+}_{p+1}} (t)dt$
in $O(1)$ time for each $p$.
Because we have the following relation
\begin{eqnarray*}
\int_{0}^{b^{u,+}_{p+1}}\theta^{\ell_u, +, [\ell_u+1.. r_u]}(t)dt = 
\int_{0}^{b^{u,+}_p}\theta^{\ell_u, +, [\ell_u+1.. r_u]}(t)dt + \int_{b^{u,+}_{p}}^{b^{u,+}_{p+1}} \theta^{\ell_u, H^{u,+}_{p+1}, +} (t)dt,
\end{eqnarray*}
for $p = 1, \ldots, N^{u, +} -1$,
we can obtain $\int_{0}^{b^{u,+}_{p}}\theta^{\ell_u, +, [\ell_u+1.. r_u]}(t)dt$ for all $p$ in $O(m_u)$ time by
adding calculated values.
In a similar way, we also construct the list of $\int_{b^{u,-}_p}^{W_n}\theta^{r_u, -, [\ell_u.. r_u-1]}(t)dt$ 
in $O(m_u)$ time. 
In total, we obtain the whole information of TYPE~{II} of ${\cal T}$ in time $O(n\log n)$ by Property~\ref{property1}. %}
%By Property~\ref{property1}, we obtain the whole information of TYPE~{II} of ${\cal T}$ in time $O(n\log n)$. 

We give the application of TYPEs~{I and II}. 
\begin{lemma}[Query with TYPEs~{I and II}]\label{lem:type2}
Suppose that TYPEs~{I and II} of ${\cal T}$ is available.
Given a node $u \in {\cal T}$ and a real value $z \in [0,W_n]$, we can obtain

\noindent
(i) value $\int_{0}^{z} \theta^{\ell_u, +, [\ell_u+1.. r_u]}(t)dt$, and
(ii) value $\int_{z}^{W_n} \theta^{r_u, -, [\ell_u.. r_u-1]}(t)dt$
in time $O(\log n)$ respectively. 
Furthermore, if the capacities of $\mathcal{P}$ are uniform
and $z \notin [W_{\ell_u},W_{r_u-1}]$, 
we can obtain the above values in time $O(1)$.
\end{lemma}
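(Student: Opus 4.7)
The plan is to mirror the structure of the proof of Lemma~\ref{lem:type1}, since TYPE~II stores prefix integrals at exactly the breakpoints that TYPE~I indexes. I will only describe the argument for part~(i); part~(ii) is symmetric by the construction of TYPE~II on the right-to-left side.

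First, given the query value $z \in [0,W_n]$, I locate the unique index $p$ with $z \in (b^{u,+}_p, b^{u,+}_{p+1}]$ by binary searching the TYPE~I-1 list of breakpoints stored at $u$; there are $N^{u,+} = O(m_u) = O(n)$ such breakpoints, so this takes $O(\log n)$ time. Then I decompose
\[
\int_{0}^{z} \theta^{\ell_u, +, [\ell_u+1..r_u]}(t)\,dt = \int_{0}^{b^{u,+}_p} \theta^{\ell_u, +, [\ell_u+1..r_u]}(t)\,dt + \int_{b^{u,+}_p}^{z} \theta^{\ell_u, +, H^{u,+}_p}(t)\,dt,
\]
using the fact that on $(b^{u,+}_p, b^{u,+}_{p+1}]$ the upper envelope equals the single function $\theta^{\ell_u, +, H^{u,+}_p}$.

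The first term is read off directly from the TYPE~II list in $O(1)$ time, since that list stores the pair $(b^{u,+}_p, \int_0^{b^{u,+}_p} \theta^{\ell_u,+,[\ell_u+1..r_u]}(t)\,dt)$ for each breakpoint index $p$. For the second term, I query TYPE~I-2 at $u$ with $p$ to retrieve the index $H^{u,+}_p$ in $O(1)$ time, after which the function $\theta^{\ell_u, +, H^{u,+}_p}$ is known explicitly from formula~\eqref{eq:subfunction_right} and its integral from $b^{u,+}_p$ to $z$ is a closed-form expression in $O(1)$ time. Adding the two contributions yields the desired value, with the binary search being the dominant cost of $O(\log n)$.

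For the uniform-capacity refinement, I invoke the observation used in the proof of Lemma~\ref{lem:type1}: when all edge capacities equal some common $c$, the breakpoints of $\theta^{\ell_u,+,[\ell_u+1..r_u]}$ are confined to $\{0, W_{\ell_u}, W_{\ell_u+1}, \ldots, W_{r_u}, W_n\}$, and in particular the leftmost nontrivial piece begins at $W_{\ell_u}$ while the rightmost piece starts at $W_{r_u}$. Hence if $z \notin [W_{\ell_u}, W_{r_u-1}]$, the correct interval index $p$ is either $1$ or $N^{u,+}-1$, both retrievable in $O(1)$ time without any search; combined with the $O(1)$ closed-form evaluation above, the total cost is $O(1)$. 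Nothing in this argument is subtle; the only thing one must check is that the decomposition is well-defined at the boundary $z = b^{u,+}_p$, which holds because $\theta^{\ell_u,+,H^{u,+}_p}(b^{u,+}_p)$ and the envelope agree there by continuity of the piecewise-linear upper envelope at its breakpoints.
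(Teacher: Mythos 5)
Your proof is correct and mirrors the paper's argument: locate the breakpoint index $p$ by binary search in $O(\log n)$ time, read the stored prefix integral and the index $H^{u,+}_p$ from the TYPE~II and TYPE~I-2 lists, close the remaining interval with a one-line closed-form integral, and bypass the search entirely in the uniform-capacity boundary case exactly as in the proof of Lemma~\ref{lem:type1}. The only small imprecision is your closing remark: the subfunctions $\theta^{\ell_u,+,j}$ defined by formula~\eqref{eq:subfunction_right} jump from $0$ to $\tau\cdot L_{\ell_u,j}$ at $z=W_{j-1}$, so the upper envelope is \emph{not} continuous at its breakpoints. The decomposition is still well-defined, but the correct justification is that the envelope and $\theta^{\ell_u,+,H^{u,+}_p}$ agree on the half-open interval $(b^{u,+}_p,z]$ and a single endpoint has measure zero, not continuity.
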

\begin{proof}
We give the proof for the cases for 
$\int_{0}^{z} \theta^{\ell_u, +,  [\ell_u+1.. r_u]}(t)dt$. Cases for $\int_{0}^{z} \theta^{r_u, -, [\ell_u.. r_u-1]}(t)dt$ can be shown by a similar way.

First, we consider case~{(i)}. 
By Lemma~\ref{lem:type1}-(i), 
we obtain an index $H^{u,+}_p$ such that given $z$ is contained in $(b^{u,+}_p, b^{u,+}_{p+1}]$ and $\theta^{\ell_u, +,  [\ell_u+1.. r_u]}(z') = \theta^{\ell_u, +, H^{u,+}_p} (z')$ for any $z' \in (b^{u,+}_p, b^{u,+}_{p+1}]$ in $O(\log n)$ time.
Using this index $H^{u,+}_p$, we have
\[
\int_{0}^{z} \theta^{\ell_u, +,  [\ell_u+1.. r_u]}(t)dt = \int_{0}^{b^{u,+}_p} \theta^{\ell_u, +,  [\ell_u+1.. r_u]}(t)dt +
\int_{b^{u,+}_{p}}^{z} \theta^{\ell_u, +, H^{u,+}_p} (t)dt.
\]
We obtain a value of the first term $\int_{0}^{b^{u,+}_p} \theta^{\ell_u, +,  [\ell_u+1.. r_u]}(t)dt$ by the binary search on the list of TYPE~{II} in $O(\log n)$ time. 
The second term $\int_{b^{u,+}_{p}}^{z} \theta^{\ell_{u}, +, H^{u,+}_p} (t)dt$ 
is calculated by an elementary calculation in $O(1)$ time, 
since we know the function $\theta^{\ell_u, +, H^{u,+}_p} (z)$.
Thus, the proof of statement~{(i)} is completed.

Next, we consider the case that the capacities of $\mathcal{P}$ are uniform in $c$. 
By the proof of Lemma~\ref{lem:type1}, we know that $H^{u,+}_p = 1$ or $N^{u,+}-1$ without the binary search. 
When $H^{u,+}_p = N^{u,+}-1$, we have
\[
\int_{0}^{z} \theta^{\ell_u, +,  [\ell_u+1.. r_u]}(t)dt 
= \int_{0}^{b^{u,+}_{N^{u,+}-1}} \theta^{\ell_u, +,  [\ell_u+1.. r_u]}(t)dt
+ \int_{b^{u,+}_{p}}^{z} \theta^{\ell_u, +, H^{u,+}_p} (t)dt.
\]
Thus, we can evaluate both terms in $O(1)$ time and the proof is complete.
\end{proof}

\noindent
{\bf TYPE III.} 
We give the information only for computing 
$\bar{\theta}^{\ell_u,+,[\ell_u+1.. r_u]}(c, z)$ 
stored at $u \in {\cal T}$ as TYPE~{III} 
% for any $c  > 0$ and any $z \in [0, W_n]$
since the case for $\bar{\theta}^{r_u,-, [\ell_u.. r_u-1]}(c, z)$ is symmetric.
% As TYPE~III at each node $u \in {\cal T}$,
% we construct a {\em persistent segment tree}.
% More precisely, we store at each node $u \in {\cal T}$ 
% a persistent segment tree 
% to compute $\bar{\theta}^{\ell_u,+,[\ell_u+1.. r_u]}(c, z)$ and $\bar{\theta}^{r_u,-, [\ell_u.. r_u-1]}(c, z)$ for any $c  > 0$ and any $z \in [0, W_n]$.
% In what follows, we show only the structure for $\bar{\theta}^{\ell_u,+,[\ell_u+1.. r_u]}(c, z)$ and its construction since one for $\bar{\theta}^{\ell_u,-,[\ell_u.. r_u-1]}(c, z)$ is symmetric.
% and its construction for computing a value $\bar{\theta}^{\ell_u,+,[\ell_u+1.. r_u]}(c, z)$. 
% We can construct $\bar{\theta}^{\ell_u,-,[\ell_u.. r_u-1]}(c, z)$ in a similar way.
Note that it is enough to prepare for the case of $z \in (W_{\ell_u},W_{r_u}]$
since it holds that 
$\bar{\theta}^{\ell_u,+,[\ell_u+1.. r_u]}(c, z) = 0$ for $z \in [0, W_{\ell_u}]$ and
%$\bar{\theta}^{\ell_u,+,[\ell_u+1.. r_u]}(c, z) = \bar{\theta}^{\ell_u,+,[\ell_u+1.. r_u]}(c, W_{r_u}) + (z-W_{r_u})/c$ for $z \in (W_{r_u}, W_n]$.
\[
\bar{\theta}^{\ell_u,+,[\ell_u+1.. r_u]}(c, z) = \bar{\theta}^{\ell_u,+,[\ell_u+1.. r_u]}(c, W_{r_u}) + \frac{z-W_{r_u}}{c}
\]
for $z \in (W_{r_u}, W_n]$, of which the first term is obtained by prepared information with 
$z = W_{r_u}$ and the second term is obtained by elementally calculation.

For each $u \in {\cal T}$, 
we construct a {\em persistent segment tree} as TYPE~{III}.
Referring to formula~\eqref{eq:bar_subfunction_right}, 
each function $\bar{\theta}^{\ell_u,+,j}(c, z)$ for $j \in [l_u+1..r_u]$ 
is linear in $z \in (W_{j-1},W_n]$ with the same slope $1/c$. 
Let us make parameter $c$ decrease from $\infty$ to $0$, 
then all the slopes $1/c$ increase from $0$ to $\infty$. 
% When $c=\infty$,
% % (i.e., the slope is zero), 
% $\bar{\theta}^{\ell_u,+,[\ell_u+1.. r_u]}(\infty, z)$ is a step function with $m_u$ subfunctions, 
% where we recall that $m_u = r_u-\ell_u$. 
% Note that each subfunction $\bar{\theta}^{\ell_u,+,j}(\infty, z) = \tau \cdot L_{i,j}$ appears as a part of $\bar{\theta}^{\ell_u,+,[\ell_u+1.. r_u]}(\infty, z)$ on $(W_{j-1},W_{j}]$.
% Suppose that
% when $c$ becomes some $c'$, 
% some two adjacent subfunctions $\bar{\theta}^{\ell_u,+,j'}(c', z)$ and $\bar{\theta}^{\ell_u,+,j'+1}(c', z)$ come to overlap each other on $(W_{j'},W_{j'+1}]$.
% % for the first time. 
% Then $\bar{\theta}^{\ell_u,+,[\ell_u+1.. r_u]}(c', z)$ 
% comes to consist of
% % becomes a piecewise linear function with 
% $m_u-1$ subfunctions on $(W_{\ell_u},W_{r_u}]$.
% In such a way, 
As $c$ decreases, the number of subfunctions that consist of $\bar{\theta}^{\ell_u,+,[\ell_u+1.. r_u]}(c, z)$ 
also decreases one by one from $m_u$ to $1$. 
Let $c^{u,+}_{h}$ be a value $c$ at which the number of subfunctions of $\bar{\theta}^{\ell_u,+,[\ell_u+1.. r_u]}(c, z)$ becomes $m_u-h$ while $c$ decreases. 
Note that we have $\infty = c^{u,+}_{0} > c^{u,+}_{1} > \cdots > c^{u,+}_{m_u-1} > 0$.
Let us define indices $j^{h}_1, \ldots, j^{h}_{m_u-h}$ with $l_u+1 = j^h_1 < \cdots < j^h_{m_u-h} \le r_u$ corresponding to the subfunctions of $\bar{\theta}^{\ell_u,+,[\ell_u+1.. r_u]}(c^{u,+}_{h}, z)$, 
that is, for any integer $p \in [1..m_u-h]$, we have
\begin{equation}\label{eq:j_u_h}
\bar{\theta}^{\ell_u,+,[\ell_u+1.. r_u]}(c^{u,+}_{h}, z) = \bar{\theta}^{\ell_u,+,j^h_p}(c^{u,+}_{h}, z) \quad \text{if} \ z \in (W_{j^{h}_p-1},W_{j^{h}_{p+1}-1}],
\end{equation}
where $j^{h}_{m_u-h+1}-1={r_u}$. 
We give the following lemma about the property of $c^{u,+}_{h}$. 
\begin{lemma}\label{lem:capacity}
For each node $u \in {\cal T}$, % and an integer $h \in [1..m_u-1]$, we have
%\begin{eqnarray}\label{eq:c_u+_h}
%c^{u,+}_{h}
%= \max_{1 \le p \le m_u-h}
%\left\{ \frac{W_{j^{h-1}_{p+1}-1} - W_{j^{h-1}_p-1}}{\tau\cdot %L_{j^{h-1}_p,j^{h-1}_{p+1}}} \right\}.
%\end{eqnarray}
%Also, 
all values $c^{u,+}_{1}, \ldots, c^{u,+}_{m_u-1}$ can be computed 
in $O(m_u \log m_u)$ time.
\end{lemma}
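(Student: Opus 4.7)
The plan is to reformulate the computation as a standard upper-convex-hull problem. For each $j \in [\ell_u+1..r_u]$, introduce the planar point with coordinates $(X_j, Y_j) := (W_{j-1}, \tau L_{\ell_u, j})$; these $m_u$ points are already sorted by $X$-coordinate by construction, and by monotonicity of $L_{\ell_u, \cdot}$ they are also sorted by $Y$-coordinate. Observe that the subfunction $\bar{\theta}^{\ell_u, +, j}(c, z)$, restricted to its support $z > W_{j-1}$, coincides with the rightward half-line emanating from $(X_j, Y_j)$ of slope $s := 1/c$. Since all such half-lines share a common slope for a given $c$, subfunction $j$ is present in the upper envelope $\bar{\theta}^{\ell_u, +, [\ell_u+1..r_u]}(c, \cdot)$ at slope $s$ if and only if $s < \sigma_{i,j} := (Y_j - Y_i)/(X_j - X_i)$ for every earlier surviving index $i < j$. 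Thus the critical capacities $c^{u,+}_h$ are precisely the reciprocals of the slopes at which points drop out of this upper envelope as $s$ grows from $0$ to $\infty$.

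The algorithm is a single Andrew's monotone-chain pass over the points in index order. Maintain a stack of surviving indices. When processing $j$, while the top pair $(a, b)$ of the stack satisfies $\sigma_{a,b} \le \sigma_{b,j}$, pop $b$ and emit $1/\sigma_{a,b}$ as a candidate critical capacity; then push $j$. After the scan, let the remaining stack be $h_1 < h_2 < \cdots < h_k$, forming the upper hull of all the points, and emit the remaining candidates $1/\sigma_{h_{k-1}, h_k}, 1/\sigma_{h_{k-2}, h_{k-1}}, \ldots, 1/\sigma_{h_1, h_2}$, which by the upper-hull property are strictly increasing. Counting $p$ pops during the scan plus $k-1 = m_u - p - 1$ tail emissions, exactly $m_u - 1$ candidates are produced. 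Sorting them into decreasing order yields $c^{u,+}_1 > \cdots > c^{u,+}_{m_u-1}$. The scan runs in amortized $O(m_u)$ time, and the sort in $O(m_u \log m_u)$; the analogous computation for $\bar{\theta}^{r_u, -, [\ell_u..r_u-1]}$ follows symmetrically.

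The main obstacle is correctness: showing that each emitted $1/\sigma_{a,b}$ really is the critical capacity at which subfunction $b$ leaves the upper envelope. One direction is direct---at slope $\sigma_{a,b}$ the half-line from $(X_a, Y_a)$ passes through $(X_b, Y_b)$, so $a$ dominates $b$ over $z > X_b$ for every $s \ge \sigma_{a,b}$, and $a$ itself is still alive because its own critical slope is at least $\sigma_{\mathrm{prev}(a), a} > \sigma_{a, b}$ by the stack invariant. The reverse direction, that $b$ genuinely survives for all $s < \sigma_{a,b}$, requires ruling out any earlier surviving index $i$ of the stack from dominating $b$ at a smaller slope. The key fact is the convex-combination identity: for $i < a < b$, $\sigma_{i, b}$ is a weighted average of $\sigma_{i, a}$ and $\sigma_{a, b}$. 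Combined with the stack invariant that slopes between consecutive stack entries are strictly decreasing, this forces $\sigma_{i, b} > \sigma_{a, b}$ for every stack entry $i$ below $a$, so no such $i$ dominates $b$ before slope $\sigma_{a,b}$. A parallel argument handles the $k-1$ slopes emitted after the scan ends, completing the proof.
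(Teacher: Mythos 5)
Your proof is correct in substance, and it takes a genuinely different route from the paper's. The paper runs a direct priority‑queue simulation of the decreasing‑$c$ process: it seeds a max-heap with the $m_u-1$ consecutive merge thresholds $c^{u,+}_{1,p}$, repeatedly extracts the maximum (which is the next $c^{u,+}_h$), then replaces the two affected adjacent thresholds by the single merged one, for $O(\log m_u)$ heap work per extraction and $O(m_u\log m_u)$ overall. You instead observe that, after mapping $j \mapsto (W_{j-1},\,\tau L_{\ell_u,j})$ and setting $s=1/c$, each subfunction is a rightward half-line of common slope $s$, so the critical slopes $s_j=\min_{i<j}\sigma_{i,j}$ are exactly the pop/tail slopes in a single Andrew's monotone-chain pass over the points, after which a sort gives the ordered sequence. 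This pushes the entire $\log$ factor into the final sort and reduces everything else to a standard linear-time hull sweep, which is arguably cleaner and makes the structural claim (what the surviving subfunctions look like as $c$ varies) geometrically transparent. Both approaches give the same $O(m_u\log m_u)$ bound.

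One spot in your correctness argument is slightly underjustified: for the ``reverse direction'' you show $\sigma_{i,b}>\sigma_{a,b}$ only for stack entries $i$ below $a$, but to conclude that $b$ survives for all $s<\sigma_{a,b}$ you need $\sigma_{i,b}\geq\sigma_{a,b}$ for \emph{every} $i<b$, including indices popped earlier. The clean fix is geometric: at the moment $b$ was pushed, the stack was the upper hull of $\{\,\ell_u+1,\dots,b\,\}$ and $a$ was the hull predecessor of $b$, so the segment from $(X_a,Y_a)$ to $(X_b,Y_b)$ is a hull edge and every point with index $i<b$ lies weakly below the line through it; this yields $\sigma_{i,b}\geq\sigma_{a,b}$ for all $i<b$ at once. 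With that one sentence added, the argument (and its mirror for the tail emissions and for the symmetric $\theta^{r_u,-,\cdot}$ case) is complete.
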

\begin{proof}
Suppose that $c^{u,+}_{0} = \infty, \ldots, c^{u,+}_{h-1}$ with $1 \le h \le m_u-2$ have been computed so far.
Recall that $\bar{\theta}^{\ell_u,+,[\ell+1, r_u]}(c^{u,+}_{h-1}, z)$ is a piecewise linear function with $m_u-h+1$ subfunctions on $(W_{\ell_u},W_n]$. 
Let indices $j^{h-1}_1, \ldots, j^{h-1}_{m_u-h+1}$ with $l_u+1 = j^{h-1}_1 < \cdots < j^{h-1}_{m_u-h+1} \le r_u$ correspond to the subfunctions of $\bar{\theta}^{\ell_u,+,[\ell_u+1.. r_u]}(c^{u,+}_{h-1}, z)$,
that is, for any integer $p$ with $1 \le p \le m_u-h+1$, we have
\begin{equation*}%\label{eq:j_u_h}
\bar{\theta}^{\ell_u,+,[\ell_u+1.. r_u]}(c^{u,+}_{h-1}, z) = \bar{\theta}^{\ell_u,+,j^{h-1}_p}(c^{u,+}_{h-1}, z) \quad \text{if} \ z \in (W_{j^{h-1}_p-1},W_{j^{h-1}_{p+1}-1}],
\end{equation*}
where $j^{h-1}_{m_u-h+2}-1=n$.

For an integer $p$ with $1 \le p \le m_u-h$, 
let $c_{h,p}$ be a value such that
%At some $c' < c_{u,+}(h-1)$, 
two functions $\bar{\theta}^{\ell_u,+,j^{h-1}_p}( c_{h,p}, z)$ and  $\bar{\theta}^{\ell_u,+,j^{h-1}_{p+1}}( c_{h,p}, z)$ have an overlap on $(W_{j^{h-1}_{p+1}-1},W_{j^{h-1}_{p+2}-1}]$. 
This means that for any $z \in (W_{j^{h-1}_{p+1}-1},W_{j^{h-1}_{p+2}-1}]$, we have
\begin{eqnarray*}
\frac{z - W_{j^{h-1}_p-1}}{c^{u,+}_{h,p}} + \tau\cdot L_{\ell_u,j^{h-1}_p} &=& 
\frac{z - W_{j^{h-1}_{p+1}-1}}{c^{u,+}_{h,p}} + \tau\cdot L_{\ell_u,j^{h-1}_{p+1}} \end{eqnarray*}
and then
\begin{eqnarray}
c^{u,+}_{h,p} &=& \frac{W_{j^{h-1}_{p+1}-1} - W_{j^{h-1}_p-1}}{\tau\cdot L_{j^{h-1}_p,j^{h-1}_{p+1}}}.
\end{eqnarray}
By the definition of $c^{u,+}_{h}$, we have
\begin{eqnarray}\label{eq:c_u+_h}
c^{u,+}_{h}
= \max_{1 \le p \le m_u-h}
\left\{ c^{u,+}_{h,p} \right\}.
\end{eqnarray}

In order to obtain $c^{u,+}_{h}$, we construct a {\it max-heap} 
that contains all values $c^{u,+}_{h,p}$ for $1 \le p \le m_u-h$
as follows:

We first construct a max-heap with all $m_u - 1$ values 
$c^{u,+}_{1,p} = \frac{W_{\ell_u + p + 1} - W_{\ell_u + p}}{\tau \cdot 
L_{\ell_u + p,\ell_u + p + 1}}
= \frac{ w_{\ell_u + p + 1} }{\tau \cdot \ell_{\ell_u + p}}$ 
for $1 \le p \le m_u-1$ in $O(m_u \log m_u)$ time 
and then obtain a value $c^{u,+}_{1}$. 

Suppose that values $c^{u,+}_{h,p}$ for $1 \le p \le m_u-h$ are stored in a max-heap. 
We then immediately obtain $c^{u,+}_{h}$. 
%For the computation of the next value $c^{u,+}_{h+1}$, 
We update the max-heap as follows: 
Letting $p'$ be the maximizer in \eqref{eq:c_u+_h}, 
delete two values $c^{u,+}_{h,p'} = c^{u,+}_{h}$ and $c^{u,+}_{h,p'+1}$, and 
insert a value
\begin{eqnarray}
c^{u,+}_{h+1,p'} = 
\frac{W_{j^{h-1}_{p'+2}-1} - W_{j^{h-1}_{p'}-1}}{\tau\cdot L_{j^{h-1}_{p'},j^{h-1}_{p'+2}}}.
\end{eqnarray}
Note that each operation can be done in $O(\log m_u)$ time.
For $p \neq p'$, we have 
\begin{eqnarray}\label{eq:c_u+_h+1}
c^{u,+}_{h+1,p} =
\left\{
\begin{array}{ll}
c^{u,+}_{h,p} & \text{if } p < p', \\
c^{u,+}_{h,p+1} & \text{if } p > p'.
\end{array}
\right.
\end{eqnarray} 
Thus, the updated max-heap contains all values $c^{u,+}_{h+1,p}$ 
for $1 \le p \le m_u-h-1$ and obtain the value $c^{u,+}_{h+1}$.
We repeat the above mentioned updates $O(m_u)$ times to obtain $c^{u,+}_{2}, \ldots, c^{u,+}_{m_u-1}$,
which requires $O(m_u \log m_u)$ time in total.
This completes the proof.
\end{proof}

By the above argument, 
while $c \in (c^{u,+}_{h},c^{u,+}_{h-1}]$ with some $h \in [1..m_u]$
(where $c^{u,+}_{m_u}=0$),
the representation of $\bar{\theta}^{\ell_u,+,[\ell_u+1.. r_u]}(c, z)$ (with $m_u-h+1$ subfunctions) remains the same.
Our fundamental idea is to consider
segment trees corresponding to each interval $(c^{u,+}_{h},c^{u,+}_{h-1}]$ with $h \in [1..m_u]$, and construct a persistent data structure for such the segment trees.
% In the rest of this section, we show how to construct a persistent segment tree at $u \in {\cal T}$.
% for the efficient computation of $\bar{\theta}^{\ell_u,+[\ell_u+1.. r_u]}(c, z)$. 
% For $c \in (c^{u,+}_{h},c^{u,+}_{h-1}]$, 
% let $j^{h-1}_1, \ldots, j^{h-1}_{m_u-h+1}$ with $l_u+1 = j^{h-1}_1 < \cdots < j^{h-1}_{m_u-h+1} \le r_u$ be indices that satisfy \eqref{eq:j_u_h}. 
% in the proof of Lemma~\ref{lem:capacity}. 

First of all, we introduce a segment tree $T_h$ with root $\rho_h$ 
to compute $\bar{\theta}^{\ell_u, +,[\ell_u+1.. r_u]}(c, z)$ for $c \in (c^{u,+}_{h},c^{u,+}_{h-1}]$ with $h \in [1..m_u]$. 
Tree $T_h$ contains $m_u$ leaves labeled as $l_u+1, \ldots, r_u$.
Each leaf $j$ corresponds to interval $(W_{j-1}, W_{j}]$. 
For a node $\nu \in T_h$, 
let $\ell_{\nu}$ (resp. $r_{\nu}$) denote 
the label of the leftmost (resp. rightmost) leaf of the subtree rooted at ${\nu}$. 
Let $p_{\nu}$ denote the parent of $\nu$ if $\nu \neq \rho_h$. 
We say a node $\nu \in T_h$ {\it spans} an interval $(W_{\ell_{\nu}-1}, W_{r_{\nu}}]$.
For some two integers $i,j \in [\ell_u+1..r_u]$ with $i<j$, 
if $(W_{\ell_{\nu}-1}, W_{r_{\nu}}] \subseteq (W_{i-1}, W_j]$ 
and $(W_{\ell_{p_{\nu}-1}}, W_{r_{p_{\nu}}}] \not \subseteq (W_{i-1}, W_j]$, 
then $\nu$ is called a {\it maximal subinterval node} for $(W_{i-1}, W_j]$. 
A segment tree $T_h$ satisfies the following property 
similar to Property~\ref{property2}:
%\begin{property}\label{property3} 
For any two integers $i,j \in [\ell_u+1..r_u]$ with $i<j$, 
the number of maximal subinterval nodes in $T_h$ for $(W_{i-1}, W_j]$ is $O(\log m_u)$. 
%\end{property}
For each $p \in [1..m_u-h+1]$, 
we store function $\bar{\theta}^{\ell_u,+,j^{h-1}_p}(c, z)$ 
at all the maximal subinterval nodes for interval $(W_{j^{h-1}_p-1}, W_{j^{h-1}_{p+1}-1}]$,
which takes $O(m_u\log m_u)$ time by 
the property.
%Property~\ref{property3}.
The other nodes in $T_h$ contains {\sf NULL}. 

If we have $T_h$, 
for given $z \in (W_{\ell_u},W_{r_u}]$ and $c \in (c^{u,+}_{h},c^{u,+}_{h-1}]$, 
we can compute value $\bar{\theta}^{\ell_u,+,[\ell_u+1.. r_u]}(c, z)$
in time $O(\log m_u)$ as follows:
Starting from root $\rho_h$, 
go down to a child such that 
its spanned interval contains $z$ 
until we achieve a node that 
contains some function $\bar{\theta}^{\ell_u,+,j}(c, z)$ (not {\sf NULL}). 
Now, we know 
$\bar{\theta}^{\ell_u,+,[\ell_u+1.. r_u]}(c, z) = \bar{\theta}^{\ell_u,+,j}(c, z)$, 
which can be computed by elementally calculation. 
 
If we explicitly construct $T_h$ for all $h \in [1..m_u]$, it takes $O(m_u^2\log m_u)$ time for each node $u \in {\cal T}$, which implies by Property~\ref{property1} that $O(n^2\log^2 n)$ time is required in total. 
However, using the fact that 
$T_h$ and $T_{h+1}$ are almost same except for at most $O(\log m_u)$ nodes, 
we can construct a persistent segment tree in $O(m_u\log m_u)$ time, 
in which we can search as if all of $T_h$ are maintained as follows. 
%{\color{blue}
%We construct a persistent segment tree in $O(m_u\log m_u)$ time, 
%in which we can search as if all of $T_h$ are maintained.

We first construct a segment tree $T_1$ 
such that all inner nodes contain {\sf NULL} 
and each leaf $j \in [\ell_u+1..r_u]$ has function $\bar{\theta}^{\ell_u,+,j}(c, z)$ in $O(m_u)$ time. 

We update this tree as follows:
Let us suppose that a persistent segment tree contains the information of $T_h$.
We update this persistent segment tree by adding the information of $T_{h+1}$.
% Let $p'$ be the maximizer in \eqref{eq:c_u+_h}. 
% The only difference between $T_h$ and $T_{h+1}$ 
% is for an interval $(W_{j^{h-1}_{p'}-1}, W_{j^{h-1}_{p'+2}-1}]$.
Let $p'$ be the maximizer in \eqref{eq:c_u+_h},
i.e., when $c = c^{u,+}_h$, 
$\bar{\theta}^{\ell_u,+,j^{h-1}_{p'}}(c, z)$ and $\bar{\theta}^{\ell_u,+,j^{h-1}_{p'+1}}(c, z)$ come to overlap each other on $(W_{j^{h-1}_{p'+1}-1},W_{j^{h-1}_{p'+2}-1}]$.
% Therefore, any node in $T_{h}$ that does not span $(W_{j^{h-1}_{p'}-1}, W_{j^{h-1}_{p'+2}-1}]$
% can be shared with $T_{h+1}$.
Recall that in $T_h$, 
$\bar{\theta}^{\ell_u,+,j^{h-1}_{p'}}(c, z)$ and  $\bar{\theta}^{\ell_u,+,j^{h-1}_{p'+1}}(c, z)$
are stored at all the maximal subinterval nodes for
intervals $(W_{j^{h-1}_{p'}-1}, W_{j^{h-1}_{p'+1}-1}]$ and $(W_{j^{h-1}_{p'+1}-1}, W_{j^{h-1}_{p'+2}-1}]$,
respectively.
Therefore, if we delete these information and store 
$\bar{\theta}^{\ell_u,+,j^{h-1}_{p'}}(c, z)$ 
at all the maximal subinterval nodes for interval $(W_{j^{h-1}_{p'}-1}, W_{j^{h-1}_{p'+2}-1}]$,
then we obtain $T_{h+1}$.
Instead,
we prepare a copy of subtree $T'$ of $T_{h}$, say $T'_{copy}$,
where $T'$ is the minimal subtree containing all the maximal subinterval nodes in $T_{h}$ for
interval $(W_{j^{h-1}_{p'}-1}, W_{j^{h-1}_{p'+2}-1}]$
and their ancestors including $\rho_h$.
Let $\rho_{h+1}$ be the root of $T'_{copy}$.
Note that for each leaf of $T'_{copy}$, the original node in $T_h$ is a maximal subinterval node for $(W_{j^{h-1}_{p'}-1}, W_{j^{h-1}_{p'+2}-1}]$.
In $T'_{copy}$, we store function $\bar{\theta}^{\ell_u,+,j^{h-1}_{p'}}(c, z)$ 
at all the leaves
and {\sf NULL} at the other nodes.
Then, we connect nodes in $T'_{copy}$ with ones in $T_h$ as follows:
$\nu' \in T'_{copy}$ is connected with $\nu \in T_h$ if and only if
$\nu$ is not copied in $T'_{copy}$ and
the original node of $\nu'$ into $T_h$ is $p_{\nu}$.
% nodes without a left/right child in $T'_{copy}$ with nodes in $T_h$ such that for a node $\nu \in T'_{copy}$, we set $\nu$'s child 
% as a child of the original node in $T'$.
See Fig.~\ref{fig:3}.

\begin{figure}[tb]%[htbp]
  \begin{center}
    \includegraphics[width=11cm,pagebox=cropbox,clip]{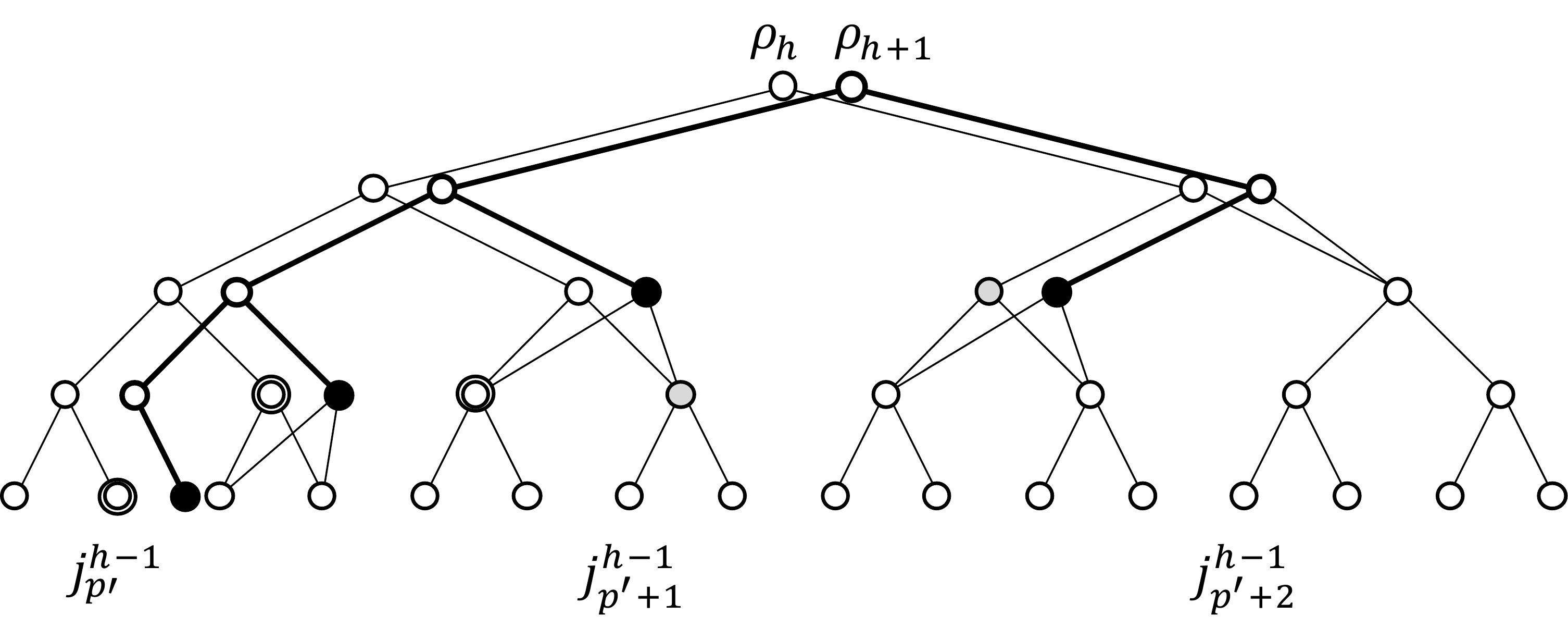}
    \caption{Illustration to show how to add the information of $T_{h+1}$ to $T_h$.
    The double circles indicate the maximal subinterval nodes for interval $(W_{j^{h-1}_{p'}-1}, W_{j^{h-1}_{p'+1}-1}]$.
    The gray circles indicate the maximal subinterval nodes for interval $(W_{j^{h-1}_{p'+1}-1}, W_{j^{h-1}_{p'+2}-1}]$.
    The thick subtree indicates $T'_{copy}$ and the black circle nodes store function $\bar{\theta}^{\ell_u,+,j^{h-1}_{p'}}(c, z)$.
    }
    \label{fig:3}
  \end{center}
\end{figure}

We note that the number of nodes in $T'$ (also $T'_{copy}$) is at most $O(\log m_u)$
because the number of maximal subinterval nodes for $(W_{j^{h-1}_{p'}-1}, W_{j^{h-1}_{p'+2}-1}]$ is at most $O(\log m_u)$ 
by the property of a segment tree 
%by Property~\ref{property3} 
and the number of these ancestors is also at most $O(\log m_u)$ in a binary tree. 
This implies that the above operation takes $O(\log m_u)$ time.
Repeating the above modification to $T_h$ for $h \in [1..m_u]$, 
we have a persistent segment tree containing all $T_h$ in $O(m_u \log m_u)$ time. %}
Thus, we obtain the whole information of TYPE~{III} of ${\cal T}$ in time $O(n \log^2 n)$ by Property~\ref{property1}. 

Using this persistent segment tree, we can compute $\bar{\theta}^{\ell_u,+,[\ell_u+1.. r_u]}(c, z)$ for any $z \in [0,W_n]$ and any $c>0$ in $O(\log m_u)$ time as follows: 
Find integer $h$ over $[1..m_u]$ 
such that $c \in (c^{u,+}_{h},c^{u,+}_{h-1}]$ 
% where $c_{u,+}(0) = \infty$ and 
% $c^{u,+}_{m_u} = 0$ 
in $O(\log m_u)$ time by binary search, 
and then search in the persistent segment tree as $T_{h}$ in time $O(\log m_u)$. 

\begin{lemma}[Query with TYPE~III]\label{lem:type3}
Suppose that TYPE~{III} of ${\cal T}$ is available. 
Given a node $u \in {\cal T}$, real values $z \in [0, W_n]$ and $c>0$, we can obtain
 
\noindent 
 (i) index $H \in [\ell_u+1..r_u]$ such that 
$\bar{\theta}^{\ell_u,+,H}(c, z) = \bar{\theta}^{\ell_u, +,[\ell_u+1.. r_u]}(c, z)$, and

\noindent 
 (ii) index $H \in [\ell_u..r_u-1]$ such that 
 $\bar{\theta}^{r_u,-,H}(c, z) = \bar{\theta}^{r_u,-,[\ell_u.. r_u-1]}(c, z)$\\
 in time $O(\log n)$ respectively.
\end{lemma}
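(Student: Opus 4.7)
The plan is to show that both indices can be found by a straightforward two-step navigation of the persistent segment tree that was constructed as TYPE~III; since the construction has already been described in detail, the proof reduces to verifying that each step costs $O(\log n)$. I will write the argument for part~(i), as part~(ii) is entirely symmetric (it uses the persistent segment tree built for the functions $\bar{\theta}^{r_u,-,j}(c,z)$ in the same manner).

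For part (i), I would first dispose of the trivial regions of $z$: if $z \le W_{\ell_u}$, then by \eqref{eq:bar_subfunction_right} every $\bar{\theta}^{\ell_u,+,h}(c,z)=0$, so any $H \in [\ell_u+1..r_u]$ may be returned, say $H = \ell_u + 1$; if $z \in (W_{r_u}, W_n]$, then the identity
\[
\bar{\theta}^{\ell_u,+,[\ell_u+1..r_u]}(c,z) = \bar{\theta}^{\ell_u,+,[\ell_u+1..r_u]}(c,W_{r_u}) + \frac{z - W_{r_u}}{c}
\]
reduces the query to one with $z$ replaced by $W_{r_u}$. So it suffices to treat $z \in (W_{\ell_u}, W_{r_u}]$.

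Next, I would locate the unique integer $h \in [1..m_u]$ with $c \in (c^{u,+}_{h}, c^{u,+}_{h-1}]$. Because the values $c^{u,+}_0 > c^{u,+}_1 > \cdots > c^{u,+}_{m_u-1}$ are sorted and have been precomputed as part of TYPE~III (by Lemma~\ref{lem:capacity}), a single binary search delivers $h$ in $O(\log m_u) = O(\log n)$ time. Fixing this $h$ selects the version $T_h$ of the persistent segment tree, inside which the functions $\bar{\theta}^{\ell_u,+,j^{h-1}_p}(c,z)$ for $p \in [1..m_u-h+1]$ are stored at the maximal subinterval nodes for their corresponding intervals $(W_{j^{h-1}_p-1}, W_{j^{h-1}_{p+1}-1}]$, with all other nodes holding \textsf{NULL}.

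Finally, I would perform the point location for $z$ inside $T_h$: starting from the root $\rho_h$, descend at each step into the child whose spanned interval contains $z$, stopping at the first node $\nu$ that stores some function $\bar{\theta}^{\ell_u,+,j}(c,\cdot)$ rather than \textsf{NULL}. By the construction of $T_h$, such a node is always encountered, and the stored index $H := j$ satisfies $\bar{\theta}^{\ell_u,+,H}(c,z) = \bar{\theta}^{\ell_u,+,[\ell_u+1..r_u]}(c,z)$ by \eqref{eq:j_u_h}. Since the depth of $T_h$ is $O(\log m_u)$, this descent costs $O(\log n)$. The only subtle point to check is that navigating through shared (non-copied) subtrees of the persistent structure is legitimate: by the construction rule, whenever we are at a node $\nu'$ inside a copied fragment whose children we did not copy, the pointer to its child correctly leads back into the older version of the tree that still represents the relevant subfunctions for $T_h$, so the descent yields the right answer. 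Combining the two phases yields the $O(\log n)$ bound, and the analogous argument for the persistent segment tree on the $\bar{\theta}^{r_u,-,j}(c,z)$ side establishes part~(ii). The only minor obstacle in writing this up is to state the edge-case reductions cleanly, since the persistent tree itself is built only for $z \in (W_{\ell_u}, W_{r_u}]$.
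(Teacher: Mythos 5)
Your proposal is correct and follows essentially the same route the paper sketches just before the lemma: binary search over the precomputed sequence $c^{u,+}_0 > c^{u,+}_1 > \cdots > c^{u,+}_{m_u-1}$ to locate the version $T_h$, then descend from $\rho_h$ toward the leaf whose interval contains $z$ until a non-\textsf{NULL} function is found, each step costing $O(\log m_u) = O(\log n)$. Your extra care about the boundary cases $z \le W_{\ell_u}$ and $z > W_{r_u}$ and about why pointer-chasing through shared fragments of the persistent structure still returns the correct $T_h$-view is a welcome elaboration but does not change the argument.
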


%----- TYPE IV -----
%\subsection{TYPE~{IV}: Integrals of $\bar{\theta}^{\ell_u,+,[\ell_u+1.. r_u]}(c,z)$ and $\bar{\theta}^{r_u,-, [\ell_u..r_u-1]}(c,z)$}\label{sec:type4}

% The fourth type of information, called TYPE IV, is associated with each node $u \in {\cal T}$, which is used for computing 
% $\int_0^z\bar{\theta}^{\ell_u,+,[\ell_u+1..r_u]}(c, t)dt$ or $\int_z^{W_n}\bar{\theta}^{r_u,-[\ell_u..r_u-1]}(c, t)dt$
% for given $c$ and $z$.
% In order to compute these values efficiently, we construct a persistent segment tree at each node $u \in {\cal T}$,
% which is similar to one shown in Section~\ref{sec:type3}.
\noindent
{\bf TYPE~{IV}.} 
We give the information only for computing 
$\int^z_0 \bar{\theta}^{\ell_u,+,[\ell_u+1.. r_u]}(c, t)dt$ 
% for any $c  > 0$ and any $z \in [0, W_n]$
stored at $u \in {\cal T}$ as TYPE~{IV} 
since the case for $\int^{W_n}_z \bar{\theta}^{\ell_u,-,[\ell_u.. r_u-1]}(c, t)dt$ is symmetric.
Similar to TYPE~{III}, we prepare only for the case of $z \in (W_{\ell_u},W_{r_u}]$
since it holds that 
$\int^z_0 \bar{\theta}^{\ell_u,+,[\ell_u+1.. r_u]}(c, t)dt = 0$ for $z \in [0, W_{\ell_u}]$ and
\[
\int^z_0 \bar{\theta}^{\ell_u,+,[\ell_u+1.. r_u]}(c, t)dt = \int^{W_{r_u}}_0 \bar{\theta}^{\ell_u,+,[\ell_u+1.. r_u]}(c, t)dt + 
%\int^{z}_{W_{r_u}} \frac{ t-W_{r_u}}{c} dt 
\frac{(z - W_{r_u})^2}{2c}
\]
for $z \in (W_{r_u}, W_n]$, of which the first term can be obtained by prepared information with 
$z = W_{r_u}$ and the second term by elementally calculation.

For each $u \in {\cal T}$, we construct a persistent segment tree again, 
which is similar to one shown in the previous section. 
% to compute $\bar{\theta}^{\ell_u,+,[\ell_u+1.. r_u]}(c, z)$ and $\bar{\theta}^{r_u,-, [\ell_u.. r_u-1]}(c, z)$ for any $c  > 0$ and any $z \in [0, W_n]$.
% More precisely, we store at each node $u \in {\cal T}$ 
% a persistent segment tree to compute $\int^z_0 \bar{\theta}^{\ell_u,+,[\ell_u+1.. r_u]}(c, t)dt$ and $\int^{W_n}_z \bar{\theta}^{r_u,-, [\ell_u.. r_u-1]}(c, t)dt$ for any $c > 0$ and any $z \in [0, W_n]$.
% In what follows, we give the structure and its construction for computing a value $\int^z_0 \bar{\theta}^{\ell_u,+,[\ell_u+1.. r_u]}(c, t)dt$. 
% For $\int^{W_n}_z \bar{\theta}^{\ell_u,-,[\ell_u.. r_u-1]}(c, t)dt$, we can construct a similar way.
% Since $\bar{\theta}^{\ell_u,+}(c, z) = 0$ when $z \in [0, W_{\ell_u}]$, 
% that is, we have also $\int^z_0 \bar{\theta}^{\ell_u,+,[\ell_u+1.. r_u]}(c, t)dt = 0$ when $z \in [0, W_{\ell_u}]$, 
% we consider the case when $ z > W_{\ell_u}$. 
% In what follows, we show only the structure for $\int^z_0 \bar{\theta}^{\ell_u,+,[\ell_u+1.. r_u]}(c, t)dt$ and its construction since one for $\int^{W_n}_z \bar{\theta}^{\ell_u,-,[\ell_u.. r_u-1]}(c, t)dt$ is symmetric.
% Note that it is enough to prepare for the case of $z \in (W_{\ell_u},W_{r_u}]$
% since it holds that 
% $\int^z_0 \bar{\theta}^{\ell_u,+,[\ell_u+1.. r_u]}(c, t)dt = 0$ for $z \in [0, W_{\ell_u}]$ and
% $\int^z_0 \bar{\theta}^{\ell_u,+,[\ell_u+1.. r_u]}(c, t)dt = \int^{W_{r_u}}_0 \bar{\theta}^{\ell_u,+,[\ell_u+1.. r_u]}(c, t)dt + 
% \int^{z}_{W_{r_u}} (t-W_{r_u})/cdt$ for $z \in (W_{r_u}, W_n]$.
To begin with, consider the case of $c \in (c^{u,+}_{h},c^{u,+}_{h-1}]$ 
with some $h \in [1..m_u]$ (where recall that $c^{u,+}_{0} = \infty$ and $c^{u,+}_{m_u} = 0$), 
and indices $j^{h-1}_1, \cdots, j^{h-1}_{m_u-h+1}$ 
that satisfy~{\eqref{eq:j_u_h}}. 
In this case, for $z \in (W_{j^{h-1}_p-1}, W_{j^{h-1}_{p+1}-1}]$ with $p \in [1..m_u-h+1]$,
we have
\begin{equation}\label{eq:bar_phifunction_p}
\int^z_0 \bar{\theta}^{\ell_u,+,[\ell_u+1.. r_u]}(c, t)dt
%\bar{\Phi}^{\ell_u,+}(c, z) 
= \sum_{q=1}^{p-1} \left\{\int_{W_{j^{h-1}_q-1}}^{W_{j^{h-1}_{q+1}-1}} \bar{\theta}^{\ell_u,+,j^{h-1}_q}(c, t)dt \right\} + \int_{W_{j^{h-1}_p-1}}^{z} \bar{\theta}^{\ell_u,+,j^{h-1}_p}(c, t)dt.
\end{equation}
For ease of reference, we use $F^{h,p}(c,z)$ instead of the right hand side of 
\eqref{eq:bar_phifunction_p}. 
%$\int^z_0 \bar{\theta}^{\ell_u,+,[\ell_u+1.. r_u]}(c, t)dt$
% in the rest of this section.

Similarly to the explanation for TYPE~III, 
let $T_h$ be a segment tree with root $\rho_h$ and 
$m_u$ leaves labeled as $l_u+1, \ldots, r_u$,
and each leaf $j$ of $T_h$ corresponds to interval $(W_{j-1}, W_{j}]$. 
In the same manner as for TYPE~{III},
for each $p \in [1..m_u-h+1]$, 
we store function $F^{h,p}(c,z)$ at all the maximal subinterval nodes in $T_h$ for interval $(W_{j^{h-1}_p-1}, W_{j^{h-1}_{p+1}-1}]$. 
Using $T_h$, for any $z \in (W_{\ell_u},W_{r_u}]$ and any $c \in (c^{u,+}_{h},c^{u,+}_{h-1}]$, 
we can compute value $\int^z_0 \bar{\theta}^{\ell_u,+,[\ell_u+1.. r_u]}(c, t)dt$ in time $O(\log m_u)$
by summing up all functions of nodes on a path 
from root $\rho_h$ to leaf with an interval that contains $z$. 
% We store at each node in $T_h$ 
% some functions in $c$ and $z$
% such that the sum of functions stored at nodes on a path 
% from root $\rho_h$ to each leaf $j \in [j^{h-1}_p..j^{h-1}_{p+1}-1]$ with $p \in [1..m_u-h+1]$ is exactly $F^{h,p}(c,z)$,
% which can be feasible, for example, if we store $F^{h,p}(c,z)$ for $1 \le p \le m_u-h+1$ maximal subinterval nodes for an interval $(W_{j^{h-1}_p-1}, W_{j^{h-1}_{p+1}-1}]$.
% 
% If we have such $T_h$, by searching in $T_h$, 
% $\int^z_0 \bar{\theta}^{\ell_u,+,[\ell_u+1.. r_u]}(c, t)dt$ 
% can be computed in $O(\log m_u)$ time for any $z \in (W_{\ell_u},W_{r_u}]$ and any $c \in (c^{u,+}_{h},c^{u,+}{h-1}]$.
Actually, we store functions in a more complicated way in order to maintain them as a persistent data structure.
We construct a persistent segment tree at $u \in {\cal T}$ in $O(m_u\log m_u)$ time, 
in which we can search as if all of $T_h$ are maintained, as follows.

First of all, we construct a segment tree $T_{1}$ for $c \in (c^{u,+}_{1},\infty]$ 
such that all inner nodes is stored constant function 0, 
and each leaf $j \in [\ell_u + 1..r_u]$ has function
\[
% \int^z_0 \bar{\theta}^{\ell_u,+,[\ell_u+1.. r_u]}(c, t)dt
F^{1,j-\ell_u}(c,z)
= \sum_{i=\ell_u+1}^{j-1} \left\{\int_{W_{i-1}}^{W_{i}} \bar{\theta}^{\ell_u,+,i}(c, t)dt \right\} + \int_{W_{j-1}}^{z} \bar{\theta}^{\ell_u,+,j}(c, t)dt.
\]
in $O(m_u)$ time.

We update this tree as follows: 
Suppose that a persistent segment tree contains the information of $T_h$. 
We update this persistent segment tree by adding the information of $T_{h+1}$. 
Let $p'$ be the maximizer in \eqref{eq:c_u+_h},
i.e., when $c = c^{u,+}_h$, 
$\bar{\theta}^{\ell_u,+,j^{h-1}_{p'}}(c, z)$ and $\bar{\theta}^{\ell_u,+,j^{h-1}_{p'+1}}(c, z)$ come to overlap each other on $(W_{j^{h-1}_{p'+1}-1},W_{j^{h-1}_{p'+2}-1}]$.
% There are the following differences between $T_h$ and $T_{h+1}$:
% (i) 
Therefore, for $z \in (W_{j^{h-1}_{p'+1}-1}, W_{j^{h-1}_{p'+2}-1}]$, 
\begin{eqnarray*}
% \int^z_0 \bar{\theta}^{\ell_u,+,[\ell_u+1.. r_u]}(c, t)dt - F^{p'+1}(c,z) =
F^{h+1,p'+1}(c,z) - F^{h,p'+1}(c,z) =
%- \int_{W_{j^{h-1}_{p'+1}-1}}^{z} \bar{\theta}^{\ell_u,+,j^{h-1}_{p'+1}}(c, z)dz + \int_{W_{j^{h-1}_{p'+1}-1}}^{z} \bar{\theta}^{\ell_u,+,j^{h-1}_{p'}}(c, z)dz.
\int_{W_{j^{h-1}_{p'+1}-1}}^{z} \left\{
- \bar{\theta}^{\ell_u,+,j^{h-1}_{p'+1}}(c, t) + 
\bar{\theta}^{\ell_u,+,j^{h-1}_{p'}}(c, t) \right\}dt,
\end{eqnarray*}
% (ii)
and for $z \in (W_{j^{h-1}_{q}-1}, W_{j^{h-1}_{q+1}-1}]$ with $q \in [p'+2..m_u-h+1]$, 
\begin{equation*}
% \int^z_0 \bar{\theta}^{\ell_u,+,[\ell_u+1.. r_u]}(c, t)dt - F^{q}(c,z) =
F^{h+1,q}(c,z) - F^{h,q}(c,z) =
%- \int_{W_{j^{h-1}_{p'+1}-1}}^{W_{j^{h-1}_{p'+2}-1}} \bar{\theta}^{\ell_u,+,j^{h-1}_{p'+1}}(c, t)dt + \int_{W_{j^{h-1}_{p'+1}-1}}^{W_{j^{h-1}_{p'+2}-1}} \bar{\theta}^{\ell_u,+,j^{h-1}_{p'}}(c, t)dt.
\int_{W_{j^{h-1}_{p'+1}-1}}^{W_{j^{h-1}_{p'+2}-1}}
\left\{ - \bar{\theta}^{\ell_u,+,j^{h-1}_{p'+1}}(c,t) + \bar{\theta}^{\ell_u,+,j^{h-1}_{p'}}(c,t) \right\} dt.
\end{equation*}
This implies that we obtain $T_{h+1}$ 
by adding
$\int_{W_{j^{h-1}_{p'+1}-1}}^{z} \left\{- \bar{\theta}^{\ell_u,+,j^{h-1}_{p'+1}}(c, t) + 
\bar{\theta}^{\ell_u,+,j^{h-1}_{p'}}(c, t) \right\}dt$
and  
$\int_{W_{j^{h-1}_{p'+1}-1}}^{W_{j^{h-1}_{p'+2}-1}}
\left\{ - \bar{\theta}^{\ell_u,+,j^{h-1}_{p'+1}}(c,t) + \bar{\theta}^{\ell_u,+,j^{h-1}_{p'}}(c,t) \right\} dt$
at all the maximal subinterval nodes in $T_h$ for intervals 
$(W_{j^{h-1}_{p'+1}-1}, W_{j^{h-1}_{p'+2}-1}]$ and $(W_{j^{h-1}_{p'+2}-1}, W_{r_u}]$, respectively.
% Moreover, any node in $T_{h}$ that spans subinterval of $(W_{\ell_u}, W_{j^{h-1}_{p'}-1}]$ can be shared with $T_{h+1}$.

Similar to TYPE~{III}, 
we prepare a copy of subtree $T'$ of $T_{h}$, say $T'_{copy}$,
where $T'$ is the minimal subtree containing all the maximal subinterval nodes in $T_{h}$ for interval $(W_{j^{h-1}_{p'+1}-1}, W_{r_u}]$
and their ancestors including $\rho_h$.
Let $\rho_{h+1}$ be the root of $T'_{copy}$.
Note that for each leaf of $T'_{copy}$, the original node in $T_h$ is a maximal subinterval node for $(W_{j^{h-1}_{p'+1}-1}, W_{j^{h-1}_{p'+2}-1}]$ or $(W_{j^{h-1}_{p'+2}-1}, W_{r_u}]$.
In $T'_{copy}$, we add $\int_{W_{j^{h-1}_{p'+1}-1}}^{z} \left\{- \bar{\theta}^{\ell_u,+,j^{h-1}_{p'+1}}(c, t) + 
\bar{\theta}^{\ell_u,+,j^{h-1}_{p'}}(c, t) \right\}dt$
at the leaves corresponding to $(W_{j^{h-1}_{p'+1}-1}, W_{j^{h-1}_{p'+2}-1}]$,
and  
$\int_{W_{j^{h-1}_{p'+1}-1}}^{W_{j^{h-1}_{p'+2}-1}}
\left\{ - \bar{\theta}^{\ell_u,+,j^{h-1}_{p'+1}}(c,t) + \bar{\theta}^{\ell_u,+,j^{h-1}_{p'}}(c,t) \right\} dt$
at the leaves corresponding to $(W_{j^{h-1}_{p'+2}-1}, W_{r_u}]$.
Then, we connect nodes in $T'_{copy}$ with ones in $T_h$ as follows:
$\nu' \in T'_{copy}$ is connected with $\nu \in T_h$ if and only if
$\nu$ is not copied into $T'_{copy}$ and
the original node of $\nu'$ in $T_h$ is $p_{\nu}$.
See Fig.~\ref{fig:4}.
\begin{figure}[t]%[htbp]
  \begin{center}
    \includegraphics[width=11cm,pagebox=cropbox,clip]{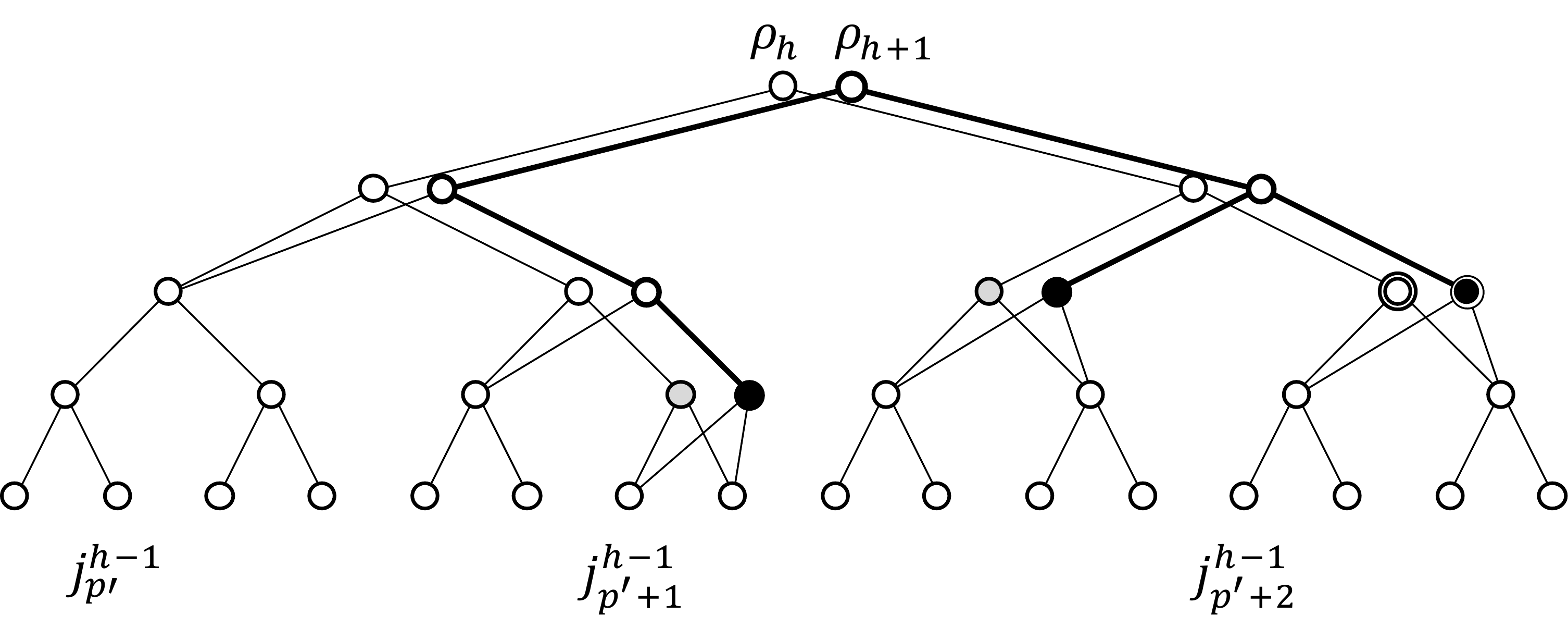}
    \caption{Illustration to show how to add the information of $T_{h+1}$ to $T_h$.
    The gray circles indicate the maximal subinterval nodes for interval $(W_{j^{h-1}_{p'+1}-1}, W_{j^{h-1}_{p'+2}-1}]$.
    The double circles indicate the maximal subinterval nodes for interval $(W_{j^{h-1}_{p'+2}-1}, W_{r_u}]$. 
    The thick subtree indicates $T'_{copy}$,
    the black circle nodes store functions obtained by adding 
    $\int_{W_{j^{h-1}_{p'+1}-1}}^{z} \left\{- \bar{\theta}^{\ell_u,+,j^{h-1}_{p'+1}}(c, t) + \bar{\theta}^{\ell_u,+,j^{h-1}_{p'}}(c, t) \right\}dt$ 
    to functions stored in the gray circle nodes, and
    the black double circle nodes store functions obtained by adding $\int_{W_{j^{h-1}_{p'+1}-1}}^{W_{j^{h-1}_{p'+2}-1}}
    \left\{ - \bar{\theta}^{\ell_u,+,j^{h-1}_{p'+1}}(c,t) + \bar{\theta}^{\ell_u,+,j^{h-1}_{p'}}(c,t) \right\} dt$ 
    to functions stored in the double circle nodes.}
    \label{fig:4}
  \end{center}
\end{figure}
% we prepare a copy of subtree $T'$ in $T_h$, say $T'_{copy}$,
% with all the maximal subinterval nodes for intervals $(W_{j^{h-1}_{p'+1}-1}, W_{j^{h-1}_{p'+2}-1}]$ and $(W_{j^{h-1}_{p'+2}-1}, W_{n}]$,
% and their ancestors.
% Then, we add the above functions for each leaf of $T'_{copy}$
% and
% connect nodes without a left/right child in $T'_{copy}$ with nodes in $T_h$ such that 
% for a node $\nu \in T'_{copy}$, we set $\nu$'s child 
% as a child of the original node in $T'$.
Repeating the above modification for $h \in [1..m_u]$, 
we have a persistent segment tree containing all $T_h$ in $O(m_u \log m_u)$ time.

Using this persistent segment tree, we can compute 
$\int_{0}^{z} \bar{\theta}^{\ell_u, +, [\ell_u+1.. r_u]}(c,t)dt$
for any $z \in [0,W_n]$ and any $c >0$ in $O(\log m_u)$ time in the same manner as for TYPE~III.

\begin{lemma}[Query with TYPE~{IV}]\label{lem:type4}
Suppose that TYPE~{IV} of ${\cal T}$ is available. 
Given a node $u \in {\cal T}$, real values $z \in [0, W_n]$ and $c>0$, we can obtain
(i) value $\int_{0}^{z} \bar{\theta}^{\ell_u, +, [\ell_u+1.. r_u]}(c,t)dt$, and (ii) value $\int_{z}^{W_n} \bar{\theta}^{r_u, -, [\ell_u.. r_u-1]}(c,t)dt$ 
in time $O(\log n)$ respectively.
\end{lemma}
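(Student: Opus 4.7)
The plan is to mimic the query procedure used for TYPE~III but to exploit the additive decomposition built into the persistent segment tree of TYPE~IV. I only describe case~(i); case~(ii) is fully symmetric and handled by the analogous tree constructed for $\bar{\theta}^{r_u,-,[\ell_u..r_u-1]}$.

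First, I would dispose of the boundary cases. If $z\in[0,W_{\ell_u}]$, then $\bar{\theta}^{\ell_u,+,[\ell_u+1..r_u]}(c,t)=0$ on $[0,z]$ by the definition preceding TYPE~III, so the integral is $0$ and is returned in $O(1)$. If $z\in(W_{r_u},W_n]$, I would use the identity
\[
\int_0^z \bar{\theta}^{\ell_u,+,[\ell_u+1..r_u]}(c,t)\,dt
=\int_0^{W_{r_u}}\bar{\theta}^{\ell_u,+,[\ell_u+1..r_u]}(c,t)\,dt+\frac{(z-W_{r_u})^2}{2c},
\]
which is stated explicitly at the start of the TYPE~IV discussion, and reduce to a query with $z:=W_{r_u}$ at the cost of an $O(1)$ arithmetic correction. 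So I may assume $z\in(W_{\ell_u},W_{r_u}]$.

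Next, I would locate the index $h\in[1..m_u]$ with $c\in(c^{u,+}_h,c^{u,+}_{h-1}]$ by binary search on the precomputed sorted list $c^{u,+}_0>c^{u,+}_1>\cdots>c^{u,+}_{m_u-1}$, which takes $O(\log m_u)=O(\log n)$ time. Once $h$ is fixed, I would start at the virtual root $\rho_h$ of $T_h$ inside the persistent structure and walk down to the leaf whose interval $(W_{j-1},W_j]$ contains $z$, always taking the child whose spanned interval contains $z$; at every visited node $\nu$ I would evaluate the function stored there at the point $(c,z)$ (constant function $0$ for a node that was never touched by an update, and otherwise a closed-form expression in $c$ and $z$ that can be evaluated in $O(1)$), and accumulate the running sum. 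The walk visits $O(\log m_u)$ nodes, so the total time is $O(\log n)$.

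The main obstacle is to justify that this accumulated sum is exactly $F^{h,p}(c,z)=\int_0^z \bar{\theta}^{\ell_u,+,[\ell_u+1..r_u]}(c,t)\,dt$, where $p$ is the index with $z\in(W_{j^{h-1}_p-1},W_{j^{h-1}_{p+1}-1}]$. I would prove this by induction on $h$. For $h=1$ the invariant is immediate because every leaf $j$ stores $F^{1,j-\ell_u}(c,z)$ and all inner nodes store the zero function, so the root-to-leaf sum equals the leaf value $F^{1,j-\ell_u}(c,z)$ as required. For the inductive step, the construction adds the corrections
\[
F^{h+1,p'+1}(c,z)-F^{h,p'+1}(c,z)=\int_{W_{j^{h-1}_{p'+1}-1}}^{z}\bigl\{\bar{\theta}^{\ell_u,+,j^{h-1}_{p'}}(c,t)-\bar{\theta}^{\ell_u,+,j^{h-1}_{p'+1}}(c,t)\bigr\}dt
\]
at every maximal subinterval node (in the copied subtree $T'_{copy}$) for $(W_{j^{h-1}_{p'+1}-1},W_{j^{h-1}_{p'+2}-1}]$, and the analogous constant correction $F^{h+1,q}(c,z)-F^{h,q}(c,z)$ at every maximal subinterval node for $(W_{j^{h-1}_{p'+2}-1},W_{r_u}]$. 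By the maximality property of the segment tree, every leaf whose interval lies inside one of these two intervals has exactly one ancestor among these maximal subinterval nodes on its root-to-leaf path; hence for a leaf corresponding to $z\in(W_{j^{h-1}_p-1},W_{j^{h-1}_{p+1}-1}]$ with $p\ge p'+1$ the new sum equals the old $T_h$ sum plus exactly one correction term, giving $F^{h+1,p}(c,z)$; and for leaves with $p\le p'$ the root-to-leaf path in $T_{h+1}$ agrees with that in $T_h$ through the pointer into the un-copied portion, so the sum is unchanged, consistent with $F^{h+1,p}=F^{h,p}$ on that range. This preserves the invariant, completing the induction and therefore the proof of correctness; the $O(\log n)$ bound has already been accounted for above.
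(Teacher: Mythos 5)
Your proof is correct and follows the same approach the paper sketches: handle boundary cases by the identities given, binary-search for the index $h$ with $c \in (c^{u,+}_h, c^{u,+}_{h-1}]$, and walk the persistent segment tree from $\rho_h$ to the leaf containing $z$, accumulating the stored functions. The paper leaves the correctness justification implicit in the phrase ``in the same manner as for TYPE~III''; your induction on $h$ verifying that the root-to-leaf sum equals $F^{h,p}(c,z)$ is the natural elaboration of that implicit argument and matches the paper's construction exactly.
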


%%%%%%%%%%%%%%%%%
% General Capacity 
\section{Query Time for $\opt(i,j)$: Proof of Lemma~\ref{lem:general_query}}\label{app:gen_edge_capacity}

In this section, we give an algorithm to prove Lemma~\ref{lem:general_query} for the non-confluent flow model
since a similar argument holds even for the confluent flow model with minor modification. 
Assume that the data structure ${\cal T}$ with TYPEs~{I, II, III and IV}
introduced in Section~\ref{sec:data_structure} is available. 
%We first describe an outline for computing $\opt(i,j)$ 
We recall the outline for computing $\opt(i,j)$ given two integers $i, j \in [1..n]$ with $i < j$.

\noindent
Phase~{1}: Compute all the maximal subpath nodes for $P_{i+1,j-1}$ by walking on ${\cal T}$ from leaf with a label $i+1$ to one with $j-1$. 
Let $U = \{u_1, u_2, \ldots, u_m\}$ be a set of such the maximal subpath nodes for $P_{i,j}$
so that $r_{u_s}+1=l_{u_{s+1}}$ for $s \in [0..m]$ 
where we define $r_{u_0}=i$ and $l_{u_{m+1}}=j$. 
%{\color{blue}(** the size $m$ of $U$ should be other symbol. ** )}

\noindent
Phase~{2}: 
For each $u \in U$, 
compute a real interval ${\cal I}^+_u$ such that 
$\theta^{i,+}(z) = \theta^{i,+,[\ell_u..r_u]}(z)$ holds on any $z \in {\cal I}^+_u$, 
and a real interval ${\cal I}^-_u$ such that 
$\theta^{j,-}(z) = \theta^{j,-,[\ell_u..r_u]}(z)$ holds on any $z \in {\cal I}^-_u$,
both of which are obtained by 
using TYPEs I and III stored at $u \in {\cal T}$.

\noindent
Phase~{3}: 
Compute the pseudo-intersection point $z^*$ of $\theta^{i,+}(z)$ and $\theta^{j,-}(z)$ on $[W_{i}, W_{j-1}]$ 
by using real intervals obtained Phase~{2}. 

\noindent
Phase~{4}: Compute $\opt(i,j) = \Phi^{i,j}(z^*)$ as follows:
By formula~\eqref{eq:at_subpath}, we have 
\begin{eqnarray*}
\Phi^{i,j}(z^*) &=& \int_{0}^{z^*} \theta^{i, +}(t)dt + \int_{z^*}^{W_n} \theta^{j, -}(t)dt \\
&=& 
\sum_{u \in U}\left\{
\int_{{\cal I}^+_u \cap [0, z^*]} \theta^{i, +, [\ell_u..r_u]}(t)dt + 
\int_{{\cal I}^-_u \cap [z^*, W_n]} \theta^{j, -, [\ell_u..r_u]}(t)dt
\right\}. 
\end{eqnarray*}
Compute each integral $\int\theta^{i, +, [\ell_u..r_u]}(t)dt$ and 
$\int\theta^{j, -, [\ell_u..r_u]}(t)dt$ 
for each $u \in U$ 
using TYPEs II and IV stored at $u$.

% Note that, for the cases of $i=0$ or $j=n+1$, 
% we can also compute $w'(0,j) = \Phi^{j, -}(0)$ and $w'(i,n+1) = \Phi^{i, +}(W_n)$ by the same operations except for Phase~{3}.

In the following, we see the details of Phases~{2},~{3}, and~{4}. 

\subsection{Phase~{2}}\label{app:phase2}

In Phase~{2}, we construct, for $u \in U$, 
real intervals ${\cal I}^+_u$ and ${\cal I}^-_u$ such that
$\theta^{i,+}(z) = \theta^{i,+,[\ell_u..r_u]}(z)$ holds on any $z \in {\cal I}^+_u$
and 
$\theta^{j,-}(z) = \theta^{j,-,[\ell_u..r_u]}(z)$ holds on any $z \in {\cal I}^-_u$,
respectively. 
We give only the computation for ${\cal I}^+_u$ 
since ${\cal I}^-_u$ can be constructed in a symmetric manner. 
First confirm by formula~\eqref{eq:function_right} and Definition~\ref{def:subup} that 
$\theta^{i,+}(z) = \theta^{i,+,[i+1..h]}(z)$ for $z \in [W_i,W_h]$ with $h \ge i+1$. 
Thus, it is enough to obtain the information of $\theta^{i,+,[i+1..j-1]}(z)$, 
since we consider the case with $z \in [W_i, W_{j-1}]$. 

For each $u \in U$, ${\cal I}^+_u$ consists of 
three consecutive intervals ${\cal I}^{+}_{u,1}$, ${\cal I}^{+}_{u,2}$ and ${\cal I}^{+}_{u,3}$ which satisfy the following conditions:
\begin{eqnarray}\label{eq:interval_condition_phase2}
\theta^{i,+}(z) = 
\left\{
\begin{array}{ll}
\theta^{i,+,\ell_u}(z) & \text{ if } z \in {\cal I}^{+}_{u,1}, \\
\bar{\theta}^{\ell_u,+,[\ell_u+1..r_u]}(C_{i,\ell_u}, z)+\tau \cdot L_{i,\ell_u}
& \text{ if } z \in {\cal I}^{+}_{u,2},  \\
\theta^{\ell_u,+,[\ell_u+1..r_u]}(z)+\tau \cdot L_{i,\ell_u} 
& \text{ if } z \in {\cal I}^{+}_{u,3}.
\end{array}
\right.
\end{eqnarray}
Moreover, these intervals have the following forms:
\begin{eqnarray*}
{\cal I}^+_{u_s,h} = 
\left\{
\begin{array}{ll}
[W_{i}, \beta_{1,1}] & \quad \text{if } s=1 \text{ and } h = 1,\\
(\alpha_{s,h}, \beta_{s,h}] & \quad \text{otherwise }
\end{array}
\right.
\end{eqnarray*}
such that $\alpha_{s+1,1} = \beta_{s,3}$ and $\alpha_{s,h+1} = \beta_{s,h}$ for $h = 1,2$.
Note that, for any value $\alpha$, 
an interval $(\alpha, \alpha]$ denotes the empty interval. 

%We have the useful property for these intervals.
%\begin{property}\label{prop:interval}
%For $s \in [1..m]$, the following conditions hold:
%(i) ${\cal I}^+_{u_s} \preceq {\cal I}^+_{u_{s+1}}$, and 
%(ii) ${\cal I}^+_{u_s, 1} \preceq {\cal I}^+_{u_s, 2} \preceq {\cal I}^+_{u_s, 3}$.
%\end{property}

In Phase~2, we inductively construct ${\cal I}^{+}_{u_s, 1}$, ${\cal I}^{+}_{u_s, 2}$ and ${\cal I}^{+}_{u_s, 3}$ for $s \in [1..m]$ as follows:

\noindent
{\bf [Induction hypothesis]}
Assume that it has been obtained 
${\cal I}^{+}_{u_t, h}$ for all $t \in [1..s]$ and $h \in [1..3]$
such that
\[
\theta^{i, +, [i+1..r_{u_s}]}(z) = \theta^{i, +, [\ell_{u_t}..r_{u_t}]}(z)
\]
for any $z \in {\cal I}^+_{u_t}$. 

\noindent
{\bf [Induction step]}
% Compute ${\cal I}^{+}_{u_{s+1}, h}$ for all $h \in [1..3]$
% such that $\theta^{i, +, [i+1..r_{u_{s+1}}]}(z) = \theta^{i, +, [\ell_{u_{s+1}}..r_{u_{s+1}}]}(z)$
% for any $z \in {\cal I}^+_{u_{s+1}}$, and
% update ${\cal I}^{+}_{u_t, h}$ for all $t \in [1..s]$ and $h \in [1..3]$
% so that
% $\theta^{i, +, [i+1..r_{u_{s+1}}]}(z) = \theta^{i, +, [\ell_{u_t}..r_{u_t}]}(z)$
% for any $z \in {\cal I}^+_{u_t}$. 
The induction step consists of three substeps: 
\begin{description}
    \item[Substep~1.] 
    Compute an interval ${\cal J}_1$ such that 
    \[
    \theta^{i, +, [i+1..\ell_{u_{s+1}}]}(z) = \theta^{i, +, \ell_{u_{s+1}}}(z)
    \] 
    for any $z \in {\cal J}_1$, and 
    update ${\cal I}^{+}_{u_t, h}$ for all $t \in [1..s]$ and $h \in [1..3]$
    so that
    \[
    \theta^{i, +, [i+1..\ell_{u_{s+1}}]}(z) = \theta^{i, +, [\ell_{u_t}..r_{u_t}]}(z)
    \]
    for any $z \in {\cal I}^+_{u_t}$. 
    \item[Substep~2.]
    Compute intervals ${\cal J}_2$ and ${\cal J}_3$ such that 
    \begin{eqnarray}
    \theta^{i, +, [\ell_{u_{s+1}}+1..r_{u_{s+1}}]}(z) = \left\{
    \begin{array}{l}
    \bar{\theta}^{\ell_{u_{s+1}},+,[\ell_{u_{s+1}}+1..r_{u_{s+1}}]}(C_{i,\ell_{u_{s+1}}},z)+\tau \cdot L_{i,\ell_{u_{s+1}}} \\
    \qquad\qquad\qquad\qquad\qquad \text{ for any }z \in {\cal J}_2,\\
    %\theta^{i, +, [\ell_{u_{s+1}}+1..r_{u_{s+1}}]}(z) = 
    \theta^{\ell_{u_{s+1}},+,[\ell_{u_{s+1}}+1..r_{u_{s+1}}]}(z)+\tau \cdot L_{i,\ell_{u_{s+1}}} \\ \qquad\qquad\qquad\qquad\qquad \text{ for any } z \in {\cal J}_3.
    \end{array}
    \right. \notag
    \end{eqnarray}
    \item[Substep~3.]
    Based on intervals ${\cal J}_1$, ${\cal J}_2$ and ${\cal J}_3$, compute ${\cal I}^{+}_{u_{s+1}, h}$ for all $h \in [1..3]$
    such that $\theta^{i, +, [i+1..r_{u_{s+1}}]}(z) = \theta^{i, +, [\ell_{u_{s+1}}..r_{u_{s+1}}]}(z)$
    for any $z \in {\cal I}^+_{u_{s+1}}$, and
    update ${\cal I}^{+}_{u_t, h}$ for all $t \in [1..s]$ and $h \in [1..3]$
    so that
    $\theta^{i, +, [i+1..r_{u_{s+1}}]}(z) = \theta^{i, +, [\ell_{u_t}..r_{u_t}]}(z)$
    for any $z \in {\cal I}^+_{u_t}$. 
\end{description}

Let us show the details of each of three substeps of the induction step. 

% The main idea in Phase~{2} is to inductively construct
% ${\cal I}^{+}_{u_s, 1}$, ${\cal I}^{+}_{u_s, 2}$ and ${\cal I}^{+}_{u_s, 3}$
% and update 
% ${\cal I}^{+}_{u_{1}}, \ldots, {\cal I}^{+}_{u_{s-1}}$
% % for $s = 2, \ldots, m$
% such that 
% $\theta^{i, +, [i+1..r_{u_s}]}(z) = \theta^{i, +, [\ell_{u_t}..r_{u_t}]}$
% for any $z \in {\cal I}^+_{u_t}$ with $t \in [1..s]$. 

% By this inductive operation, 
% we obtain intervals ${\cal I}^+_{u_s}$ such that 
% $\theta^{i, +, [i+1..j-1]}(z) = \theta^{i, +, [\ell_{u_s}..r_{u_s}]}$
% for any $z \in {\cal I}^+_{u_s}$ and $s \in [1..m]$
% since $r_{u_m} = j - 1$ and $\theta^{i, +}(z) = \theta^{i, +, [i+1..j-1]}(z)$ for $z \in [W_i, W_{j-1}]$.

% Let us suppose that we have all the intervals 
% ${\cal I}^+_{u_t, h}$ for $t \in [1..s]$ and $h \in [1..3]$
% such that $\theta^{i, +, [i+1..r_{u_{s}}]}(z) = \theta^{i, +, [\ell_{u_t}..r_{u_t}]}$
% for any $t \in [1..s]$ and any $z \in {\cal I}^+_{u_t} = 
% {\cal I}^+_{u_t,1} \oplus {\cal I}^+_{u_t,2} \oplus {\cal I}^+_{u_t, 3}$. 

% Compute ${\cal I}^{+}_{u_{s+1}, 1}$ for all $h \in [1..3]$
% such that $\theta^{i, +, [i+1..r_{u_{s+1}}]}(z) = \theta^{i, +, [\ell_{u_{s+1}}..r_{u_{s+1}}]}$
% for any $z \in {\cal I}^+_{u_t}$, and
% update ${\cal I}^{+}_{u_t, h}$ for all $t \in [1..s]$ and $h \in [1..3]$
% so that
% $\theta^{i, +, [i+1..r_{u_{s+1}}]}(z) = \theta^{i, +, [\ell_{u_t}..r_{u_t}]}$
% for any $z \in {\cal I}^+_{u_t}$.

\noindent
{\bf Substep 1.} In this step, 
we compute an interval ${\cal J}_1 = (\gamma, W_{j-1}]$
such that $\theta^{i, +, [i+1..r_{u_{s}}]}(z) \leq \theta^{i, +, \ell_{u_{s+1}}}(z)$ for any $z \in {\cal J}_1$. 
It means that $\theta^{i, +, [i+1..\ell_{u_{s+1}}]}(z) = \theta^{i, +, \ell_{u_{s+1}}}(z)$ for any $z \in {\cal J}_1$.
And then we update $s$ intervals 
${\cal I}^+_{u_t}$ for $t \in [1..s]$
such that $\theta^{i, +, [i+1..\ell_{u_{s+1}}]}(z) = \theta^{i, +, [\ell_{u_t}..r_{u_t}]}(z)$
for any $t \in [1..s]$ and any $z \in {\cal I}^+_{u_t}$. 

Every slope of line segments of $\theta^{i, +, [i+1..r_{u_{s}}]}(z)$
is at most $1/C_{i,r_{u_s}}$ and the slope of 
$\theta^{i, +, \ell_u}(z)$ is $1/C_{i,\ell_{u_{s+1}}}$.
Since $C_{i,\ell_{u_{s+1}}} = \min\{c_{r_{u_s}}, C_{i,r_{u_s}}\} \leq C_{i,r_{u_s}}$, we have $1/C_{i,r_{u_s}} \leq 1/C_{i,\ell_{u_{s+1}}}$ and 
it implies that 
$\theta^{i, +, \ell_{u_{s+1}}}(z) - \theta^{i, +, [i+1..r_{u_{s}}]}(z)$
is non-decreasing in $z \in (W_{\ell_{u_{s+1}}}, W_n]$.
Thus, the value $\gamma$ is the maximum value $z \in [W_{\ell_{u_{s+1}}}, W_{j-1}]$ satisfying
\begin{eqnarray}\label{eq:condition_phase2_1}
\theta^{i, +, [i+1..r_{u_{s}}]}(z) - \theta^{i, +, \ell_{u_{s+1}}}(z) < 0.
\end{eqnarray}
We find such $\gamma$ in $O(\log^2 n)$ time as follows.

\noindent
[Substep 1-1.] Find the maximum value $\beta \in \left\{\beta_{t,h} \mid t \in [1..s], h \in [1..3]\right\}$ 
satisfying 
\begin{eqnarray}
\theta^{i, +, [i+1, r_{u_s}]}(\beta) - 
\theta^{i, +, \ell_{u_{s+1}}}(\beta)
%\theta^{i, +, [\ell_{u_t}, r_{u_t}]}(\beta) -
%\theta^{i, +, \ell_{u_{s+1}}}(\beta) 
\geq 0.
\end{eqnarray}
We calculate the second term $\theta^{i, +, \ell_{u_{s+1}}}(\beta)$
by elementally calculation of formula~\eqref{eq:subfunction_right} in $O(1)$ time. 
For computing the first term $\theta^{i, +, [i+1, r_{u_s}]}(\beta)$, 
we note that 
\[
\theta^{i, +, [i+1, r_{u_t}]}(\beta_{t,h}) = \theta^{i, +, [\ell_{u_t}, r_{u_t}]}(\beta_{t,h})
\]
holds because
$\theta^{i, +, [i+1, r_{u_t}]}(z) = \theta^{i, +, [\ell_{u_t}, r_{u_t}]}(z)$
for any $z \in {\cal I}^{+}_{u_t}$. 
%For computing the first term $\theta^{i, +, [\ell_{u_t}, r_{u_t}]}(\beta)$, 
Let us consider three cases, depending on an integer $h \in [1..3]$ of $\beta = \beta_{t,h}$: 
\begin{enumerate}[\textrm{Case} (i)]
\item If $\beta = \beta_{t,1}$, we have
\[
\theta^{i, +, [\ell_{u_t}, r_{u_t}]}(\beta_{t,1}) = 
\theta^{i, +, \ell_{u_t}}(\beta_{t,1}),
\]
which is obtained $O(1)$ time by elementally calculation of formula~\eqref{eq:subfunction_right}. 

\item If $\beta = \beta_{t,2}$, we have
\[
\theta^{i, +, [\ell_{u_t}, r_{u_t}]}(\beta_{t,2}) = 
\bar{\theta}^{\ell_u, +, [\ell_{u_t}, r_{u_t}]}(C_{i, \ell_{u_t}}, \beta_{t,2}) + \tau \cdot L_{i, \ell_u}, 
\]
in which $\bar{\theta}^{\ell_u, +, [\ell_{u_t}, r_{u_t}]}(C_{i, \ell_{u_t}}, \beta_{t,2})$ can be obtained in $O(\log n)$ time by Lemma~\ref{lem:type3}.

\item 
If $\beta = \beta_{t,3}$, we have
\[
\theta^{i, +, [\ell_{u_t}, r_{u_t}]}(\beta_{t,3}) = 
\theta^{\ell_u, +, [\ell_{u_t}, r_{u_t}]}(\beta_{t,3}) + \tau \cdot L_{i, \ell_u},
\]
in which $\theta^{\ell_u, +, [\ell_{u_t}, r_{u_t}]}(C_{i, \ell_{u_t}}, \beta_{t,3})$ can be obtained in $O(\log n)$ time by Lemma~\ref{lem:type1}. 
Therefore, this operation requires $O(\log^2 n)$ time since we have $s \leq m = O(\log n)$.
If we obtain $\beta = \beta_{s,3} = W_{j-1}$, 
we set $\gamma = W_{j-1}$, that is, ${\cal J}_1$ is the empty interval. 
It implies that all of ${\cal I}^{+}_{u_{s+1}, 1}$, ${\cal I}^{+}_{u_{s+1}, 2}$, and ${\cal I}^{+}_{u_{s+1}, 3}$ are the empty intervals and we are done.
Otherwise, we have two integers $t$ and $h$ such that 
$\gamma \in {\cal I}^{+}_{u_t, h}$ by using $\beta$. 
\end{enumerate}
%{\color{blue}($\Leftarrow$ Write In More Detail?)}

\noindent
[Substep 1-2.] 
Letting us suppose that $\gamma \in {\cal I}^{+}_{u_t, h} = (\alpha_{u,h}, \beta_{u,h}]$, 
we can rewrite the condition~\eqref{eq:condition_phase2_1} as follows:
The value $\gamma$ is the maximum value $z \in {\cal I}^{+}_{u_t, h}$ satisfying
\begin{eqnarray}\label{eq:condition_phase2_2}
\theta^{i, +, [\ell_{u_t}..r_{u_t}]}(z) - \theta^{i, +, \ell_{u_{s+1}}}(z) < 0.
\end{eqnarray}
We find such $z$ as follows: 

\begin{enumerate}[\textrm{Case} (i)]
\item 
If $\gamma \in {\cal I}^{+}_{u_t, 1}$, then we have 
\[
\theta^{i, +, [\ell_{u_t}..r_{u_t}]}(z) = \theta^{i, +, \ell_{u_t}}(z),
\]
which is a linear function on ${\cal I}^{+}_{u_t, h}$.
Thus, the value $\gamma$ is the intersection point of 
$\theta^{i, +, \ell_{u_t}}(z)$ and $\theta^{i, +, \ell_{u_{s+1}}}(z)$,
which is obtained in $O(1)$ time.

\item If $\gamma \in {\cal I}^{+}_{u_t, 2}$, then we have 
\[
\theta^{i, +, [\ell_{u_t}..r_{u_t}]}(z) = \bar{\theta}^{i, +, [\ell_{u_t}+1..r_{u_t}]}(C_{i, \ell_{u_t}}, z)
= \bar{\theta}^{\ell_{u_t}, +, [\ell_{u_t}+1..r_{u_t}]}(C_{i, \ell_{u_t}}, z) + \tau \cdot L_{i, \ell_{u_t}}.
\]
Since functions $\bar{\theta}^{i, +, [\ell_{u_t}+1..r_{u_t}]}(C_{i, \ell_{u_t}}, z)$ and 
$\bar{\theta}^{\ell_{u_t}, +, [\ell_{u_t}+1..r_{u_t}]}(C_{i, \ell_{u_t}}, z)$
consists of line segments with the same slope $1 / C_{i, \ell_{u_t}}$, 
all breakpoints of these functions on ${\cal I}^{+}_{u_t,2}$ are 
contained in $\{\alpha_{t,2}, \beta_{t,2}\} \cup \{W_h \mid h \in [1..n] \text{ and } W_h \in {\cal I}^{+}_{u_t,2}\}$.

We find the minimum value $\gamma' \in \{\alpha_{t,2}, \beta_{t,2}\} \cup \{W_h \mid h \in [1..n] \text{ and } W_h \in {\cal I}^{+}_{u_t,2}\}$ such that 
\begin{eqnarray*}
& &\theta^{i, +, [\ell_{u_t}..r_{u_t}]}(z) - \theta^{i, +, \ell_{u_{s+1}}}(z) \\
& = & \bar{\theta}^{\ell_{u_t}, +, [\ell_{u_t}+1..r_{u_t}]}(C_{i, \ell_{u_t}}, z) + \tau \cdot L_{i, \ell_{u_t}}
- \theta^{i, +, \ell_{u_{s+1}}}(z) \leq 0
\end{eqnarray*}
by binary search.
Because we can obtain the term 
$\bar{\theta}^{\ell_{u_t}, +, [\ell_{u_t}+1..r_{u_t}]}(C_{i, \ell_{u_t}}, z)$
for any $z$ in $O(\log n)$ time by Lemma~\ref{lem:type3}, 
this binary search takes $O(\log^2 n)$ time in total. 
Using the value $\gamma'$, we obtain an integer $H$ such that 
$\bar{\theta}^{\ell_{u_t}, +, [\ell_{u_t}+1..r_{u_t}]}(C_{i, \ell_{u_t}}, \gamma')
= \bar{\theta}^{\ell_{u_t}, +, H}(C_{i, \ell_{u_t}}, \gamma')$ holds 
by Lemma~\ref{lem:type3} in $O(\log n)$ time. 
It implies that the value $\gamma$ is the intersection point of two line segments 
$\bar{\theta}^{\ell_{u_t}, +, H}(C_{i, \ell_{u_t}}, z) + \tau \cdot L_{i,\ell_{u_t}}$ and $\theta^{i, +, \ell_{u_{s+1}}}(z)$, which is computed by elementally calculation. 

\item If $\gamma \in {\cal I}^{+}_{u_t, 3}$, then we have 
\[
\theta^{i, +, [\ell_{u_t}..r_{u_t}]}(z) = 
\theta^{\ell_{u_t}, +, [\ell_{u_t}+1..r_{u_t}]}(z) + \tau \cdot L_{i, \ell_{u_t}}.
\]
Recall ${\cal B}^{u_{t}, +} = (b^{u_{t}, +}_{1}, b^{u_{t}, +}_{2}, ...)$ is breakpoints of $\theta^{\ell_{u_t}, +, [\ell_{u_t}+1..r_{u_t}]}(z)$. 
We find the minimum value $\gamma' \in \{\alpha_{t,3}, \beta_{t,3}\} \cup \{b^{u_{t}, +}_{p} \mid b^{u_{t}, +}_{p} \in {\cal I}^{+}_{u_t,2}\}$ such that 
\begin{eqnarray*}
\theta^{i, +, [\ell_{u_t}..r_{u_t}]}(z) - \theta^{i, +, \ell_{u_{s+1}}}(z) 
 =  
\theta^{\ell_{u_t}, +, [\ell_{u_t}+1..r_{u_t}]}(z) + \tau \cdot L_{i, \ell_{u_t}}
- \theta^{i, +, \ell_{u_{s+1}}}(z) \leq 0
\end{eqnarray*}
by binary search, which requires $O(\log n)$ time
because the term $\theta^{\ell_{u_t}, +, [\ell_{u_t}+1..r_{u_t}]}(z)$ 
is computed for $z = \alpha_{t,3}, \beta_{t,3}$
in $O(\log n)$ time by Lemma~\ref{lem:type1}
and for $z = b^{u_t,+}_{p}$ in $O(1)$ time by using the list of TYPE~I. 
Then, we obtain an integer $H$ such that 
$\theta^{\ell_{u_t}, +, [\ell_{u_t}+1..r_{u_t}]}(\gamma')
=\theta^{\ell_{u_t}, +, H}(C_{i, \ell_{u_t}}, \gamma')$ holds 
by Lemma~\ref{lem:type1} in $O(\log n)$ time. 
It implies that the value $\gamma$ is the intersection point of two line segments 
$\theta^{\ell_{u_t}, +, H}(z) + \tau \cdot L_{i,\ell_{u_t}}$ and $\theta^{i, +, \ell_{u_{s+1}}}(z)$, which is computed by elementally calculation. 
\end{enumerate}

Thus, we can obtain $\gamma$ in $O(\log^2 n)$ time
and done the computation of interval ${\cal J}$. 
In order to complete the first step, 
we update intervals %${\cal I}^{+}_{u_t, h}$
that have an overlap with ${\cal J}$ as follows:
In the above step, if $\gamma \in {\cal I}^{+}_{u_t, h} = (\alpha_{t,h}, \beta_{t,h}]$, 
then we set ${\cal I}^{+}_{u_t, h} = (\alpha_{t,h}, \gamma]$. 
Moreover, we set
all intervals ${\cal I}^{+}_{u_t, h'}$ for $h'>h$ and ${\cal I}^{+}_{u_t', h'}$ for $t'>t$ and $h' \in [1..3]$
to the empty interval $(\gamma, \gamma]$.\\ 

\noindent
{\bf Substep 2.} 
In the second step, first of all, 
we find the smallest integer $q$ such that $C_{i, q} < C_{i, \ell_{u_{s+1}}}$ in $O(\log n)$ time
by binary search.
Using this $q$, we have
\begin{eqnarray}\label{eq:phase2_3}
& &\theta^{i,+,[\ell_{u_{s+1}}+1..r_{u_{s+1}}]}(z) \notag \\
&= &
\max\{
\bar{\theta}^{i,+,[\ell_{u_{s+1}}+1..q-1]}(C_{i, \ell_{u_{s+1}}}, z), 
\theta^{i,+,[\ell_{u_{s+1}}+1..r_{u_{s+1}}]}(z)\} \notag \\
&= &
\max\{
\bar{\theta}^{\ell_{u_{s+1}},+,[\ell_{u_{s+1}}+1..q-1]}(C_{i, \ell_{u_{s+1}}}, z), 
\theta^{\ell_{u_{s+1}},+,[\ell_{u_{s+1}}+1..r_{u_{s+1}}]}(z)\} + \tau \cdot L_{i, \ell_{u_{s+1}}}. 
\end{eqnarray}
Note that 
$\bar{\theta}^{\ell_{u_{s+1}},+,[\ell_{u_{s+1}}+1..q-1]}(C_{i, \ell_{u_{s+1}}}, z)
- \theta^{\ell_{u_{s+1}},+,[\ell_{u_{s+1}}+1..r_{u_{s+1}}]}(z)$
is non-decreasing on $(W_{q-1}, W_{j-1}]$.
Thus, there exists a value $\delta \in (W_{q-1}, W_{j-1}]$ such that 
\[
\bar{\theta}^{\ell_{u_{s+1}},+,[\ell_{u_{s+1}}+1..q-1]}(C_{i, \ell_{u_{s+1}}}, \delta) - 
\theta^{\ell_{u_{s+1}},+,[\ell_{u_{s+1}}+1..r_{u_{s+1}}]}(\delta)
\leq 0 
\]
and 
\[
\bar{\theta}^{\ell_{u_{s+1}},+,[\ell_{u_{s+1}}+1..q-1]}(C_{i, \ell_{u_{s+1}}}, \delta - \epsilon) -
\theta^{\ell_{u_{s+1}},+,[\ell_{u_{s+1}}+1..r_{u_{s+1}}]}(\delta - \epsilon) 
> 0 
\]
hold for any sufficiently small positive $\epsilon$.
We find such the value $\delta$ and set two intervals 
${\cal J}_2 = (W_{\ell_{u_{s+1}}}, \delta]$ and ${\cal J}_3 = (\delta, W_{j-1}]$.
Equation~\eqref{eq:phase2_3} implies that we have
\begin{eqnarray}\label{eq:phase2_4}
\theta^{i,+,[\ell_{u_{s+1}}+1..r_{u_{s+1}}]}(z)
= 
\left\{
\begin{array}{ll}
\bar{\theta}^{\ell_{u_{s+1}},+,[\ell_{u_{s+1}}+1..q-1]}(C_{i, \ell_{u_{s+1}}}, z) + \tau \cdot L_{i, \ell_{u_{s+1}}} &  \text{ if } z \in {\cal J}_2, \\
\theta^{\ell_{u_{s+1}},+,[\ell_{u_{s+1}}+1..r_{u_{s+1}}]}(z)   + \tau \cdot L_{i, \ell_{u_{s+1}}}  &  \text{ if } z \in {\cal J}_3.
\end{array}
\right.
\end{eqnarray}

In order to obtain the value $\delta$, 
we find integers $h_2$ such that 
\[
\bar{\theta}^{\ell_{u_{s+1}},+,[\ell_{u_{s+1}}+1..q-1]}(C_{i, \ell_{u_{s+1}}}, W_{q-1}) = 
\bar{\theta}^{\ell_{u_{s+1}},+,h_2}(C_{i, \ell_{u_{s+1}}}, W_{q-1}) 
\]
holds in $O(\log n)$ time by Lemma~\ref{lem:type3}. 
Note that we have 
\[
\bar{\theta}^{\ell_{u_{s+1}},+,[\ell_{u_{s+1}}+1..q-1]}(C_{i, \ell_{u_{s+1}}}, z) = 
\bar{\theta}^{\ell_{u_{s+1}},+,h_2}(C_{i, \ell_{u_{s+1}}}, z)
\]
for any $z \in (W_{q-1}, W_{j-1}]$. 
Next, we find the minimum value $\delta'$ over the breakpoints $\{b^{u_{s+1}, +}_{p}
\}$ of $\theta^{\ell_{u_{s+1}}, +, [\ell_{u_{s+1}}+1..r_{u_{s+1}}]}(z)$
such that 
\begin{eqnarray*}
\theta^{\ell_{u_{s+1}}, +, h_2}(z) - \theta^{\ell_{u_{s+1}}, +, [\ell_{u_{s+1}}+1..r_{u_{s+1}}]}(z) 
\leq 0
\end{eqnarray*}
by binary search, which requires $O(\log n)$ time
because the term $\theta^{\ell_{u_{s+1}}, +, [\ell_{u_{s+1}}+1..r_{u_{s+1}}]}(z)$ 
is computed in $O(1)$ time
for $z = b^{u_{s+1},+}_{p}$ by using the list of TYPE~I. 
Then, applying Lemma~\ref{lem:type1}, 
we obtain an integer $h_3$ such that 
$\theta^{\ell_{u_{s+1}}, +, [\ell_{u_{s+1}}+1..r_{u_{s+1}}]}(\delta')
=\theta^{\ell_{u_{s+1}}, +, h_3}(\delta')$ holds 
in $O(\log n)$ time. 
It implies that the value $\delta$ is the intersection point of two line segments 
$\bar{\theta}^{\ell_{u_{s+1}}, +, h_2}(C_{i, \ell_{u_{s+1}}},z)$ and 
$\theta^{\ell_{u_{s+1}}, +, h_3}(z)$, which is computed by elementally calculation.\\

\noindent
{\bf Substep 3.} Finally, we 
construct ${\cal I}^{+}_{u_{s+1}, 1}$, ${\cal I}^{+}_{u_{s+1}, 2}$ and ${\cal I}^{+}_{u_{s+1}, 3}$
from intervals ${\cal I}^{+}_{u_t}$ for $t \in [1..s]$, 
${\cal J}_1 = (\gamma, W_{j-1}]$, ${\cal J}_2 = (W_{\ell_{u_{s+1}}}, \delta]$, and ${\cal J}_3 = (\delta, W_{j-1}]$. 

Recall that we have 
\begin{eqnarray*}
\theta^{i, +, [i+1..\ell_{u_{s+1}}]}(z) = 
\left\{
\begin{array}{ll}
\theta^{i, +, [\ell_{u_t}..r_{u_t}]}(z) & \text{ if } z \in {\cal I}^{+}_{u_t}, \\
\theta^{i, +, \ell_{u_{s+1}}}(z) & \text{ if } z \in {\cal J}_1,
\end{array}
\right.
\end{eqnarray*}
and 
\begin{eqnarray*}
\theta^{i,+,[\ell_{u_{s+1}}+1..r_{u_{s+1}}]}(z)
= 
\left\{
\begin{array}{ll}
\bar{\theta}^{\ell_{u_{s+1}},+,[\ell_{u_{s+1}}+1..q-1]}(C_{i, \ell_{u_{s+1}}}, z) + \tau \cdot L_{i, \ell_{u_{s+1}}} &  \text{ if } z \in {\cal J}_2, \\
\theta^{\ell_{u_{s+1}},+,[\ell_{u_{s+1}}+1..r_{u_{s+1}}]}(z)   + \tau \cdot L_{i, \ell_{u_{s+1}}}  &  \text{ if } z \in {\cal J}_3.
\end{array}
\right.
\end{eqnarray*}

Let $\Delta(z)$ denote a function $\theta^{i, +, [i+1..\ell_{u_{s+1}}]}(z) - \theta^{i,+,[\ell_{u_{s+1}}+1..r_{u_{s+1}}]}(z)$.
We confirm that $\Delta(z)$
%$\theta^{i, +, [i+1..\ell_{u_{s+1}}]}(z) - \theta^{i,+,[\ell_{u_{s+1}}+1..r_{u_{s+1}}]}(z)$
is non-increasing on $(W_{\ell_{u_{s+1}}}, W_{j-1}]$
by comparing the slopes of line segments of functions. 
Let a value $\mu$ denote the minimum value $z$ such that $\Delta(z) \leq 0$.
We find such $\mu$ in $O(\log^2 n)$ times as follows:

First, we check if $\mu \leq \delta$ by computing $\Delta(\delta)$.
%comparing $\theta^{i, +, [i+1..\ell_{u_{s+1}}]}(\delta)$ 
%and $\theta^{i, +, [\ell_{u_{s+1}}+1..r_{u_{s+1}}]}(\delta)$. 
If we have $\mu \leq \delta$, then we check if $\mu \in {\cal J}_2$, (that is, $\mu > W_{\ell_{u_{s+1}}}$) by computing $\Delta(W_{\ell_{u_{s+1}}} + \epsilon)$, 
%by comparing $\theta^{i, +, [i+1..\ell_{u_{s+1}}]}(W_{\ell_{u_{s+1}}} + \epsilon)$ 
%and $\theta^{i, +, [\ell_{u_{s+1}}+1..r_{u_{s+1}}]}(W_{\ell_{u_{s+1}}} + \epsilon)$, 
where $\epsilon$ is a sufficiently small value. 
If we have $\mu > \delta$, then we check if $\mu = W_{j-1}$ 
by computing $\Delta(W_{j-1})$, 
%by comparing $\theta^{i, +, [i+1..\ell_{u_{s+1}}]}(W_{j-1})$ 
%and $\theta^{i, +, [\ell_{u_{s+1}}+1..r_{u_{s+1}}]}(W_{j-1}$. 
Note that all the above computation for $\Delta(z)$ are done in $O(\log n)$ time by Lemmas~\ref{lem:type1} and~\ref{lem:type3}. 

Thus, we have the four cases 
for setting three intervals ${\cal I}^{+}_{u_{s+1}, 1}$, ${\cal I}^{+}_{u_{s+1}, 2}$ and ${\cal I}^{+}_{u_{s+1}, 3}$:

\begin{enumerate}[\textrm{Case} (i)]
\item If $\mu = W_{\ell_{u_{s+1}}}$, then we set 
${\cal I}^{+}_{u_{s+1}, 2} = {\cal J}_2$, ${\cal I}^{+}_{u_{s+1}, 3} = {\cal J}_3$, 
and set ${\cal I}^{+}_{u_{s+1}, 1}$ to $(\gamma, \mu]$ if $\gamma < \mu$, 
to the empty interval $(\mu, \mu]$, otherwise. 

\item If $\mu \in {\cal J}_2 = (W_{\ell_{u_{s+1}}}, \delta]$, then 
then we find the minimum value $\mu' \in \{W_h \mid W_{h} \in {\cal J}_2\}$ such that $\Delta(\mu') \leq 0$ by binary search, which takes $O(\log^2 n)$ time. 
We find an integer $H$ such that 
\begin{eqnarray*}
\theta^{i,+,[\ell_{u_{s+1}}+1..r_{u_{s+1}}]}(\mu')
&=& 
\bar{\theta}^{\ell_{u_{s+1}},+,[\ell_{u_{s+1}}+1..q-1]}(C_{i, \ell_{u_{s+1}}}, \mu') + \tau \cdot L_{i, \ell_{u_{s+1}}} \\
&=& 
\bar{\theta}^{\ell_{u_{s+1}},+,H}(C_{i, \ell_{u_{s+1}}}, \mu') + \tau \cdot L_{i, \ell_{u_{s+1}}}
\end{eqnarray*}
in $O(\log n)$ time by Lemma~\ref{lem:type3}. 
Thus, $\mu$ is the minimum value $z$ such that 
\[
\theta^{i, +, [i+1..\ell_{u_{s+1}}]}(z)
 - \bar{\theta}^{\ell_{u_{s+1}},+,H}(C_{i, \ell_{u_{s+1}}}, z) - \tau \cdot L_{i, \ell_{u_{s+1}}} \leq 0, 
\]
and can be obtained in $O(\log^2 n)$ by the same way of the first step.

Finally, we set ${\cal I}^{+}_{u_{s+1}, 1}$ to $(\gamma, \mu]$ if $\gamma < \mu$, 
to the empty interval $(\mu, \mu]$, otherwise, 
and set ${\cal I}^{+}_{u_{s+1}, 2} = (\mu, \delta]$ and ${\cal I}^{+}_{u_{s+1}, 3} = {\cal J}_3$. 

\item If we have $\mu \in {\cal J}_3 = (\delta, W_{j-1}]$, then we find the minimum value $\mu'$ over the breakpoints $\{b^{u_{s+1}, +}_{p}\}$ of $\theta^{\ell_{u_{s+1}}, +, [\ell_{u_{s+1}}+1..r_{u_{s+1}}]}(z)$
such that $\Delta(\mu') \leq 0$ holds by binary search, which requires $O(\log n)$ time. 
Then, applying Lemma~\ref{lem:type1}, 
we obtain an integer $H$ such that 
$\theta^{\ell_{u_{s+1}}, +, [\ell_{u_{s+1}}+1..r_{u_{s+1}}]}(\mu')
=\theta^{\ell_{u_{s+1}}, +, H}(\mu')$ holds in $O(\log n)$ time. 
Thus, $\mu$ is the minimum value $z$ such that 
\[
\theta^{i, +, [i+1..\ell_{u_{s+1}}]}(z) - 
\theta^{\ell_{u_{s+1}}, +, H}(z) - \tau \cdot L_{i, \ell_{u_{s+1}}} \leq 0, 
\]
and can be obtained in $O(\log^2 n)$ by the same way of the first step. 

Finally, we set ${\cal I}^{+}_{u_{s+1}, 1}$ to $(\gamma, \mu]$ if $\gamma < \mu$, 
to the empty interval $(\mu, \mu]$, otherwise, 
and ${\cal I}^{+}_{u_{s+1}, 2}$ to the empty set $(\mu, \mu]$, 
and ${\cal I}^{+}_{u_{s+1}, 3} = (\mu, W_{j-1}]$. 

\item If $\mu = W_{j-1}$, then we set ${\cal I}^{+}_{u_{s+1}, 1} = {\cal J}_1$, 
and ${\cal I}^{+}_{u_{s+1}, 2}$, ${\cal I}^{+}_{u_{s+1}, 3}$ 
to the empty interval $(W_{j-1}, W_{j-1}]$.
\end{enumerate}

For Cases~{(i)--(iii)}, we need to change ${\cal I}^{+}_{u_t, h}$ 
for $t \in [1..s]$ and $h \in [1..3]$ as follows: 
We change ${\cal I}^{+}_{u_t, h} = (\alpha_{t,h}, \beta_{t,h}]$
to $(\alpha_{t,h}, \mu]$ if $\mu \in [\alpha_{t,h}, \beta_{t,h}]$, 
and to the empty interval $(\mu, \mu]$ if $\mu < \alpha_{t,h}$.

\subsubsection{Computation Time for Phase 2}

As shown above, each inductive step requires $O(\log^2 n)$ time. 
Therefore, recalling $m=O(\log n)$, Phase~{2} requires $O(\log^3 n)$ time.

\subsection{Phase 3}\label{app:phase3}

The task of Phase~{3} is to compute the pseudo-intersection point of $\theta^{i,+}(z)$ and $\theta^{j,-}(z)$ on $[W_{i}, W_{j-1}]$. 
By Lemma~\ref{lem:convex}, $\theta^{i,+}(z)$ and $\theta^{j,-}(z)$ pseudo-intersect on $[W_{i-1},W_j]$. 
Let $z^*$ be the pseudo-intersection point of $\theta^{i,+}(z)$ and $\theta^{j,-}(z)$. 
We then see that $z^*$ is the pseudo-intersection point of $\theta^{i,+,[i+1..j]}(z)$ and $\theta^{j,-,[i..j-1]}(z)$ on $[W_{i-1},W_j]$. 

%{\color{red} 
Let us suppose that in Phase 2, we have obtained 
all intervals ${\cal I}^{+}_{u,h} = (\alpha^{+}_{u,h}, \beta^{+}_{u,h}]$ that satisfy the condition~\eqref{eq:interval_condition_phase2}
for $u \in U$ and $h \in [1..3]$. 
We also have obtained all intervals ${\cal I}^{-}_{u,h} = [\alpha^{-}_{u,h}, \beta^{-}_{u,h})$ 
that satisfy the symmetric condition %(?) 
of~\eqref{eq:interval_condition_phase2}.
%}

First, we 
find the minimum value $\beta^{+} \in \{ \beta^{+}_{u,h} \}$ such that
\begin{eqnarray}\label{eq:alg_gen_phase3.1}
\theta^{i,+,[i+1..j]}(\beta)-\theta^{j,-,[i..j-1]}(\beta) \ge 0.
\end{eqnarray}
To do this, 
we apply a binary search that takes $O(\log^2 n)$ time
by calculating  $\theta^{i,+,[i+1..j]}(\beta^{+}_{u,h})$ and $\theta^{j,-,[i..j-1]}(\beta^{+}_{u,h})$ in $O(\log n)$ time,
for $u \in U$ and $h \in [1..3]$ as follows:

For the calculation of
$\theta^{i,+,[i+1..j]}(\beta^{+}_{u,h})$, 
we consider the three cases, depending on an integer $h$: 
If $h=3$, because we have $\theta^{i,+,[i+1..j]}(\beta^{+}_{u,3})=\theta^{i,+,\ell_{u}}(\beta^{+}_{u,3})$, it can be computed in $O(1)$ time. 
If $h=2$, we have 
\begin{eqnarray*}
\theta^{i,+,[i+1..j]}(\beta^{+}_{u,2}) =
\bar{\theta}^{\ell_{u},+,[\ell_{u}+1..r_{u}]}
(C_{i, \ell_{u}},\beta^{+}_{u,2})+\tau \cdot L_{i,\ell_{u}}, 
\end{eqnarray*}
which can be computed in $O(\log n)$ time using the information of TYPE~{III} stored at $u$ by Lemma~\ref{lem:type3}. 
If $h=3$, we have 
\begin{eqnarray*}
\theta^{i,+,[i+1..j]}(\beta^{+}_{u,1})=\theta^{\ell_{u},+,[\ell_{u}+1..r_{u}]}(\beta^{+}_{u,1})+\tau \cdot L_{i,\ell_{u}},
\end{eqnarray*}
which is computed in $O(\log n)$ time using the information of TYPE~{I} stored at $u \in {\cal T}$ by Lemma~\ref{lem:type1}.
Therefore, we can calculate $\theta^{i,+,[i+1..j]}(\beta^{+}_{u,h})$ in $O(\log n)$ time.

We can also calculate $\theta^{j,-,[i..j-1]}(\beta^{+}_{u,h})$ in $O(\log n)$ time.  
First, we find an interval ${\cal I}^{-}_{u',h'}$ contains $\beta^{+}_{u,h}$ by binary search. 
We can compute $\theta^{j,-,[i..j-1]}(\beta^{+}_{u,h})$
in $O(\log n)$ time by the similar way of the above computation
according to an integer $h'$.

When the value $\beta^{+} = \beta^{+}_{u,h}$, 
it implies that $z^{*} \in {\cal I}^{+}_{u,h}$. 
Next, we find an integer $H^{+}$ such that $\theta^{i,+,[i+1..j]}(z^*)=\theta^{i,+,H^{+}}(z^*)$
in $O(\log^2 n)$ time as follows: 
If $h = 1$, that is, $z^{*} \in {\cal I}^{+}_{u,1}$, then we see $H = \ell_{u}$, directly.
If $z^* \in {\cal I}^{+}_{u, 2} = (\alpha^{+}_{u_2}, \beta^{+}_{u,2}]$, then we have 
\[
\theta^{i, +, [i..j-1]}(z) 
%= \theta^{i, +, [\ell_{u}..r_{u}]}(z) 
%= \bar{\theta}^{i, +, [\ell_{u_t}+1..r_{u_t}]}(C_{i, \ell_{u_t}}, z)
= \bar{\theta}^{\ell_{u}, +, [\ell_{u}+1..r_{u}]}(C_{i, \ell_{u}}, z) + \tau \cdot L_{i, \ell_{u}}.
\]
Since function $\bar{\theta}^{\ell_{u}, +, [\ell_{u}+1..r_{u}]}(C_{i, \ell_{u}}, z)$
consists of line segments with the same slope $1 / C_{i, \ell_{u}}$, 
all breakpoints of these functions on ${\cal I}^{+}_{u,2}$ are 
contained in $\{\alpha^{+}_{u,2}, \beta^{+}_{u,2}\} \cup \{W_h \mid h \in [1..n] \text{ and } W_h \in {\cal I}^{+}_{u,2}\}$.
We find the minimum value $z' \in \{\alpha_{t,2}, \beta_{t,2}\} \cup \{W_h \mid h \in [1..n] \text{ and } W_h \in {\cal I}^{+}_{u_t,2}\}$ such that 
\begin{eqnarray*}
\theta^{i,+,[i+1..j]}(z)-\theta^{j,-,[i..j-1]}(z)
%& &\theta^{i, +, [\ell_{u_t}..r_{u_t}]}(z) - \theta^{i, +, \ell_{u_{s+1}}}(z) \\
= \bar{\theta}^{\ell_{u}, +, [\ell_{u}+1..r_{u}]}(C_{i, \ell_{u}}, z) + \tau \cdot L_{i, \ell_{u}} - \theta^{j, -, [i..j-1]}(z) \leq 0
\end{eqnarray*}
by binary search, which takes $O(\log^2 n)$ time because
the first term $\bar{\theta}^{\ell_{u}, +, [\ell_{u}+1..r_{u}]}(C_{i, \ell_{u}}, z)$ 
can be computed in $O(\log n)$ time for any $z$ by Lemma~\ref{lem:type3}
and the last term $\theta^{j, -, [i..j-1]}(z)$ can be computed in $O(\log n)$ time as the previous step. 
Using the value $z'$, we obtain an integer $H^{+}$ such that 
$\theta^{i,+,[i+1..j]}(z') = \bar{\theta}^{\ell_{u}, +, H^+}(C_{i, \ell_{u}}, z') + \tau \cdot L_{i, \ell_{u}}$ holds 
by Lemma~\ref{lem:type3} in $O(\log n)$ time. 
Note that we also have
$\theta^{i,+,[i+1..j]}(z^{*}) = \bar{\theta}^{\ell_{u}, +, H^+}(C_{i, \ell_{u}}, z^{*}) + \tau \cdot L_{i, \ell_{u}} = \theta^{i,+,H^{+}}(z^{*})$. 
Let us consider that $z^* \in {\cal I}^{+}_{u, 3}$.
For any $z \in {\cal I}^{+}_{u, 3}$, we have
\[
\theta^{i,+,[i+1..j]}(z) = \theta^{\ell_{u}, +, [\ell_{u}+1..r_u]}(z) + \tau \cdot L_{i, \ell_{u}}.
\]
We find adjacent breakpoints $b$ and $b'$ of $\theta^{i,+,[\ell_{u}+1..r_u]}(z)$ on ${\cal I}^{+}_{u,3}$ in $O(\log^2 n)$ time, 
such that 
\begin{eqnarray*}
\theta^{i,+,[i+1..j]}(b)-\theta^{j,-,[i..j-1]}(b) \leq 0 \text{ and }
\theta^{i,+,[i+1..j]}(b')-\theta^{j,-,[i..j-1]}(b') \geq 0
\end{eqnarray*}
hold. This means that $z^* \in [b,b']$ and then 
we can immediately obtain an integer $H^{+}$ 
such that 
$\theta^{i,+,[i+1..j]}(z^{*}) = \theta^{\ell_{u}, +, H^+}(z^{*}) + \tau \cdot L_{i, \ell_{u}} = \theta^{i,+,H^{+}}(z^{*})$, 
by using the information of TYPE~{I}.

In a symmetric manner, we find the interval 
${\cal I}^{-}_{u',h'}$ that contains $z^{*}$
and an integer $H^{-}$ such that $\theta^{j,-,[i..j-1]}(z^*)=\theta^{j,-,H^{-}}(z^*)$ 
in $O(\log^2 n)$ time. 

We then see that $z^*$ is the pseudo-intersection point of $\theta^{i,+,H^{+}}(z)$ and $\theta^{j,-,H^{-}}(z)$ on $[W_{H^{+}-1},W_{H^{-}}]$, which can be computed in $O(1)$ time.
In total, Phase~{3} requires $O(\log^2 n)$ time.

\subsection{Phase 4}\label{app:phase4}

As in Phase 3, let us suppose that in Phase 2, we have obtained 
all intervals ${\cal I}^{+}_{u,h} = (\alpha^{+}_{u,h}, \beta^{+}_{u,h}]$ 
and ${\cal I}^{-}_{u,h} = [\alpha^{-}_{u,h}, \beta^{-}_{u,h})$ 
that satisfy the condition~\eqref{eq:interval_condition_phase2} and its symmetric one 
for $u \in U$ and $h \in [1..3]$. 
In Phase~3, we also have obtained $z^*$, 
and two intervals ${\cal I}^{+}_{u^{+},h^{+}}$ and ${\cal I}^{-}_{u^{-},h^{-}}$ such that 
$z^* \in {\cal I}^{+}_{u^{+},h^{+}} \cap {\cal I}^{-}_{u^{-},h^{-}}$. 

By the definition of~\eqref{eq:at_subpath} we have 
$\Phi^{i,j}(z^*) = %\Phi^{i,+}(z^*) + \Phi^{j,-}(z^*) = 
\int_0^{z^*} \theta^{i,+}(z)dz + \int_{z^*}^{W_n} \theta^{j,-}(z)dz.
$
For each integral, we have
\begin{eqnarray}\label{eq:alg_gen_phase4.1}
\int_0^{z^*} \theta^{i,+}(z)dz 
& = & 
\sum_{\substack{u \in U, h \in [1..3] s.t. \\ \beta^{+}_{u,h} \leq z^*} }
\int_{\alpha^{+}_{u,h}}^{\beta^{+}_{u,h}} \theta^{i,+}(z)dz + \int_{\alpha^+_{u^{+}, h^{+}}}^{z^*}(z)dz
\end{eqnarray}
and 
\begin{eqnarray}\label{eq:alg_gen_phase4.2}
\int_{z^*}^{W_n} \theta^{j,-}(z)dz & = & 
\int_{z^*}^{\beta^{-}_{u^{-}, h^{-}}} \theta^{j,-}(z)dz + 
\sum_{\substack{u \in U, h \in [1..3] s.t. \\ \alpha^{-}_{u,h} \geq z^*} }
\int_{\alpha^{-}_{u,h}}^{\beta^{-}_{u,h}} \theta^{j,-}(z)dz.
\end{eqnarray}

We show that an integral $\int_{\alpha^{+}_{u,h}}^{\beta^{+}_{u,h}} \theta^{i,+}(z)dz$
can be computed in $O(\log n)$ time. 
For the other integrals, we can compute them $O(\log n)$ time by the similar way. 
We can also calculate other integrals in \eqref{eq:alg_gen_phase4.1} and \eqref{eq:alg_gen_phase4.2} in $O(\log n)$ time in the same manner. 
These imply that Phase~{4} requires $O(\log^2 n)$ time, 
because the number of integrals in~\eqref{eq:alg_gen_phase4.1} 
and~\eqref{eq:alg_gen_phase4.2} is at most $2 \cdot |U| = O(\log n)$.

For the computation of 
$\int_{\alpha^{+}_{u,h}}^{\beta^{+}_{u,h}} \theta^{i,+}(z)dz$, 
let us consider the three cases:
If $h=1$, then 
condition~\eqref{eq:interval_condition_phase2} implies that 
we have 
\[
\int_{\alpha^{+}_{u,h}}^{\beta^{+}_{u,h}} \theta^{i,+}(z)dz
= \int_{\alpha^{+}_{u,h}}^{\beta^{+}_{u,h}} \theta^{i,+,\ell_{u}}(z)dz
= \int_{\alpha^{+}_{u,h}}^{\beta^{+}_{u,h}} 
\left( \frac{z-W_{\ell_{u}-1}}{C_{i,\ell_{u}}} + \tau \cdot L_{i,\ell_{u}} \right),
\]
which can be calculated in $O(1)$ time by the elementary calculation.
If $h=2$, then we have
\begin{eqnarray*}\label{eq:alg_gen_phase4.4}
\int_{\alpha^{+}_{u,h}}^{\beta^{+}_{u,h}} \theta^{i,+}(z)dz
&=& 
\int_{\alpha^{+}_{u,h}}^{\beta^{+}_{u,h}} 
\left( \bar{\theta}^{\ell_u,+,[\ell_u+1.. r_u]}(C_{i, \ell_{u}},z) + \tau \cdot L_{i,\ell_u} \right) dz \notag \\
& & \hspace{-2.9cm} = \int_{0}^{\alpha^{+}_{u,h}} \bar{\theta}^{\ell_u,+,[\ell_u+1.. r_u]}(C_{i,\ell_{u}},z)dz - \int_{0}^{\beta^{+}_{u,h}} \bar{\theta}^{\ell_u,+,[\ell_u+1.. r_u]}(C_{i,\ell_{u}},z)dz
+ \left( \beta^{+}_{u,h}-\alpha^{+}_{u,h}\right) \cdot \tau \cdot L_{i,\ell_u},
\end{eqnarray*}
which can be computed in $O(\log n)$ time using the information of TYPE~{IV} stored at $u \in {\cal T}$ by Lemma~\ref{lem:type4}. 
If $h=3$, we have
\begin{eqnarray*}\label{eq:alg_gen_phase4.5}
\int_{\alpha^{+}_{u,h}}^{\beta^{+}_{u,h}} \theta^{i,+}(z)dz
&=& 
\int_{\alpha^{+}_{u,h}}^{\beta^{+}_{u,h}} 
\left( \theta^{\ell_u,+,[\ell_u+1.. r_u]}(z) + \tau \cdot L_{i,\ell_u} \right) dz \notag \\
& & \hspace{-2.9cm} = \int_{0}^{\alpha^{+}_{u,h}} 
\theta^{\ell_u,+,[\ell_u+1.. r_u]}(z)dz - \int_{0}^{\beta^{+}_{u,h}} \theta^{\ell_u,+,[\ell_u+1.. r_u]}(z)dz
+ \left( \beta^{+}_{u,h}-\alpha^{+}_{u,h}\right) \cdot \tau \cdot L_{i,\ell_u},
\end{eqnarray*}
which can be computed in $O(\log n)$ time 
using the information of TYPEs~{I} and~{II} stored at node $u \in {\cal T}$ by Lemma~\ref{lem:type2}. 
Thus, we can compute an integral $\int_{\alpha^{+}_{u,h}}^{\beta^{+}_{u,h}} \theta^{i,+}(z)dz$ in $O(\log n)$ time. 

\subsection{Running Time of the Algorithm}
We give the analysis of the running time of our algorithm. 
Phase~{1} requires $O(\log n)$ time for finding all the maximal subpath nodes in $U$ for $P_{i+1, j-1}$ by Property~\ref{property2}. 
Phase~{2} requires $O(\log^3 n)$ time, 
and Phases~{3} and~{4} require $O(\log^2 n)$ time. 

Therefore, the bottle-neck for the running time of our algorithm
is Phase~{2} that requires $O(\log^3 n)$ time, 
which concludes the proof of Lemma~\ref{lem:general_query}.

\subsection{Modification for the Uniform Capacity}\label{app:uniform_edge_capacity}
When the capacities of ${\cal P}$ are uniform, 
we can improve the running time of Phase~{2} (that is the bottle-neck for our algorithm) from $O(\log^3 n)$ to $O(\log^2 n)$. 
The reason of the improvement is that 
we no need to construct intervals ${\cal I}_{u,2}$
since all $C_{i,j}$ are same as some $c$ for any $i, j$.
Thus, the cases that we need to use Lemma~\ref{lem:type3},
which take $O(\log^2 n)$ time, do not happen. 

%%%%%%%%%%%%%%%%%%%%%%%%%%%%%%
%Conclusion
%%%%%%%%%%%%%%%%%%%%%%%%%%%%%%
\section{Conclusion}\label{sec:conclusion}
We remark here that our algorithms can be extended to  
\hide{
the following two problems in a dynamic flow path network:
(1)
% Our approach can be applied even for 
The $k$-sink problem in which we minimize the squared-sum of evacuation time, 
i.e., 
we adopt $\int_{0}^{z} (\theta^{i, +}(t))^2 dt + \int_{z}^{W_n} (\theta^{j, -}(t))^2 dt$ as a cost function,
instead of $\Phi^{i,j}(z)$.
This problem corresponds to the well-known $k$-{\em means problem}.
% which is well-known in the literature. 
(2)
}the minsum $k$-sink problem in a dynamic flow path network,
in which
each vertex $v_i$ has the cost $\lambda_i$ for locating a sink at $v_i$,
and we minimize 
$\at(\mathcal{P}, {\bf x}, {\bf d})+\sum_i \{\lambda_i \mid {\bf x} \mathrm{ \ consists \ of \ } v_i\}$.
Then, the same reduction works 
with link costs $w''(i,j)=w'(i,j)+\lambda_i$, which still satisfy the concave Monge property. 
This implies that our approach immediately gives algorithms of the same running time.

%%%%%%%%%%%%%%%%%%%%%%%%%%%%%%%%%%%%%%%%
% References
%%%%%%%%%%%%%%%%%%%%%%%%%%%%%%%%%%%%%%%%

\bibliographystyle{splncs04}
\bibliography{ATKsink_COCOA_arXiv}

\end{document}